\definecolor{custom_color}{RGB}{246,235,235}
\DeclarePairedDelimiterX{\infdivx}[2]{(}{)}{%
  #1\;\delimsize\|\;#2%
}
\newtheorem{theorem}{Theorem}[section]
\newtheorem{definition}[theorem]{Definition}
\newtheorem{proposition}[theorem]{Proposition}
\newtheorem{conjecture}[theorem]{Conjecture}
\newtheorem{lemma}[theorem]{Lemma}
\newtheorem{observation}[theorem]{Observation}
\newtheorem{claim}[theorem]{Claim}
\newenvironment{sproof}{%
  \proof}{\endproof}
\DeclareMathOperator*{\argmax}{arg\,max}
\DeclareMathOperator{\distortion}{distortion}
\DeclareMathOperator{\define}{\overset{def}{=}}
\DeclareMathOperator{\poly}{poly}
\DeclareMathOperator{\plu}{plu}
\DeclareMathOperator{\topp}{top}
\DeclareMathOperator{\secc}{sec}
\DeclareMathOperator{\veto}{veto}
\DeclareMathOperator{\tv}{d_{TV}}
\DeclareMathOperator{\SC}{\textsc{SC}}
\DeclareMathOperator{\DR}{\textsc{DominationRoot}}
\DeclareMathOperator{\minimax}{\textsc{Minimax-LP}}
\DeclareMathOperator{\minimaxalpha}{\textsc{Minimax}^{\alpha}\textsc{-LP}}
\DeclareMathOperator{\pluralitymatching}{\textsc{PluralityMatching}}
\DeclareMathOperator{\metricalphaLP}{\textsc{Metric}^{\alpha}{-LP}}
\DeclareMathOperator{\metricLP}{\textsc{Metric-LP}}
\author{
  Ioannis Anagnostides\\[-2mm]
  National Technical University of Athens\\[-2mm]
  \texttt{ioannis.anagnostides@gmail.com}
  \and
  Dimitris Fotakis\\[-2mm]
  National Technical University of Athens\\[-2mm]
  \texttt{fotakis@cs.ntua.gr}
  \and 
  Panagiotis Patsilinakos\\[-2mm]
  National Technical University of Athens\\[-2mm]
  \texttt{patsilinak@corelab.ntua.gr}
}
\date{\today}                     
\title{Metric-Distortion Bounds under Limited Information}
\begin{document}

\maketitle
\pagenumbering{gobble}

\begin{abstract}
    In this work we study the metric distortion problem in voting theory under a limited amount of ordinal information. Our primary contribution is threefold. First, we consider mechanisms which perform a sequence of pairwise comparisons between candidates. We show that a widely-popular deterministic mechanism employed in most knockout phases yields distortion $\mathcal{O}(\log m)$ while eliciting only $m-1$ out of $\Theta(m^2)$ possible pairwise comparisons, where $m$ represents the number of candidates. Our analysis for this mechanism leverages a powerful technical lemma recently developed by Kempe \cite{DBLP:conf/aaai/000120a}. We also provide a matching lower bound on its distortion. In contrast, we prove that any mechanism which performs fewer than $m-1$ pairwise comparisons is destined to have unbounded distortion. Moreover, we study the power of deterministic mechanisms under incomplete rankings. Most notably, when every agent provides her $k$-top preferences we show an upper bound of $6 m/k + 1$ on the distortion, for any $k \in \{1, 2, \dots, m\}$. Thus, we substantially improve over the previous bound of $12 m/k$ recently established by Kempe \cite{DBLP:conf/aaai/000120a,DBLP:conf/aaai/000120b}, and we come closer to matching the best-known lower bound. Finally, we are concerned with the sample complexity required to ensure near-optimal distortion with high probability. Our main contribution is to show that a random sample of $\Theta(m/\epsilon^2)$ voters suffices to guarantee distortion $3 + \epsilon$ with high probability, for any sufficiently small $\epsilon > 0$. This result is based on analyzing the sensitivity of the deterministic mechanism introduced by Gkatzelis, Halpern, and Shah \cite{DBLP:conf/focs/Gkatzelis0020}. Importantly, all of our sample-complexity bounds are distribution-independent. 
    
    From an experimental standpoint we present several empirical findings on real-life voting applications, comparing the scoring systems employed in practice with a mechanism explicitly minimizing (metric) distortion. Interestingly, for our case studies we find that the winner in the actual competition is typically the candidate who minimizes the distortion. 
\end{abstract}

\clearpage
\pagenumbering{arabic}

\section{Introduction}

Aggregating the preferences of individuals into a collective decision lies at the heart of social choice, and has recently found numerous applications in areas such as information retrieval, recommender systems, and machine learning \cite{DBLP:journals/jmlr/VolkovsZ14,DBLP:conf/cikm/VolkovsLZ12,Beliakov2011,10.5555/1941934,10.1007/11510888_21}. According to the classic theory of Von Neumann and Morgenstern \cite{vonNeumann1944-VONTOG-4} individual preferences are captured through a \emph{utility function}, which assigns numerical (or \emph{cardinal}) values to each alternative. Yet, in voting theory, as well as in most practical applications, mechanisms typically elicit only \emph{ordinal} information from the voters, indicating an order of preferences over the candidates. Although this might seem at odds with a utilitarian framework, it has been recognized that it might be hard for a voter to specify a precise numerical value for an alternative, and providing only ordinal information substantially reduces the cognitive burden. This begs the question: What is the loss in efficiency of a mechanism extracting only ordinal information with respect to the \emph{utilitarian social welfare}, i.e. the sum of individual utilities over a chosen candidate? The framework of \emph{distortion} introduced by Procaccia and Rosenschein \cite{DBLP:conf/cia/ProcacciaR06} measures exactly this loss from an approximation-algorithms standpoint, and has received considerable attention in recent years. 

As it turns out, the approximation-guarantees we can hope for crucially depend on the assumptions we make on the utility functions. For example, in the absence of any structure Procaccia and Rosenschein observed that \emph{no} ordinal mechanism can obtain bounded distortion \cite{DBLP:conf/cia/ProcacciaR06}. In this work we focus on the \emph{metric distortion} framework, introduced by Anshelevich et al. \cite{DBLP:conf/aaai/AnshelevichBP15}, wherein voters and candidates are thought of as points in some arbitrary metric space; this is akin to some models in spatial voting theory \cite{10.2307/30025956}. In this context, the voters' preferences are measured by means of their ``proximity'' from each candidate, and the goal is to output a candidate who (approximately) minimizes the \emph{social cost}, i.e. the cumulative distances from the voters. A rather simplistic view of this framework manifests itself when agents and candidates are embedded into a one-dimensional line, and their locations indicate whether they are ``left'' or ``right'' on the political spectrum. However, the metric distortion framework has a far greater reach since no assumptions whatsoever are made for the underlying metric space. Indeed, the dimensionality of the space is potentially unbounded, while we are not even restricted in Euclidean spaces.

Importantly, this paradigm offers a compelling way to quantitatively compare different voting rules commonly employed in practice \cite{DBLP:conf/aaai/SkowronE17,DBLP:conf/aaai/000120a,10.1145/3033274.3085138,DBLP:conf/aaai/AnshelevichBP15}, while it also serves as a benchmark for designing new mechanisms in search of better distortion bounds \cite{DBLP:conf/focs/Gkatzelis0020,DBLP:conf/ec/MunagalaW19}. A common assumption made in this line of work is that the algorithm has access to the entire total rankings of the voters. However, there are many practical scenarios in which it might be desirable to truncate the ordinal information elicited by the mechanism. For example, requesting only the top preferences could further relieve the cognitive burden since it might be hard for a voter to compare alternatives which lie on the bottom of her list (for additional motivation for considering incomplete or partial orderings see \cite{DBLP:conf/aistats/FotakisKS21,CHEN2013521,BENFERHAT200425}, and references therein), while any truncation in the elicited information would also translate to more efficient communication. These reasons have driven several authors to study the decay of distortion under missing information \cite{DBLP:conf/aaai/000120b,DBLP:conf/ijcai/AnshelevichP16,DBLP:conf/aaai/GrossAX17,DBLP:conf/aaai/FainGMP19}, potentially allowing some randomization (see our related work subsection). In this work we follow this line of research, offering several new insights and improved bounds over prior results.

\subsection{Contributions \& Techniques}

First, we study voting rules which perform a sequence of pairwise comparisons between two candidates, with the result of each comparison being determined by the majority rule over the entire population of voters. This class includes many common mechanisms such as Copeland's rule \cite{10.2307/25054952} or the minimax scheme of Simpson and Kramer \cite{10.1257/jep.9.1.3}, and has received considerable attention in the literature of social choice; cf., see \cite{DBLP:conf/ijcai/LangPRVW07}, and references therein. Within the framework of (metric) distortion the following fundamental question arises:

\begin{quote}
    \textit{How many pairwise comparisons between two candidates are needed to guarantee non-trivial bounds on the distortion?}
\end{quote}

For example, Copeland's rule (and most of the common voting rules within this class) elicits all possible pairwise comparisons, i.e. $\binom{m}{2} = \Theta(m^2)$, and guarantees distortion at most $5$ \cite{DBLP:conf/aaai/AnshelevichBP15}. Thus, it is natural to ask whether we can substantially truncate the number of elicited pairwise comparisons without sacrificing too much the efficiency of the mechanism. We stress that we allow the queries to be dynamically adapted during the execution of the algorithm. In this context, we provide the following strong positive result:

\begin{theorem}
    There exists a deterministic mechanism which elicits only $m-1$ pairwise comparisons and guarantees distortion $\mathcal{O}(\log m)$.
\end{theorem}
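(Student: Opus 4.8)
The plan is to instantiate the mechanism as the single-elimination (knockout) tournament run on a fixed balanced bracket, and to bound its distortion by exhibiting a short chain of pairwise-majority wins connecting an optimal candidate to the tournament winner.

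\textbf{The mechanism.} Fix an arbitrary placement of the $m$ candidates at the leaves of a balanced binary tree of depth $d := \lceil \log_2 m \rceil$ (introducing byes if $m$ is not a power of two). Process the tree bottom-up: whenever the champions of both children of an internal node have been determined, query the pairwise-majority comparison between them and declare the winner (ties broken arbitrarily) to be that node's champion; return $w$, the champion of the root. Every query eliminates exactly one candidate, and $m-1$ candidates must be eliminated, so exactly $m-1$ comparisons are performed; note the sequence of queries is adaptively determined (we only learn which pair to compare after the previous round resolves), which the model permits.

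\textbf{From distortion to a short beat-chain.} Let $c^*$ be a candidate of minimum social cost (unknown to the mechanism). Follow the unique leaf-to-root path $\ell_0, \ell_1, \dots, \ell_d$ with $\ell_0$ the leaf of $c^*$. The champion of each $\ell_j$ is the champion of the subtree it roots; consecutive champions are either identical (a bye, or $c^*$'s side winning again) or the champion of $\ell_{j+1}$ pairwise-defeated that of $\ell_j$; the champion of $\ell_d$ is $w$. Deleting repetitions yields candidates $c^* = a_0, a_1, \dots, a_t = w$ with $t \le d = \lceil \log_2 m \rceil$ such that for every $i$, at least half of the voters weakly prefer $a_i$ to $a_{i-1}$.

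\textbf{Invoking Kempe's lemma.} The elementary observation that a single pairwise-majority win inflates social cost by at most a factor of $3$ only gives $\SC(w) \le 3^{t}\,\SC(c^*) = m^{\log_2 3}\,\SC(c^*)$, which is useless. Instead I would appeal to the technical lemma of Kempe~\cite{DBLP:conf/aaai/000120a}, which controls the \emph{cumulative} blow-up along an entire chain of pairwise-majority wins and gives a factor that is \emph{linear} in the chain length, i.e. $\SC(a_t) = O(t)\cdot\SC(a_0)$. Substituting $t \le \lceil \log_2 m \rceil$ yields $\SC(w) = O(\log m)\cdot\SC(c^*)$, that is, distortion $O(\log m)$.

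\textbf{Main obstacle.} The real content is exactly the step that turns a chain of $t$ pairwise wins into an $O(t)$ — rather than $2^{\Theta(t)}$ — blow-up of social cost: naive triangle-inequality chaining fails, and it is precisely Kempe's lemma that makes the logarithmic depth of the bracket translate into a logarithmic distortion bound. The remaining pieces — that the knockout bracket uses $m-1$ adaptive comparisons, and that the path from $c^*$'s leaf to the root has length at most $\log_2 m$ — are structural and routine; the one thing to double-check is that the hypotheses of Kempe's lemma are met by a ``weak-majority'' beat-chain (matching the tie-breaking used by the mechanism).
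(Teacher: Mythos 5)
Your proposal is correct and follows essentially the same route as the paper: the mechanism is the knockout tournament (the paper's $\DR$), the chain of at most $\lceil \log_2 m \rceil$ successive pairwise-majority wins from the optimal candidate up to the winner is exactly the paper's \Cref{claim:logm_paths}, and the linear-in-length blow-up is supplied by Kempe's lemma (\Cref{lemma:propagation_error}), yielding the same $2\lceil \log m\rceil + 1$ bound. Your concern about weak-majority tie-breaking is already accommodated by the lemma's ``at least half'' hypothesis, so the argument goes through as you describe.
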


Our mechanism is particularly simple and natural: In every round we arbitrarily pair the remaining candidates and we only extract the corresponding comparisons. Next, we eliminate all the candidates who lost and we continue recursively until a single candidate emerges victorious. Interestingly, this mechanism is widely employed in practical applications, for example in the knockout phases of most competitions, with the difference that typically some ``prior'' is used in order to construct the pairings. The main technical ingredient of the analysis is a powerful lemma developed by Kempe via an LP duality argument \cite{DBLP:conf/aaai/000120a}. Specifically, Kempe characterized the social cost ratio between two candidates when there exists a sequence of intermediate alternatives such that every candidate in the chain pairwise-defeats the next one. We also supplement our analysis for this mechanism with a matching lower bound on a carefully constructed instance. Moreover, we show that any mechanism which performs (strictly) fewer than $m-1$ pairwise comparisons has \emph{unbounded} distortion. This limitation applies even if we allow randomization either during the elicitation or the winner-determination phase. Indeed, there are instances for which only a single alternative can yield bounded distortion, but the mechanism simply does not have enough information to identify the ``right'' candidate.

Next, we study deterministic mechanisms which only receive an incomplete order of preferences from every voter, instead of the entire rankings as it is usually assumed. This setting has already received attention in the literature, most notably by Kempe \cite{DBLP:conf/aaai/000120b}, and has numerous applications in real-life voting systems. Arguably the most important such consideration arises when every voter provides her $k$-top preferences, for some parameter $k \in [m]$. Kempe showed \cite{DBLP:conf/aaai/000120b} that there exists a deterministic mechanism which elicits only the $k$-top preferences and whose distortion is upper-bounded by $79m/k$; using a powerful tool developed in \cite{DBLP:conf/aaai/000120a} this bound can be improved all the way down to $12m/k$. However, this still leaves a substantial gap with respect to the best-known lower bound, which is $2m/k$ if we ignore some additive constant factors. Thus, Kempe \cite{DBLP:conf/aaai/000120b} left as an open question whether the aforementioned upper bound can be improved. In our work we make substantial progress towards bridging this gap, proving the following:

\begin{theorem}
    There exists a deterministic mechanism which only elicits the $k$-top preferences and yields distortion at most $6m/k + 1$.
\end{theorem}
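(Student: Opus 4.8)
The plan is to route the entire argument through Kempe's chain lemma~\cite{DBLP:conf/aaai/000120a}. Recall that this lemma turns a sequence of pairwise majority relations into a bound on the social-cost ratio: if $c = c_0, c_1, \dots, c_\ell = w$ is a chain of candidates in which each $c_{i+1}$ is ranked above $c_i$ by a (weak) majority of the $n$ voters, then $\SC(w) \le (2\ell+1)\,\SC(c)$, a bound \emph{linear} in the chain length. Consequently, it suffices to design a deterministic rule that always returns a candidate $w$ with the following two properties: (i) every candidate $c$ admits a majority-defeat chain to $w$ of length $\ell \le 3m/k$; and (ii) each defeat along such a chain is \emph{certifiable} from the $k$-top ballots alone, so that the rule can actually identify $w$. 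Taking $c$ to be a social-cost minimizer then yields distortion at most $6m/k+1$, exactly as the knockout analysis above turns chains of length $\lceil\log m\rceil$ into distortion $\mathcal{O}(\log m)$.

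For property (i) the key observation is a counting one: the $n$ reported top-$k$ lists contain only $nk$ ``slots'' distributed among $m$ candidates, so some candidate is listed by at least a $k/m$ fraction of any population of still-relevant voters. I would therefore analyze a rule of the following flavour: maintain a set of surviving candidates; in each round pick a ``pivot'' candidate that is listed in the top $k$ of the largest number of not-yet-accounted-for voters, record it, and eliminate every surviving candidate whose defeat by the pivot (or by an earlier pivot) can be certified; then mark the voters accounted for by the pivot and recurse until one candidate $w$ remains. A standard greedy set-cover estimate shows that after $\mathcal{O}(m/k)$ rounds the recorded pivots jointly account for a constant fraction of the voters; reading the elimination history backwards turns this into a chain of length $\mathcal{O}(m/k)$ from any candidate to $w$, and the constant has to be tracked carefully (balancing the covering threshold against the additive $+1$ slack) to bring it down to $3m/k$.

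For property (ii) the point is that a voter who lists candidate $a$ but not candidate $b$ among her top $k$ necessarily prefers $a$ to $b$; hence, whenever a pivot $a$ accounts for sufficiently many voters relative to $b$, strictly more than half of \emph{all} voters in fact rank $a$ above $b$, so the certified relation is a genuine weak-majority defeat and Kempe's lemma applies verbatim. The principal obstacle is the regime in which \emph{no} candidate is listed by a clear majority of the relevant voters, so that certain pairwise majorities simply cannot be read off from $k$-top data; here one cannot lean on a Condorcet-style pivot, and I would instead fall back on a plurality-type guarantee, bounding $\SC(w)$ directly in terms of the number of voters placing $w$ among their favorites, and interleave this estimate with the chaining argument. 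Making these two sources of slack combine to the clean constant $6$ — thereby refining Kempe's $12m/k$ bound, which uses the same lemma more loosely — together with fixing a deterministic tie-breaking so that the backward chain is well defined, is the delicate heart of the proof.
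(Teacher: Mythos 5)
There is a genuine gap, and it sits exactly where you flag the ``principal obstacle.'' Your plan feeds \Cref{lemma:propagation_error} (the weak-majority chain lemma) with defeats certified from $k$-top ballots, but in the regime the theorem is actually about ($k \ll m$) no pairwise majority is ever certifiable: the only certificates available are ``voter lists $a$ but not $b$, hence prefers $a$ to $b$,'' and since the $nk$ slots are spread over $m$ candidates, even the most-listed candidate appears on only about a $k/m$ fraction of ballots. So your property (ii) --- that a sufficiently popular pivot is certifiably ranked above $b$ by \emph{more than half of all voters} --- fails for every pair once $k/m < 1/2$, and the chain lemma in the form you invoke it can never fire. The fallback you sketch (``a plurality-type guarantee \ldots interleaved with the chaining argument'') is not a repair of a corner case; it is the entire proof, and it is left unspecified. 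Nor can you rescue the plan by running the chain with sub-majority margins: for margins $\alpha < 1/2$ the cost ratio along a chain degrades much faster than linearly in its length, so a chain of length $\Theta(m/k)$ with margin $\Theta(k/m)$ does not yield anything close to $6m/k+1$.

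The paper's proof goes the opposite way: instead of long chains with large margins, it uses a \emph{path of length two with a small margin}. It invokes the margin version of Kempe's lemma (\Cref{lemma:approx-transitive}): if $|wy| \geq \alpha n$ and $|yx| \geq \alpha n$ then $\SC(w)/\SC(x) \leq 2/\alpha + 1$. One builds the digraph $\widehat{G}$ with an edge $(a,b)$ whenever at least a $k/(3m)$ fraction of voters certifiably prefer $a$ to $b$, notes by pigeonhole that some candidate $x$ is listed by at least a $k/m$ fraction of voters (so $x$ beats, at margin $k/(3m)$, every candidate involved in an incomparable pair), and then runs a king-vertex case analysis to exhibit a single candidate reaching every other in at most two hops of $\widehat{G}$. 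Plugging $\alpha = k/(3m)$ into the length-two bound gives $2 \cdot 3m/k + 1 = 6m/k + 1$; the improvement over the earlier $12m/k$ comes precisely from shortening the path from three hops to two, which is the part of the argument your proposal does not contain.
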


We should stress that the constant factors are of particular importance in the framework of metric distortion; indeed, closing the gap even for the special case of $k = m$ has received intense scrutiny in recent years \cite{DBLP:conf/aaai/AnshelevichBP15,DBLP:conf/ec/MunagalaW19,DBLP:conf/aaai/000120a,DBLP:conf/focs/Gkatzelis0020}. From a technical standpoint the main technique for proving such upper bounds consists of identifying a candidate for which there exists a path to any other node such that every candidate in the path pairwise-defeats the next one by a sufficiently large margin (which depends on $k$). Importantly, the derived upper bound crucially depends on the length of the path. Our main technical contribution is to show that there always exists a path of length $2$ with the aforedescribed property, while the previous best result by Kempe established the claim only for paths of length $3$.

Although our approach can potentially bring further improvements, closing the gap inevitably requires different techniques. In particular, a promising direction appears to stem from extending some of the claims established by Gkatzelis et al. \cite{DBLP:conf/focs/Gkatzelis0020}. Indeed, we observe that a natural generalization of the main technical ingredient in \cite{DBLP:conf/focs/Gkatzelis0020} would lower the upper bound to $4 m/k - 1$ (\Cref{proposition:conditional}), which appears to be optimal when $k$ is close to $m$. More precisely, the authors in \cite{DBLP:conf/focs/Gkatzelis0020} proved that a certain graph always has a perfect matching when the entire rankings are available; we conjecture that under $k$-top preferences there always exists a perfect matching for a subset of a $k/m$ fraction of the voters (see \Cref{conjecture:extension} for a more precise statement). 

We also provide some other interesting bounds for deterministic mechanisms under missing information. Most notably, if the voting rule performs well on an arbitrary (potentially adversarially selected) subset of the voters can we quantify its distortion over the entire population? We answer this question with a sharp upper bound in \Cref{theorem:missing_voters}. In fact, we use this result as a tool for some of our other proofs, but nonetheless we consider it to be of independent interest. It should be noted that even in the realm of partial or incomplete rankings there exists an \emph{instance-optimal} mechanism via linear programming; this was first observed by Goel et al. \cite{10.1145/3033274.3085138} when the total orders are available, but it directly extends in more general settings. Interestingly, we show that the recently introduced mechanism of Gkatzelis et al. \cite{DBLP:conf/focs/Gkatzelis0020} which always obtains distortion at most $3$ can be substantially outperformed by the LP mechanism. Namely, for some instances the mechanism of Gkatzelis et al. \cite{DBLP:conf/focs/Gkatzelis0020} yields distortion almost $3$ while the instance-optimal mechanism yields distortion close to $1$. 

Finally, we consider mechanisms which receive information from only a ``small'' \emph{random sample} of voters; that is, we are concerned with the \emph{sample complexity} required to ensure efficiency, which boils down to the following fundamental question:

\begin{quote}
    \textit{How large should the size of the sample be in order to guarantee near-optimal distortion with high probability?}
\end{quote}

More precisely, we are interested in deriving sample-complexity bounds which are \emph{independent} on the number of voters $n$. This endeavor is particularly motivated given that in most applications $n \gg m$. Naturally, sampling approximations are particularly standard in the literature of social choice. Indeed, in many scenarios one wishes to predict the outcome of an election based on small sample (e.g. in polls or exit polls), while in many other applications it is considered even infeasible to elicit the entire input (e.g. in online surveys). In this context, we will be content with obtaining near-optimal distortion \emph{with high probability} (e.g. $99\%$). This immediately deviates from the line of research studying randomized mechanisms (cf. see \cite{DBLP:conf/ijcai/AnshelevichP16}) wherein it suffices to obtain a guarantee in expectation. We point out that it has been well-understood that a guarantee only in expectation might be insufficient in many cases; for example, Fain et al. \cite{DBLP:conf/aaai/FainGMP19} considered as the objective the \emph{squared distortion} as a proxy in order to limit as much as possible the variance in the distortion. In fact, the authors in \cite{DBLP:conf/aaai/FainGMP19} are also concerned with sample complexity issues, but from a very different standpoint.

We stress that we only allow randomization during the preference elicitation phase; for a given random sample, which corresponds to the \emph{entire} rankings of the voters, the mechanisms we consider act deterministically. Specifically, we analyze two main voting rules along this vein.

\begin{theorem}[Approximate Copeland]
    For any sufficiently small $\epsilon > 0$ there exists a mechanism which takes a sample of size $\Theta(\log(m)/\epsilon^2)$ voters and yields distortion at most $5 + \epsilon$ with probability $0.99$.
\end{theorem}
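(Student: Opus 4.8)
The plan is to run Copeland's rule on a uniformly random multiset $S$ of $s = \Theta(\log(m)/\epsilon^2)$ voters (sampled with replacement, with $s$ odd so that no two candidates are tied within $S$), and to output the Copeland winner $w$ of this subsample; I claim $w$ has distortion at most $5+\epsilon$ over the whole electorate (note that $\SC(\cdot)$ refers throughout to the social cost measured against \emph{all} voters, the sample being used only to select $w$). Write $p_{ab}$ for the true fraction of voters preferring $a$ to $b$ and $\hat p_{ab}$ for the corresponding empirical fraction within $S$. Set $t = \epsilon/C$ for a suitable absolute constant $C$ (this is where ``$\epsilon$ sufficiently small'' enters, to keep $t$ below an absolute threshold). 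By Hoeffding's inequality, $\Pr[\,|\hat p_{ab} - p_{ab}| > t\,] \le 2 e^{-2st^2}$ for each fixed pair; a union bound over the $\binom{m}{2}$ pairs together with $s = \Theta(\log(m)/t^2) = \Theta(\log(m)/\epsilon^2)$ makes the event $\mathcal{E}$, that $|\hat p_{ab} - p_{ab}| \le t$ for \emph{every} pair $a,b$, hold with probability at least $0.99$ (any constant confidence is reached by scaling the hidden constants). Condition on $\mathcal{E}$.

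First I would handle the combinatorial core inside the sample tournament $T_S$, where $a \to b$ records $\hat p_{ab} > 1/2$ (a genuine tournament since $s$ is odd). Let $o$ be a candidate minimizing the true social cost. Since $w$ maximizes the out-degree in $T_S$, the classical Copeland argument — run inside $T_S$ — gives that either $w \to o$, or there is an intermediate $d$ with $w \to d \to o$: otherwise every out-neighbor of $w$ would also be an out-neighbor of $o$, while $w$ is an out-neighbor of $o$ (as $o \to w$) but not of itself, so $w$'s out-neighborhood would be a \emph{proper} subset of $o$'s, contradicting maximality. On $\mathcal{E}$, each edge $a \to b$ of $T_S$ certifies $p_{ab} \ge 1/2 - t$; hence in the true electorate we obtain either a single ``$(1/2-t)$-edge'' $w \to o$, or a length-two ``$(1/2-t)$-chain'' $w \to d \to o$.

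The remaining, and crucial, step is a robust version of the lemma behind Copeland's distortion-$5$ bound. Recall that a strict majority of $a$ over $b$ yields $\SC(a) \le 3\,\SC(b)$, and that the chain lemma of Kempe \cite{DBLP:conf/aaai/000120a} re-derives and generalizes the Copeland bound of $5$ \cite{DBLP:conf/aaai/AnshelevichBP15} by upgrading a length-two chain of majorities $w \to d \to o$ all the way to $\SC(w) \le 5\,\SC(o)$ — crucially, naively composing the single-edge bound would only give $9$, so the tighter chain argument is indispensable. I would prove the perturbed statements: a $(1/2-t)$-majority of $a$ over $b$ gives $\SC(a) \le \frac{3 + 2t}{1-2t}\,\SC(b) = (3+O(t))\,\SC(b)$ (immediate by redoing the triangle-inequality bound with $|B|/|A| \le \frac{1/2+t}{1/2-t}$), and a $(1/2-t)$-chain $w \to d \to o$ gives $\SC(w) \le (5 + O(t))\,\SC(o)$ — the latter by re-running the LP-duality proof of Kempe's lemma with each per-edge constraint ``$\ge 1/2$'' relaxed to ``$\ge 1/2-t$'' and verifying that the dual optimum degrades continuously in $t$ (for two steps, by $O(t)$). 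Combining with the previous paragraph, on $\mathcal{E}$ we obtain $\SC(w) \le (5 + O(t))\,\SC(o)$, and choosing the constants so that $t = \Theta(\epsilon)$ and $O(t) \le \epsilon$ yields distortion at most $5 + \epsilon$, which therefore holds with probability at least $0.99$.

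I expect the main obstacle to be exactly the robust chain lemma: quantifying the sensitivity of the Copeland distortion bound to near-ties. The combinatorial part is essentially free once $\mathcal{E}$ holds — a Copeland winner always has a certifying path of length $\le 2$ to the optimum, whether we argue in the true or the sampled tournament. The delicate point is that the mechanism's own pairwise outputs flip under subsampling precisely at near-tied pairs, and one must show such flips cost only $O(t)$ in the distortion; this is intuitively clear (a near-tie between $a$ and $b$ forces $\SC(a)$ and $\SC(b)$ to agree up to a $1 + O(t)$ factor), but obtaining the clean ``$5 + O(t)$'' rather than an exponentially degraded ``$9 + O(t)$'' through the two-step argument is precisely where the careful constant-tracking, and the use of Kempe's lemma rather than a black-box composition, is needed.
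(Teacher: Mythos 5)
Your proposal is correct and follows essentially the same route as the paper's proof: sample $\Theta(\log m/\epsilon^2)$ voters, apply Hoeffding plus a union bound over all $\binom{m}{2}$ pairs so that every empirical pairwise margin is within $t=\Theta(\epsilon)$ of the truth, take a king (equivalently, the Copeland winner) of the sampled tournament, and convert its certifying path of length at most $2$ into true $(1/2-t)$-majorities. The only remark worth making is that the ``robust chain lemma'' you single out as the main obstacle requires no re-running of the LP-duality argument: Kempe's lemma is already stated for an arbitrary margin $\alpha$ (\Cref{lemma:approx-transitive}), and substituting $\alpha = 1/2 - t$ gives $2/\alpha + 1 = 4/(1-2t)+1 = 5+O(t)$ off the shelf, which is exactly how the paper concludes.
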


The techniques required for the proof of this theorem are fairly standard. More importantly, we analyze the sample complexity of $\pluralitymatching$, the mechanism of Gkatzelis et al. \cite{DBLP:conf/focs/Gkatzelis0020} which recovers the optimal distortion bound of $3$ (among deterministic mechanisms). In this context, we establish the following result:

\begin{theorem}[Approximate $\pluralitymatching$]
    For any sufficiently small $\epsilon > 0$ there exists a mechanism which takes a sample of size $\Theta(m/\epsilon^2)$ voters and yields distortion at most $3 + \epsilon$ with probability $0.99$.
\end{theorem}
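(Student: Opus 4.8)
The plan is to show that $\pluralitymatching$ is \emph{insensitive to subsampling at the ordinal level}: the combinatorial certificate underlying its distortion guarantee survives, up to additive slack $O(\epsilon)$, when one passes from the whole electorate $V$ to a uniformly random sample $S$ of the stated size. Recall the mechanism of Gkatzelis et al.~\cite{DBLP:conf/focs/Gkatzelis0020}: given a voter set $W$ and a candidate $c$, one forms the bipartite graph $G_c^W$ with both sides equal to $W$ and an edge between $v$ (left) and $u$ (right) whenever $u$ weakly prefers $c$ to $\topp(v)$; the mechanism returns any $c$ for which $G_c^W$ admits a perfect matching, and any such $c$ satisfies $\SC_W(c)\le 3\,\SC_W(o)$ for every candidate $o$. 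Since all left-vertices sharing a top choice have identical neighbourhoods, Hall's condition for $G_c^W$ is equivalent to the inequalities, one for each subset $A$ of candidates,
\[
 \sum_{a\in A}\plu_W(a)\ +\ \bigl|\{\,v\in W:\ a\succ_v c\text{ for all }a\in A\,\}\bigr|\ \le\ |W|,
\]
which, after normalising both counts by $|W|$, I will abbreviate as $\alpha^W_A+\beta^W_A\le 1$ for all $A$.

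Now run the mechanism on a uniform sample $S$ with $|S|=\Theta(m/\epsilon^2)$ and let $c$ be its output, so that $\alpha^S_A+\beta^S_A\le 1$ for every $A$. The key observation is that, for a fixed pair $(c,A)$, both $\alpha^S_A$ and $\beta^S_A$ are averages over the sampled voters of $\{0,1\}$-valued functions (namely $v\mapsto[\topp(v)\in A]$ and $v\mapsto[\,a\succ_v c\text{ for all }a\in A\,]$) with expectations $\alpha^V_A$ and $\beta^V_A$. A Chernoff bound therefore gives $|\alpha^S_A-\alpha^V_A|,\ |\beta^S_A-\beta^V_A|\le\epsilon$ except with probability $e^{-\Omega(\epsilon^2|S|)}$. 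Taking a union bound over the $m$ candidates and the $2^m$ subsets $A$ is affordable exactly because $|S|=\Theta(m/\epsilon^2)$: the total failure probability is at most $m\,2^m\,e^{-\Omega(\epsilon^2|S|)}=m\,2^m\,e^{-\Omega(m)}<0.01$ once the hidden constant in the sample size is large enough. (This is precisely the step that forces the linear-in-$m$ dependence.) Hence, with probability at least $0.99$, the returned candidate $c$ satisfies $\alpha^V_A+\beta^V_A\le 1+2\epsilon$ for every subset $A$ of candidates.

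It remains to turn this ``$2\epsilon$-approximate Hall condition'' on $V$ into a distortion bound. The inequality $\alpha^V_A+\beta^V_A\le 1+2\epsilon$ says exactly that $|N_{G_c^V}(B)|\ge|B|-2\epsilon|V|$ for every set $B$ of left-vertices, so the defect form of Hall's theorem produces a matching in $G_c^V$ leaving unsaturated only $O(\epsilon|V|)$ voters; equivalently, after discarding an $O(\epsilon)$-fraction of the electorate the remaining voters $V'$ admit a perfect matching in $G_c^{V'}$, so $\distortion_{V'}(c)\le 3$. Feeding this into \Cref{theorem:missing_voters}, which bounds the distortion on the full population in terms of the distortion on a subset and the relative size of that subset, yields $\distortion_V(c)\le 3+O(\epsilon)$; rescaling $\epsilon$ by a small absolute constant gives the claimed $3+\epsilon$. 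The resulting mechanism uses randomness only to draw $S$ and is deterministic thereafter, with sample size $\Theta(m/\epsilon^2)$.

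The step I expect to be the main obstacle is this last conversion --- the genuine ``sensitivity'' analysis of the Gkatzelis--Halpern--Shah mechanism. One has to show that an \emph{almost}-perfect matching still certifies distortion close to $3$, and the delicate point is that the $O(\epsilon|V|)$ unmatched voters must inflate the social-cost ratio only \emph{additively} by $O(\epsilon)$: a crude charging of those voters would incur an error proportional to their (unbounded) distances and would only give $3+O(1)$. Isolating the unmatched voters into a clean subpopulation on which $c$ is a legitimate $\pluralitymatching$ winner, so that \Cref{theorem:missing_voters} applies with loss $O(\epsilon)$, is the technical heart of the argument; one must also double-check that no sharper family of concentration events avoids the $2^m$-fold union bound, i.e.\ that the linear dependence on $m$ in the sample size is genuinely necessary here.
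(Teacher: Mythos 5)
Your proposal is correct, and it reaches the theorem by a genuinely different route in the one place where the two arguments really differ: the concentration step. The paper never looks at Hall's condition directly; instead it decomposes a maximum matching of the integral domination graph into the classes $V_j^0,\dots,V_j^m$ (\Cref{observation:sampling_matching}), observes that a sample hitting these classes proportionally preserves the normalized matching size, and then invokes the total-variation learning bound (\Cref{lemma:tv}) for an $(m+1)$-outcome distribution to get the $\Theta((m+\log(1/\delta))/\epsilon^2)$ sample size; the winner is the candidate maximizing the empirical matching fraction $\widehat{\Phi}_j$. You instead run $\pluralitymatching$ verbatim on the sample, collapse Hall's condition to the $2^m$ inequalities $\alpha_A+\beta_A\le 1$ (valid because right-neighbourhoods depend only on top choices), and pay for a union bound over $m\cdot 2^m$ events with the linear-in-$m$ sample size. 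Both are sound; yours is arguably the more ``mechanism-agnostic'' argument, while the paper's answers your closing worry: the $2^m$-fold union bound is \emph{not} forced, since the same $\Theta(m/\epsilon^2)$ bound falls out of TV-learning a single $(m+1)$-class distribution, and the paper explicitly leaves the optimal dependence on $m$ open. From the approximate-Hall (resp.\ approximate-$\Phi$) statement onward the two proofs coincide: an almost-perfect matching is converted into a perfect matching of the integral domination graph restricted to a $(1-O(\epsilon))$-fraction subpopulation $V'$, on which the Gkatzelis--Halpern--Shah guarantee gives distortion $3$, and \Cref{theorem:missing_voters} absorbs the discarded voters at an additive cost of $O(\epsilon)$ --- exactly the step you identify as the crux, and the reason a crude charging of unmatched voters would fail. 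One caveat that applies to you and to the paper equally: the passage from ``matching of size $(1-O(\epsilon))n$ in $G_c^V$'' to ``a subset $V'$ with $|V'|\ge(1-O(\epsilon))n$ admitting a perfect matching of $G_c^{V'}$ \emph{with both sides equal to} $V'$'' is not literally a restriction of the defect-Hall matching (the matched left- and right-vertex sets need not coincide, and restricting the matching to their intersection leaves one deficiency per alternating path); your ``equivalently'' and the paper's ``it follows that'' gloss over the same short argument, so this is not a gap relative to the paper's own standard of rigor.
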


More precisely, the main ingredient of $\pluralitymatching$ is a maximum-matching subroutine for a certain bipartite graph. Our first observation is that the size of the maximum matching can be determined through a much smaller graph which satisfies a ``proportionality'' condition with respect to a maximum-matching decomposition. Although this condition cannot be explicitly met since the algorithm is agnostic to the decomposition, our observation is that sampling (with sufficiently many samples) will approximately satisfy this requirement, eventually leading to the desired conclusion. 

We stress that we do \emph{not} guarantee that the winner in our sample will coincide with that over the entire population. In fact, the sample complexity bounds for the \emph{winner determination} problem---for virtually every reasonable voting rule---depend on the \emph{margin} of victory (see \cite{10.5555/2772879.2773334}); however, we argue that this feature is undesirable. For one thing the algorithm does not have any prior information on the margin, and hence it is unclear how to tune this parameter in practice. More importantly, in many scenarios the margin might be very small, leading to a substantial overhead in the sample-complexity requirements of the mechanism. One of our conceptual contributions is to show that we can circumvent such limitations once we espouse a utilitarian framework. Indeed, observe that all of our bounds are \emph{distribution-independent} (and instance-oblivious).

We should also point out that although we are emphasizing on sample-complexity considerations, we believe that our results have another very clear motivation. Namely, given that in most applications $n \gg m$, it is important to provide \emph{sublinear} algorithms whose running time does not depend on $n$. In this context, we provide a Monte Carlo implementation of $\pluralitymatching$ whose time complexity scales independently of $n$.

To conclude, we provide several experimental findings in real-life voting applications from the standpoint of the (metric) distortion framework. We are mostly concerned with comparing the results of the scoring systems used in practice against a mechanism which explicitly attempts to minimize the distortion; the latter is realized with the linear programming mechanism of Goel et al. \cite{10.1145/3033274.3085138}. Specifically, we analyze the efficiency of the scoring rule used in the Eurovision song contest and the Formula One world championship. Interestingly, on both occasions we find that the winner in the actual competition is usually the candidate who minimizes the distortion. Our implementation is publicly available at \texttt{\href{https://github.com/ioannisAnagno/Voting-MetricDistortion}{https://github.com/ioannisAnagno/Voting-MetricDistortion}}.

\subsection{Related Work}

Research in the metric distortion framework was initiated by Anshelevich et al. \cite{DBLP:conf/aaai/AnshelevichBP15}. Specifically, they analyzed the distortion of several common voting rules, most notably establishing that Copeland's rule has distortion at most $5$, with the bound being tight for certain instances. They also conjectured that the \emph{ranked pairs} mechanism always achieves distortion at most $3$, which is also the lower bound for any deterministic mechanism. This conjecture was disproved by Goel et al. \cite{10.1145/3033274.3085138},\footnote{A tight bound of $\Theta(\sqrt{m})$ for the ranked pairs mechanism was subsequently given by Kempe \cite{DBLP:conf/aaai/000120a}.} while they also studied \emph{fairness} properties of certain voting rules. Moreover, Skowron and Elkind \cite{DBLP:conf/aaai/SkowronE17} established that a popular rule named \emph{single transferable vote} (STV) has distortion $\mathcal{O}(\log m)$, along with a nearly-matching lower bound. The barrier of $5$ set out by Copeland was broken by Munagala and Wang \cite{DBLP:conf/ec/MunagalaW19}, presenting a novel deterministic rule with distortion $2 + \sqrt{5}$. The same bound was obtained by Kempe \cite{DBLP:conf/aaai/000120a} through an LP duality framework, who also articulated sufficient conditions for proving the existence of a deterministic mechanism with distortion $3$. This conjecture was only recently confirmed by Gkatzelis et al. \cite{DBLP:conf/focs/Gkatzelis0020}, introducing the \emph{plurality matching} mechanism. Closely related to our study is also the work of Gross et al. \cite{DBLP:conf/aaai/GrossAX17}, wherein the authors provide a near-optimal mechanism which only asks $m+1$ voters for their top-ranked alternatives. The main difference with our setting is that we require an efficiency-guarantee with high probability, and not in expectation.

\paragraph{Broader Context.} Beyond the metric case most focus has been on analyzing distortion under a \emph{unit-sum} assumption on the utility function, ensuring that agents have equal ``weights''. In particular, Boutilier et al. \cite{DBLP:journals/ai/BoutilierCHLPS15} provide several upper and lower bounds, while they also study learning-theoretic aspects under the premise that every agent's utility is drawn from a distribution (cf., see \cite{DBLP:journals/ai/ProcacciaZPR09}). Moreover, several multi-winner extensions have been studied in the literature. Caragiannis et al. \cite{DBLP:conf/ijcai/CaragiannisNP016} studied the \emph{committee selection} problem, which consists of selecting $k$ alternatives that maximize the social welfare, assuming that the value of each agent is defined as the maximum value derived from the committee's members. We also refer to \cite{DBLP:conf/aaai/BenadeNP017} for the \emph{participatory budgeting} problem, and to \cite{DBLP:conf/aaai/BenadePQ19} when the output of the mechanism should be a total order over alternatives (instead of a single winner).

More special cases were considered in \cite{DBLP:conf/sigecom/FeldmanFG16,DBLP:conf/wine/FainGMS17}, strengthening some of the results we previously described. The trade-off between efficiency and communication has been addressed in \cite{DBLP:conf/nips/MandalP0W19,10.1145/3391403.3399510}, while Amanatidis et al. \cite{DBLP:conf/aaai/AmanatidisBFV20} investigated the decay of distortion under a limited amount of cardinal queries---in addition to the ordinal information. We should also note a series of works analyzing the power of ordinal preferences for some fundamental graph-theoretic problems \cite{DBLP:conf/sagt/Filos-RatsikasF014,DBLP:conf/aaai/AnshelevichS16,DBLP:conf/wine/AnshelevichS16,DBLP:conf/wine/AnshelevichZ18}. Finally, we point out that strategic issues are typically ignored within this line of work. We will also posit that agents provide truthfully their preferences, but we refer to \cite{DBLP:conf/aaai/BhaskarDG18,DBLP:conf/wine/CaragiannisFFHT16,DBLP:journals/corr/abs-1802-01308} for rigorous considerations on the strategic issues that arise. We refer the interested reader to the survey of Anshelevich et al. \cite{anshelevich2021distortion}, as we have certainly not exhausted the literature.

\section{Preliminaries}

A \emph{metric space} is a pair $(\mathcal{M}, d)$, where $d: \mathcal{M} \times \mathcal{M} \mapsto \mathbb{R}$ constitutes a \emph{metric} on $\mathcal{M}$, i.e., (i) $\forall x,y \in \mathcal{M}, d(x, y) = 0 \iff x = y$ (identity of indiscernibles), (ii) $\forall x, y \in d(x,y) = d(y, x)$ (symmetry), and (iii) $\forall x, y, z \in \mathcal{M}, d(x, y) \leq d(x, z) + d(z, y)$ (triangle inequality). Consider a set of $n$ voters $V = \{1, 2, \dots, n\}$ and a set of $m$ candidates $C = \{a, b, \dots, \}$; candidates will be typically represented with lowercase letters such as $a, b, w, x$, but it will be sometimes convenient to use numerical values as well. We assume that every voter $i \in V$ is associated with a point $v_i \in \mathcal{M}$, and every candidate $a \in C$ to a point $c_a \in \mathcal{M}$. Our goal is to select a candidate $x$ who minimizes the \emph{social cost}: $\SC(x) = \sum_{i=1}^{n} d(v_i, c_{x})$. This task would be trivial if we had access to the agents' distances from all the candidates. However, in the standard \emph{metric distortion} framework every agent $i$ provides only a \emph{ranking} (a total order) $\sigma_i$ over the points in $C$ according to the \emph{order} of $i$'s distances from the candidates. We assume that ties are broken arbitrarily, subject to transitivity, but we will not abuse the tie-breaking assumption.

In this work we are considering a substantially more general setting, wherein every agent provides a subset of $\sigma_i$. More precisely, we assume that agent $i$ provides as input a set $\mathcal{P}_i$ of ordered pairs of distinct candidates, such that $(a, b) \in \mathcal{P}_i \implies a \succ_i b$, where $a, b \in C$; it will always be assumed that $\mathcal{P}_i$ corresponds to the \emph{transitive closure} of the input. We will allow $\mathcal{P}_i$ to be the empty set, in which case $i$ does not provide any information to the mechanism; with a slight abuse of notation we will let $\mathcal{P}_i \equiv \sigma_i$ when $i$ provides the entire order of preferences. We will say that the input $\mathcal{P} = (\mathcal{P}_1, \dots, \mathcal{P}_n)$ is consistent with the metric $d$ if $(a, b) \in \mathcal{P}_i \implies d(v_i, c_a) \leq d(v_i, c_b), \forall i \in V$, and this will be denoted with $d \triangleright \mathcal{P}$. We will represent with $\topp(i)$ and $\secc(i)$ $i$'s first and second most preferred candidates respectively. We may also sometimes use the notation $ab = \{ i \in V : a \succ_i b \}$.

A \emph{deterministic} \emph{social choice rule} is a function that  maps an \emph{election} in the form of a $3$-tuple $\mathcal{E} = (V, C, \mathcal{P})$ to a single candidate $a \in C$. We will measure the performance of $f$ for a given input of preferences $\mathcal{P}$ in terms of its \emph{distortion}, namely, the worst-case approximation ratio it provides with respect to the social cost:

\begin{equation}
    \distortion(f; \mathcal{P}) = \sup \frac{\SC(f(\mathcal{P}))}{\min_{a \in C} \SC(a)},
\end{equation}
where the supremum is taken over all metrics such that $d \triangleright \mathcal{P}$. The distortion of a social choice rule $f$ is the maximum of $\distortion(f; \mathcal{P})$ over all possible input preferences $\mathcal{P}$. In other words, once the mechanism selects a candidate (or a distribution over candidates if the social choice rule is \emph{randomized}) an adversary can select any metric space subject to being consistent with the input preferences. Similarly, in \Cref{section:pairwise_comparisons} where we study mechanisms which perform pairwise comparisons, the adversary can select any metric space consistent with the elicited comparisons. We should point out the following:

\begin{proposition}
    \label{proposition:feasibility}
Under any given preferences $\mathcal{P}$, there exists a metric space consistent with $\mathcal{P}$.
\end{proposition}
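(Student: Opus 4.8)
The plan is to construct, for any profile $\mathcal{P}$, an explicit metric on the finite point set $\mathcal{M} = V \sqcup C$ (voters and candidates taken as distinct abstract points) that realizes $\mathcal{P}$. The first step is a reduction to complete rankings: each $\mathcal{P}_i$ is transitively closed and antisymmetric (having both $(a,b)$ and $(b,a)$ would force $a \succ_i b$ and $b \succ_i a$), hence a strict partial order on the finite set $C$, so it admits a linear extension $\sigma_i$. Fixing one such $\sigma_i$ per voter, it suffices to realize the total orders $(\sigma_1, \dots, \sigma_n)$, since $\mathcal{P}_i \subseteq \sigma_i$ immediately gives $d \triangleright \mathcal{P}$ once $d \triangleright (\sigma_1,\dots,\sigma_n)$.

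For the construction, write $\mathrm{rank}_i(a) \in \{0, 1, \dots, m-1\}$ for the position of $a$ in $\sigma_i$ (with $0$ the top choice) and fix $\epsilon$ with $0 < \epsilon < 1/m$. Define $d$ to be symmetric with $d(x,x) = 0$, with $d(v_i, c_a) = 1 + \epsilon \cdot \mathrm{rank}_i(a)$ for every voter $i$ and candidate $a$, and with $d(x,y) = 2$ for every other pair of distinct points (voter--voter and candidate--candidate). Then $d$ is symmetric by construction, every distance between distinct points is at least $1$, so the identity of indiscernibles holds, and every voter--candidate distance lies in $[1, 2)$ by the choice of $\epsilon$.

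What remains --- and this is the only step requiring any care --- is the triangle inequality, which I would dispatch by a short case analysis on a triple of points $x, y, z$. If two of them coincide it is trivial; if all three are candidates or all three are voters, every side equals $2$ and the inequality holds; otherwise the triple consists of one voter with two candidates, or two voters with one candidate, and then exactly one side (the one joining the like-typed points) has length $2$ while the other two are voter--candidate distances $\delta, \delta' \in [1,2)$, so the three inequalities reduce to $2 \le \delta + \delta'$, $\delta \le 2 + \delta'$, and $\delta' \le 2 + \delta$, all of which hold because $1 \le \delta, \delta' < 2$. Finally, consistency is immediate: if $(a,b) \in \mathcal{P}_i$ then $a \succ_i b$ in $\sigma_i$, hence $\mathrm{rank}_i(a) < \mathrm{rank}_i(b)$ and $d(v_i, c_a) < d(v_i, c_b)$, so $d \triangleright \mathcal{P}$.

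I do not expect a genuine obstacle here: the construction is elementary and the whole argument hinges only on choosing $\epsilon$ small enough that voter--candidate distances stay strictly below the base distance $2$, which is exactly what keeps the triangle inequalities involving two voter--candidate sides from failing.
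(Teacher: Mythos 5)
Your proof is correct. The paper itself gives no construction: it simply observes that the claim "follows directly from Proposition 1 in Anshelevich et al., which established the claim when $\mathcal{P} = \sigma$," leaving implicit both the reduction from partial to complete preferences and the underlying metric. You supply exactly these two missing pieces: the linear-extension step (justified because each $\mathcal{P}_i$ is a transitively closed, antisymmetric subset of a strict order on a finite set) and an explicit two-level metric with voter--candidate distances $1 + \epsilon\,\mathrm{rank}_i(a) \in [1,2)$ and all remaining distances equal to $2$. Your case analysis of the triangle inequality is complete (the only nontrivial case is a triple with exactly one side of length $2$ and two sides in $[1,2)$, and $2 \le \delta + \delta'$ holds since both are at least $1$), and consistency with $\mathcal{P}$ follows since $\mathcal{P}_i \subseteq \sigma_i$. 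So your argument is a self-contained, elementary replacement for the citation; what the paper's route buys is brevity, while yours makes the proposition verifiable without consulting the earlier work and makes explicit the (easy but necessary) observation that partial preference sets always extend to total orders.
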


This proposition follows directly from Proposition 1 in \cite{DBLP:conf/aaai/AnshelevichBP15}, which established the claim when $\mathcal{P} = \sigma$. 

\subsection{Instance-Optimal Voting}

An important observation is that under any input preferences $\mathcal{P}$ there exists a deterministic \emph{instance-optimal} mechanism; this was noted by Goel et al. \cite{10.1145/3033274.3085138} (see also \cite{DBLP:journals/ai/BoutilierCHLPS15}) when $\mathcal{P} = \sigma$, but their mechanism directly applies for our more general setting. We briefly present their idea, as we will also employ this mechanism for our experiments.

The first ingredient is an optimization problem that allows to compare a pair of distinct candidates, subject to the set of preferences given to the mechanism. Specifically, for $a, b \in C$, with $a \neq b$, consider the following \emph{linear program} $\metricLP(a, b)$:

\begin{equation}
    \label{eq:linear_program}
\begin{array}{ll@{}ll}
\text{maximize}  & \sum_{i=1}^n x_{i, a} &\\
\text{subject to} & \sum_{i=1}^n x_{i, b} = 1; \\ 

                & x_{i, p} \leq x_{i, q}, & \forall (p, q) \in \mathcal{P}_i, \forall i \in V; \\
                 &x_{i, i} = 0, & \forall i \in V \cup C; \\
                 & x_{i, j} = x_{j, i}, & \forall i, j \in V \cup C; \\
                 & x_{i, j} \leq x_{i, k} + x_{k, j}, & \forall i, j, k \in V \cup C.
\end{array}
\end{equation}

First of all, observe that the output of this linear program does not necessarily yield a metric space, but rather a \emph{pseudo-metric} on $V \cup C$, given that it is possible that $x_{i, j} = 0$ for $i \neq j$. Nonetheless, this issue can be easily resolved by ``merging'' all the elements $i, j \in V \cup C$ such that $x_{i, j} = 0$; the consistency of this approach follows by the ``triangle inequalities''---the final constraint of the linear program. It should be pointed out that some of the constraints in $\metricLP$ are redundant, in the sense that they are implied by others, but we will not dwell on such optimizations here.

We will represent with $\mathfrak{D}(a | b)$ the value of the linear program $\metricLP(a,b)$; if it is unbounded we let $\mathfrak{D}(a | b) = +\infty$. Note that the linear program is always feasible by virtue of \Cref{proposition:feasibility}. In this context, the mechanism of Goel et al. \cite{10.1145/3033274.3085138} consists of the following steps:

\begin{itemize}
    \item For any pair $a, b \in C$, with $a \neq b$, compute $\mathfrak{D}(a|b)$; also let $\mathfrak{D}(a | a) = 1$. 
    \item Set $\mathfrak{D}(a) = \max_{b \in C} \mathfrak{D}(a|b)$.
    \item Return the candidate $b$ with the minimum value $\mathfrak{D}(a)$ over all $a \in C$; ties are broken arbitrarily.
\end{itemize}

This mechanism will be referred to as $\minimax$, to distinguish from the \emph{minimax} voting scheme of Simpson and Kramer \cite{10.1257/jep.9.1.3}. The $\minimax$ rule essentially performs brute-force search over all possible metrics in order to identify the candidate which minimizes the distortion; nonetheless, it can be solved in $\poly(n, m)$ time given that the $\metricLP$ admits a strongly polynomial time algorithm; this follows because the \emph{bit complexity} $L$---the number of bits required to represent it \cite{10.1145/800057.808695}---is small: $L = \mathcal{O}(\log (n+m))$. Moreover, it is easy to establish the following:

\begin{theorem}
    \label{theorem:instance_opt}
    For any given preferences $\mathcal{P}$, the $\minimax$ rule is instance-optimal in terms of distortion.
\end{theorem}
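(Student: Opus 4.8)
The plan is to reduce the theorem to a single claim about the linear program: that for distinct candidates $a,b \in C$, the optimal value of $\metricLP(a,b)$ captures exactly the worst-case ratio of social costs,
\[
\mathfrak{D}(a \mid b) \;=\; \sup_{d \,\triangleright\, \mathcal{P}} \frac{\SC(a)}{\SC(b)},
\]
where the right-hand side is read as $+\infty$ when it is unbounded or when some consistent metric forces $\SC(b)=0<\SC(a)$ (in which case the LP is unbounded as well). Granting this claim, instance-optimality falls out of elementary manipulations with suprema.

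To prove the claim I would argue both inequalities. For ``$\ge$'', take any $d \triangleright \mathcal{P}$ with $\SC(b)>0$, rescale it by $1/\SC(b)$ (this preserves both consistency with $\mathcal{P}$ and the ratio $\SC(a)/\SC(b)$), and read off $x_{i,j} := d(\cdot,\cdot)$ as a feasible point of $\metricLP(a,b)$: the normalization constraint holds after rescaling, the ordering constraints hold because $d \triangleright \mathcal{P}$, and the remaining constraints are precisely the metric axioms; hence $\SC(a) = \sum_i x_{i,a} \le \mathfrak{D}(a\mid b)$. For ``$\le$'', start from an (almost) optimal feasible point $\{x_{i,j}\}$, observe that $x_{i,j}\ge 0$ and that $\{x_{i,j}=0\}$ is an equivalence relation on $V\cup C$ (both facts follow from the triangle-inequality constraints together with $x_{i,i}=0$), and merge each equivalence class into a single point. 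This turns $\{x_{i,j}\}$ into a genuine metric $d$ on the quotient: identity of indiscernibles now holds by construction, while symmetry, the triangle inequality, and consistency with $\mathcal{P}$ are inherited. Since $\SC_d(b)=\sum_i x_{i,b}=1$ and $\SC_d(a)=\sum_i x_{i,a}$, taking the supremum over feasible points yields the claim; the degenerate/unbounded cases follow from the same construction and scale-invariance of the ratio.

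With the claim in hand I would finish as follows. For a fixed candidate $a$, let $\distortion(a;\mathcal{P})$ denote the distortion of the rule that always outputs $a$. On one side, for every $d\triangleright\mathcal{P}$ with cost-minimizer $b^\star$ we have $\SC_d(a)/\SC_d(b^\star) \le \sup_{d'}\SC_{d'}(a)/\SC_{d'}(b^\star) = \mathfrak{D}(a\mid b^\star) \le \max_b \mathfrak{D}(a\mid b) = \mathfrak{D}(a)$, so $\distortion(a;\mathcal{P}) \le \mathfrak{D}(a)$. On the other side, each $\mathfrak{D}(a\mid b) = \sup_d \SC_d(a)/\SC_d(b) \le \sup_d \SC_d(a)/\min_{b'}\SC_d(b') = \distortion(a;\mathcal{P})$, and since $\mathfrak{D}(a\mid a)=1$ is trivially dominated, $\mathfrak{D}(a) = \max_b \mathfrak{D}(a\mid b) \le \distortion(a;\mathcal{P})$. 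Thus $\distortion(a;\mathcal{P}) = \mathfrak{D}(a)$ for every $a$, and since $\minimax$ returns $\argmin_a \mathfrak{D}(a)$, any deterministic rule $f$ — which outputs some candidate $f(\mathcal{P})$ — satisfies $\distortion(f;\mathcal{P}) = \mathfrak{D}(f(\mathcal{P})) \ge \min_a \mathfrak{D}(a) = \distortion(\minimax;\mathcal{P})$.

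The only substantive step — and the one I would be most careful about — is the ``$\le$'' direction of the claim: verifying that collapsing the zero-distance elements of an arbitrary LP solution produces a bona fide metric space that still respects every pair in every $\mathcal{P}_i$ (in particular when a voter's point merges with a candidate's), together with the bookkeeping needed for the degenerate case $\SC(b)=0$. Everything else is routine.
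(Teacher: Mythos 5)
Your proposal is correct and follows the same route the paper intends: the paper itself only supplies the ingredients (the LP $\metricLP(a,b)$, the observation that its solutions are pseudo-metrics that can be turned into metrics by merging zero-distance points) and asserts the theorem as easy, and your write-up fills in exactly those details — the scaling argument for one direction, the quotient construction for the other, and the supremum bookkeeping identifying $\mathfrak{D}(a)$ with the distortion of always outputting $a$. Nothing is missing; in particular your treatment of nonnegativity, the equivalence relation induced by $x_{i,j}=0$, and the unbounded case are all sound.
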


In particular, when $\mathcal{P} = \sigma$ note that $\minimax$ always yields distortion $3$ by virtue of an upper-bound by Gkatzelis et al. \cite{DBLP:conf/focs/Gkatzelis0020}. However, in this work we will be mostly interested in providing upper bounds on the distortion of $\minimax$ under incomplete rankings.

\section{Sequence of Pairwise Comparisons}
\label{section:pairwise_comparisons}

In this section we are considering voting rules which perform a sequence of pairwise comparisons between two candidates, with the result of each comparison being determined by the majority rule over the entire population of voters. To put it differently, consider the tournament graph $T = (C, E)$ where $(a,b) \in E$ if and only if candidate $a$ pairwise-defeats candidate $b$; it will be tacitly assumed---without any loss of generality---that ties are broken arbitrarily so that $T$ is indeed a tournament. We are studying mechanisms which elicit edges from $T$, and we are interested in establishing a trade-off between the number of elicited edges and the distortion of the mechanism. We commence with the following lower bound:

\begin{proposition}
    \label{proposition:disconnected}
    There are instances for which any deterministic mechanism which elicits (strictly) fewer than $m-1$ edges from $T$ has unbounded distortion.
\end{proposition}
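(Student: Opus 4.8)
The plan is to exhibit a single family of instances (parametrized by $m$) together with a single metric realization for each, such that the elicited comparisons — no matter which fewer-than-$(m-1)$ edges the mechanism queries — fail to distinguish the unique bounded-distortion candidate from a candidate whose social cost is arbitrarily larger. The key structural fact I would use is that a set of fewer than $m-1$ edges on $m$ vertices cannot connect the vertex set: the graph $(C, Q)$ of queried edges $Q$ has at least two connected components. I would let the adversary play second, after seeing which edges the mechanism queries (recall the queries may be adaptive, so formally one argues over the decision tree: any root-to-leaf path asks $<m-1$ comparisons, hence leaves $(C,Q)$ disconnected).

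The construction I would aim for: pick one connected component $C_1$ of $(C,Q)$ and place all the voters and all candidates of $C_1$ at a common point, say the origin, and place every candidate outside $C_1$ at distance $1$ from the origin (and mutually at distance, say, $2$, or whatever is needed for the triangle inequality — since everything is either at $0$ or at distance $1$ from a single hub point, consistency is immediate). Then every candidate in $C_1$ has social cost $0$ (or, to keep distances positive and avoid degeneracy, put the $C_1$ voters and candidates at mutual distance $\delta \to 0$, giving social cost $\Theta(n\delta)$), whereas every candidate outside $C_1$ has social cost $n$. So the distortion incurred by selecting any candidate not in $C_1$ is $\Theta(n/(n\delta)) = \Theta(1/\delta)$, which is unbounded as $\delta \to 0$ (or literally $+\infty$ in the degenerate version). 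The crucial point is that the mechanism must also be shown to be *forced* to sometimes pick outside $C_1$: one builds a "twin" instance in which the roles are swapped — the voters and the good candidates live in a *different* component $C_2$ — in such a way that all queried comparisons return identical answers on the two instances, so the deterministic mechanism outputs the same candidate on both, and that candidate lies outside the good component in at least one of them.

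Concretely, I would make all candidates within each component of $(C,Q)$ mutually indistinguishable to the mechanism by having the realized metric agree on all of $Q$. The simplest way: on instance $j$ (for $j \in \{1,2\}$ indexing two components $C_1, C_2$), I place the electorate at a point $p_j$ together with the candidates of $C_j$ clustered around $p_j$ at infinitesimal distances, and the remaining candidates at distance $1$. Since no queried edge crosses between components (edges of $Q$ stay within a component), the majority outcome of every queried comparison is the same symmetric tie-break on both instances — or, to be safe, I would instead argue that within a component the cluster is small enough that the pairwise majority among component candidates can be made to match a fixed tournament on $Q$ independent of $j$, while cross-component comparisons are simply never queried. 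Hence the mechanism's entire transcript is identical, it outputs a fixed candidate $c^\star$, and on whichever instance has its good component not containing $c^\star$, the distortion is unbounded.

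The main obstacle I anticipate is the adaptivity and the need to handle the tie-breaking cleanly: one must ensure that for *both* target instances the queried comparisons evaluate to the *same* answers, so that the adaptive query sequence (and therefore the final choice) coincides. I would handle this by first fixing the mechanism's behavior on a "reference" profile consistent with some tournament $T_0$ on all of $C$, recording the set $Q$ of edges it ends up querying on that profile and its winner $c^\star$; then choosing a component of $(C,Q)$ that does *not* contain $c^\star$ (such a component exists since $(C,Q)$ has $\ge 2$ components and at most one contains $c^\star$); and finally realizing a metric that (i) is consistent with $T_0$ restricted to $Q$ — so the same transcript is produced and the same $c^\star$ is output — and (ii) places the electorate and the good candidates inside that $c^\star$-free component. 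Verifying (i) amounts to checking that the clustered-at-a-hub metric can be perturbed to realize any prescribed tournament on the intra-component edges, which is routine since we have freedom in the infinitesimal intra-cluster distances; and the whole argument is manifestly robust to randomization in winner selection, since with the transcript fixed the support of the output distribution is fixed and some component avoids it, which I would note in a closing remark.
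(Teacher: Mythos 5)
Your proposal is correct and is essentially the paper's argument: both hinge on the fact that fewer than $m-1$ comparisons cannot single out the unique candidate on whom all voters are (nearly) collocated, so an adversary consistent with the observed transcript can place that candidate where the mechanism did not look. The only cosmetic difference is that you derive the ``at least two surviving possibilities'' from disconnectedness of the query graph, while the paper counts that each comparison eliminates at most one candidate from the dominant set; your explicit decision-tree handling of adaptive queries is a welcome elaboration of what the paper's sketch leaves implicit.
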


\begin{sproof}
Consider a family of tournaments $\mathcal{T}$ as illustrated in \Cref{fig:disconnected}, with the set $C^*$ containing a single candidate. Then, there are metric spaces for which all the voters are arbitrarily close to the candidate in $C^*$ and arbitrarily far from any other candidate. Thus, every mechanism with bounded distortion has to identify the candidate in $C^*$. However, it is easy to see that any pairwise comparison can eliminate at most one candidate from being in $C^*$. As a result, if $\widehat{T} = (C, \widehat{E})$ is the subgraph based on the elicited edges, there will be at least two distinct candidates which could lie in $C^*$ for some tournament in $\mathcal{T}$ consistent with $\widehat{T}$, leading to the desired conclusion.
\end{sproof}

\begin{figure}[!ht]
    \centering
    \includegraphics[scale=0.35]{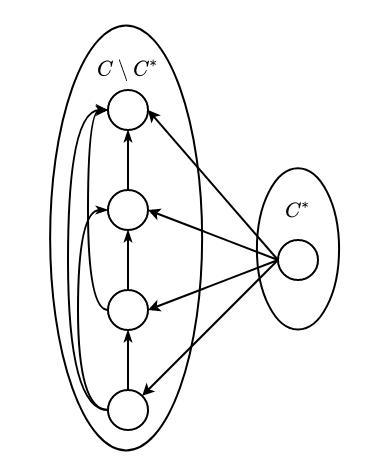}
    \caption{A hard instance when fewer than $m-1$ pairwise comparisons are elicited. }
    \label{fig:disconnected}
\end{figure}

In fact, the same limitation applies even if we allow randomization, either during the elicitation or the winner-determination phase. Importantly, we will show that $m-1$ edges from $T$ suffice to obtain near-optimal distortion. To this end, we will employ a powerful technical lemma by Kempe, proved via an LP-duality argument.

\begin{lemma}[\cite{DBLP:conf/aaai/000120a}]
    \label{lemma:propagation_error}
Let $a_1, a_2, \dots a_{\ell}$ be a sequence of distinct candidates such that for every $i = 2, \dots, \ell$ at least half of the agents prefer candidate $a_{i-1}$ over candidate $a_i$. Then, $\SC(a_1) \leq (2\ell - 1) \SC(a_{\ell})$.
\end{lemma}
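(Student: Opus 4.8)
The statement to prove is: given distinct candidates $a_1,\dots,a_\ell$ such that for each $i=2,\dots,\ell$ at least half the agents prefer $a_{i-1}$ to $a_i$, we have $\SC(a_1)\le(2\ell-1)\SC(a_\ell)$. The plan is to reduce the problem to $\ell=2$ and then iterate, since the telescoping is where the factor $2\ell-1$ must come from — but naively chaining a ``factor $3$ per step'' bound would give $3^{\ell-1}$, so the real content is that the errors add rather than multiply.

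\medskip
\noindent\textbf{Step 1: the base case / single-edge estimate.} First I would establish that if at least half the agents prefer $x$ to $y$, then $\SC(x)\le 3\,\SC(y)$. This is the classical metric-distortion argument: for any agent $i$, if $i$ prefers $x$ to $y$ then $d(v_i,c_x)\le d(v_i,c_y)$; if $i$ prefers $y$ to $x$ then by the triangle inequality $d(v_i,c_x)\le d(v_i,c_y)+d(c_x,c_y)\le d(v_i,c_y)+d(v_i,c_x)+d(v_i,c_y)$, i.e.\ $d(v_i,c_x)\le d(v_i,c_y)+2\min(\dots)$ — more carefully, $d(c_x,c_y)\le d(v_j,c_x)+d(v_j,c_y)$ for the ``good'' agents $j$ who prefer $x$, and one averages over those at-least-$n/2$ agents to bound $d(c_x,c_y)$ by $\tfrac{2}{n}\sum_j\bigl(d(v_j,c_x)+d(v_j,c_y)\bigr)\le \tfrac{2}{n}\cdot 2\,\SC(y)$ (using $d(v_j,c_x)\le d(v_j,c_y)$ on the good set). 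Summing $d(v_i,c_x)\le d(v_i,c_y)+d(c_x,c_y)$ over all $n$ agents yields $\SC(x)\le\SC(y)+n\,d(c_x,c_y)\le\SC(y)+4\,\SC(y)=5\,\SC(y)$ — which is the Copeland bound, not $3$. To get down to $3$ one must be sharper: split each agent's contribution and use that for the ``bad'' agents $d(v_i,c_x)\le 2d(v_i,c_y)+d(v_i,c_x)$ is wasteful; instead bound $d(c_x,c_y)$ using only good agents and then for bad agents write $d(v_i,c_x)\le d(v_i,c_y)+d(c_x,c_y)$, but redistribute $d(c_x,c_y)$ across the good-agent mass. The clean way the statement wants is an LP-duality computation, which I'd set up as: maximize $\SC(x)$ subject to $\SC(y)=1$, the ordinal constraints, and the metric (pseudo-metric) constraints — a linear program — and exhibit a dual certificate of value $2\ell-1$. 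For $\ell=2$ the certificate gives $3$.

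\medskip
\noindent\textbf{Step 2: the LP formulation for the chain.} The cleanest route to $2\ell-1$ (avoiding the exponential blowup) is to write one linear program that maximizes $\SC(a_1)$ subject to $\SC(a_\ell)=1$, the pseudo-metric constraints on $V\cup\{a_1,\dots,a_\ell\}$ (symmetry, triangle inequalities, $x_{i,i}=0$), and for each $i$ the constraint ``$\sum_{j: a_{i-1}\succ_j a_i} 1 \ge n/2$'' encoded via $x_{j,a_{i-1}}\le x_{j,a_i}$ on a set of agents of size $\ge n/2$. I would then construct an explicit feasible dual solution whose objective is $2\ell-1$. The intuition driving the dual: route the ``transport'' from $a_1$ to $a_\ell$ along the chain $a_1\to a_2\to\cdots\to a_\ell$, paying a bounded amount at each hop because of the majority constraint, and observe that the costs along the chain telescope — each of the $\ell-1$ hops contributes roughly $2$ to the bound plus one leftover $+1$, giving $2(\ell-1)+1=2\ell-1$. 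Concretely I expect the primal combinatorial argument (which may be easier to present than raw duality) to be: apply the single-hop triangle bound $d(v_i,c_{a_1})\le d(v_i,c_{a_\ell})+\sum_{i=2}^{\ell} d(c_{a_{i-1}},c_{a_i})$, then bound each $d(c_{a_{i-1}},c_{a_i})$ by $\tfrac{2}{n}\sum_{j}\bigl(\text{distances to }a_{i-1}\text{ and }a_i\text{ over the majority set}\bigr)$, and finally control $\sum_i \tfrac{2}{n}\sum_j (\cdots)$ by $\SC(a_\ell)$ using monotonicity along the chain (the majority prefers earlier candidates, so distances to later ones dominate). The telescoping/monotonicity is what converts a product into a sum.

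\medskip
\noindent\textbf{Main obstacle.} The hard part is exactly Step 2: making the errors \emph{add} rather than \emph{compound}. A careless induction ``$\SC(a_1)\le 3\,\SC(a_2)\le\cdots\le 3^{\ell-1}\SC(a_\ell)$'' is far too lossy. The trick is that when we bound $d(c_{a_{i-1}},c_{a_i})$ we should measure it against $\SC(a_i)$ (not against $\SC(a_1)$ or $\SC(a_{i-1})$), and crucially use that $\SC(a_i)\le\SC(a_\ell)$-type comparisons are \emph{not} available directly — rather, the majority-preference chain lets us bound a weighted sum $\sum_i d(c_{a_{i-1}},c_{a_i})$ all at once against a single $\SC(a_\ell)$ because the ``witness'' agents for different hops, while possibly different sets, each contribute distances that are dominated (on their majority sets) by distances to $a_\ell$ after transporting along the remaining chain. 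Getting the bookkeeping of these overlapping majority sets right — ensuring no agent's distance is charged more than the $2\ell-1$ times the bound allows — is the delicate combinatorial core, and I expect the LP-duality packaging (as in the cited work) is precisely the device that makes this bookkeeping automatic: one writes the dual, checks feasibility of the proposed multipliers, and reads off the objective $2\ell-1$.
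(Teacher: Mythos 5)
First, a point of reference: the paper does not prove this lemma at all --- it is imported verbatim from Kempe \cite{DBLP:conf/aaai/000120a}, with only the remark that it is established there via an LP-duality argument. So there is no in-paper proof to compare against; the relevant benchmark is Kempe's argument, and your plan does correctly identify it at a high level (maximize $\SC(a_1)$ subject to $\SC(a_\ell)=1$, the pseudo-metric constraints, and the majority constraints, then exhibit a dual solution of value $2\ell-1$), as well as correctly diagnosing why the naive induction $\SC(a_1)\le 3^{\ell-1}\SC(a_\ell)$ is hopeless.

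As a proof, however, your submission stops exactly at the step that carries all the content. You never construct the dual certificate, and the ``primal combinatorial argument'' you sketch in its place does not deliver $2\ell-1$ as written. Concretely, the natural telescoping your Step 2 describes --- for each hop average the triangle inequality over the majority set $S_i$ (on which $d(v_j,c_{a_{i-1}})\le d(v_j,c_{a_i})\le d(v_j,c_{a_\ell})+d(c_{a_i},c_{a_\ell})$) to get $d(c_{a_{i-1}},c_{a_\ell})\le d(c_{a_i},c_{a_\ell})+\tfrac{4}{n}\SC(a_\ell)$, telescope along the chain, and then sum $d(v_j,c_{a_1})\le d(v_j,c_{a_\ell})+d(c_{a_1},c_{a_\ell})$ over all agents --- does make the errors add rather than multiply, but it yields $\SC(a_1)\le(4\ell-3)\,\SC(a_\ell)$, not $(2\ell-1)\,\SC(a_\ell)$; already for $\ell=2$ it gives $5$ rather than $3$, which is the same factor-of-two slack you ran into (and flagged but did not resolve) in your Step 1. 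Recovering the constant $2$ per hop requires additionally charging the detour term only to the at most $n/2$ agents outside each majority set, and making that charging consistent across the overlapping sets $S_2,\dots,S_\ell$ simultaneously with the telescoping is precisely the bookkeeping that Kempe's dual variables automate. You explicitly name this as ``the delicate combinatorial core'' and then defer it to the cited work, so what you have is an accurate roadmap with the crux left open, not a proof.
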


Armed with this important lemma we introduce the $\DR$ mechanism, which determines a winning candidate with access only to a pairwise comparison oracle; namely, $\mathfrak{O}$ takes as input two distinct candidates $a, b \in C$ and returns the \emph{losing} candidate based on the voters' preferences (recall that in case of a tie the oracle returns an arbitrary candidate).

\begin{mdframed}[
    linecolor=black,
    linewidth=1pt,
    roundcorner=8pt,
    backgroundcolor=custom_color,
    userdefinedwidth=\textwidth,
]
$\textsc{DominationRoot}$ Mechanism\\
\textbf{Input}: Set of candidates $C$, Pairwise comparison oracle $\mathfrak{O}$ \\
\textbf{Output}: Winner $w \in C$
\begin{enumerate}
    \item[\textit{1.}] Initialize $S := C$
    \item[\textit{2.}] Construct arbitrarily a set $\Pi$ of $\lfloor S/2 \rfloor$ pairings from $S$
    \item[\textit{3.}] For every $\{a,b\} \in \Pi$ remove $\mathfrak{O}(a, b)$ from $S$
    \item[\textit{4.}] If $|S| = 1$ \textbf{return} $w \in S$; otherwise, continue from step $2$
\end{enumerate}
\end{mdframed}

We refer to \Cref{fig:domination_root} for an illustration of $\DR$. The analysis of this mechanism boils down to the following simple claims:

\begin{figure}[!ht]
    \centering
    \includegraphics[scale=0.35]{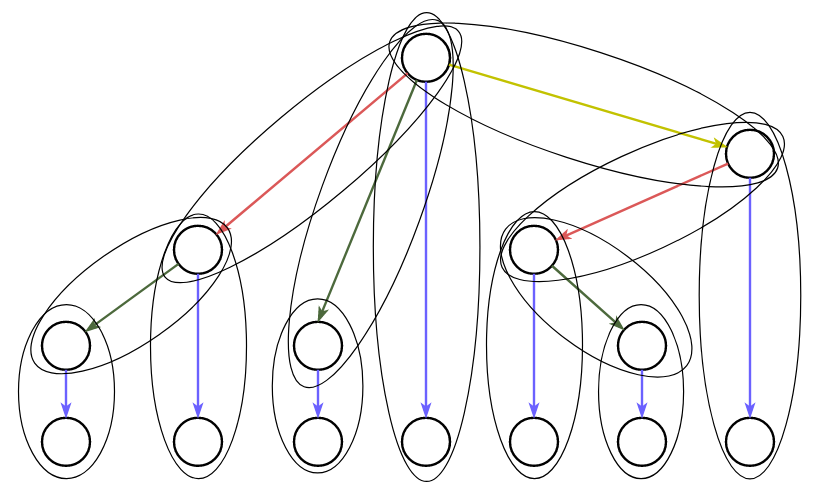}
    \caption{An implementation of $\DR$ for $m=14$ candidates. We have highlighted with different colors pairwise comparisons that correspond to different rounds of the mechanism; also notice that the ``height'' of every candidate indicates the order of elimination.}
    \label{fig:domination_root}
\end{figure}

\begin{claim}
    \label{claim:edges_elicited}
$\DR$ elicits exactly $m-1$ edges from $T$.
\end{claim}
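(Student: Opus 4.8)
The plan is to prove \Cref{claim:edges_elicited} by a straightforward counting argument that tracks the size of the working set $S$ across rounds of $\DR$. The key observation is that each call to the oracle $\mathfrak{O}$ corresponds to eliciting exactly one edge of the tournament $T$ (the edge between the two candidates being compared), so it suffices to count the total number of oracle calls over the entire execution. Since in each round the mechanism pairs up the candidates currently in $S$ and each pairing triggers exactly one oracle call removing exactly one candidate, the number of oracle calls in a round equals the number of candidates removed in that round.

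First I would argue that the mechanism terminates: in every round with $|S| \geq 2$ we form $\lfloor |S|/2 \rfloor \geq 1$ pairings, each of which removes one candidate, so $|S|$ strictly decreases; hence after finitely many rounds $|S| = 1$ and the mechanism returns. Next I would observe that the candidates removed across different rounds are all distinct (once removed, a candidate never re-enters $S$), and the single surviving candidate $w$ is never removed. Therefore the total number of removals — equivalently, the total number of oracle calls, equivalently the total number of elicited edges — is exactly $|C| - 1 = m - 1$. One should also note that the elicited edges are pairwise distinct: in any given round the pairings are over disjoint pairs of candidates so the edges within a round are distinct, and an edge elicited in a later round joins two candidates that both survived all earlier rounds, so it cannot coincide with an edge from an earlier round (whose losing endpoint was already eliminated). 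Thus $\DR$ elicits exactly $m-1$ edges from $T$.

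I do not anticipate a genuine obstacle here; the statement is essentially a bookkeeping fact. The only mild subtlety to state cleanly is the bijection between oracle calls and eliminated candidates — namely that each oracle call $\mathfrak{O}(a,b)$ removes precisely one element of $S$ (even under tie-breaking, the oracle returns exactly one ``losing'' candidate) — together with the fact that the set of all eliminated candidates is exactly $C \setminus \{w\}$. I would present this as a short two- or three-sentence argument rather than a formal induction, though an induction on the number of rounds (maintaining the invariant that after $r$ rounds the number of elicited edges equals the number of candidates removed so far, which equals $|C| - |S|$) would make the accounting fully rigorous if desired.
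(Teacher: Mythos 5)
Your argument is correct and is essentially the paper's own proof, which likewise observes that every elicited edge removes exactly one candidate until a single survivor remains, giving $m-1$ elicited edges in total. Your write-up merely makes the bookkeeping (distinctness of removed candidates and of elicited edges) more explicit than the paper's one-line version.
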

\begin{proof}
The claim follows given that for every elicited edge we remove a candidate for the rest of the mechanism, until only a single candidate survives.
\end{proof}

\begin{claim}
    \label{claim:logm_paths}
    $\DR$ returns a candidate $w$ which can reach every other node in $T$ in paths of length at most $\lceil \log m \rceil$.
\end{claim}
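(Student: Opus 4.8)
The plan is to track, for each candidate, the round in which $\DR$ eliminates it, and to show that these round indices strictly increase as one walks backwards along a chain of pairwise defeats toward the winner $w$. I would first bound the number of rounds: each round replaces the current surviving set $S$ of size $s$ by a set of size $s - \lfloor s/2 \rfloor = \lceil s/2 \rceil$ (when $s$ is odd one candidate receives a bye and simply survives). Iterating the map $s \mapsto \lceil s/2 \rceil$ starting from $s = m$ yields $\lceil m/2^k \rceil$ after $k$ steps, which reaches $1$ after exactly $R := \lceil \log_2 m \rceil$ steps; hence $\DR$ terminates in $R$ rounds and $w$ is the unique candidate that is never eliminated.

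Next, I would fix an arbitrary candidate $c \neq w$ and build a sequence $c = c_0, c_1, c_2, \dots$ of candidates as follows. The candidate $c_0$ is eliminated in some round $t_0$, losing a pairwise comparison to a candidate $c_1$, so $(c_1, c_0) \in E$ (in the case of a tie the oracle may be taken to return either candidate, which is consistent with how $T$ is defined). In general, if $c_j \neq w$, then $c_j$ is eliminated in some round $t_j$ by a candidate $c_{j+1}$ with $(c_{j+1}, c_j) \in E$. The crucial observation is that $c_{j+1}$ \emph{wins} its comparison in round $t_j$, hence survives into round $t_j + 1$, and so is eliminated (if at all) in a round $t_{j+1} \geq t_j + 1 > t_j$. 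Therefore the round indices $t_0 < t_1 < t_2 < \cdots$ are strictly increasing and all belong to $\{1, \dots, R\}$, which forces the sequence to reach $w$ within at most $R$ steps: there is some $k \leq R$ with $c_k = w$.

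Finally, the chain $w = c_k \to c_{k-1} \to \cdots \to c_1 \to c_0 = c$ is a directed path of length $k \leq R = \lceil \log_2 m \rceil$ in $T$ from $w$ to $c$, and since $c$ was arbitrary this establishes the claim. I do not expect a genuine technical obstacle here; the step requiring the most care is the bookkeeping — correctly handling the bye in an odd-sized round and verifying that $s \mapsto \lceil s/2 \rceil$ bottoms out in precisely $\lceil \log_2 m \rceil$ iterations — but this is routine.
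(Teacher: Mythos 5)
Your proof is correct and follows essentially the same route as the paper's: you partition the execution by elimination round, observe that the candidate who eliminates $a$ survives strictly longer than $a$, and conclude that the backward chain of defeats from any candidate reaches $w$ within the $\lceil \log_2 m\rceil$ rounds of the mechanism. The paper states this more tersely (as an induction over the round-partition $C_1,\dots,C_r$ with $r=\lceil \log m\rceil$); your version merely fills in the bookkeeping on the round count and the strictly increasing elimination times.
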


\begin{proof}
Consider the partition of candidates $C_1, \dots, C_{r}$ such that $C_i$ contains the candidates who were eliminated during the $i$-th round for $i \in \{1, 2, \dots, r-1\}$, and $C_r = \{w\}$. Observe that every candidate $a \in C_i$ (with $i \in \{1, 2, \dots, r-1\}$) was pairwise-defeated by some candidate in $C_j$ for $j > i$; thus, the claim follows inductively since $r = \lceil \log m \rceil$.
\end{proof}

\begin{theorem}
    \label{theorem:pairwise_comparisons}
    $\DR$ elicits only $m-1$ edges from $T$ and guarantees distortion at most $2 \lceil \log m \rceil + 1$.
\end{theorem}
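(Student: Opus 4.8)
The plan is to simply assemble the two claims already in hand together with Kempe's propagation lemma (\Cref{lemma:propagation_error}). The first half of the statement—that $\DR$ elicits exactly $m-1$ edges from $T$—is immediate from \Cref{claim:edges_elicited}, so nothing further is needed there. The work is in converting the structural guarantee of \Cref{claim:logm_paths} into a social-cost bound.

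For the distortion bound, I would let $w$ denote the candidate returned by $\DR$ and let $o \in \argmin_{a \in C} \SC(a)$ be an optimal candidate. By \Cref{claim:logm_paths}, there is a directed path in $T$ from $w$ to $o$ of length at most $\lceil \log m \rceil$; write it as $w = a_1, a_2, \dots, a_\ell = o$, where each edge $(a_{i-1}, a_i) \in E$ means $a_{i-1}$ pairwise-defeats $a_i$, i.e. at least half of the agents prefer $a_{i-1}$ over $a_i$. Since the path has at most $\lceil \log m \rceil$ edges, it has at most $\ell \le \lceil \log m \rceil + 1$ vertices. Now \Cref{lemma:propagation_error} applied to this sequence gives
\[
  \SC(w) = \SC(a_1) \le (2\ell - 1)\, \SC(a_\ell) = (2\ell-1)\,\SC(o) \le \bigl(2\lceil \log m \rceil + 1\bigr)\,\SC(o).
\]
Dividing by $\SC(o)$ (if $o$ is the optimum, $\SC(o) > 0$ unless the instance is degenerate, in which case the ratio is $1$) yields the claimed distortion bound. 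Since this holds for every metric $d$ consistent with the elicited comparisons and for every optimal $o$, the supremum defining distortion is at most $2\lceil \log m \rceil + 1$.

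The only genuinely delicate point—and the place I would be careful—is the bookkeeping on "length": \Cref{claim:logm_paths} measures the length of the path by its number of \emph{edges}, whereas \Cref{lemma:propagation_error} is stated in terms of the number of \emph{candidates} $\ell$ in the chain, so the $+1$ in the final bound comes precisely from this off-by-one. One should also double-check that the relevant chain runs \emph{from} $w$ \emph{to} $o$ in the correct orientation (so that the hypothesis "$a_{i-1}$ is preferred by at least half the agents over $a_i$" matches the direction of the tournament edges guaranteed by \Cref{claim:logm_paths}); this is exactly what the inductive argument in the proof of \Cref{claim:logm_paths} provides, since each eliminated candidate is defeated by a survivor of a later round. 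Beyond that the argument is routine.
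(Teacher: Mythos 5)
Your proposal is correct and is exactly the paper's argument: the paper's proof is a one-line combination of \Cref{claim:edges_elicited}, \Cref{claim:logm_paths}, and \Cref{lemma:propagation_error}, and your write-up simply makes explicit the edge-count/vertex-count bookkeeping and the orientation of the chain, both of which you handle correctly.
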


\begin{proof}
The theorem follows directly from \Cref{claim:edges_elicited}, \Cref{claim:logm_paths}, and \Cref{lemma:propagation_error}.
\end{proof}

This theorem along with \Cref{proposition:disconnected} imply a remarkable gap depending on whether the mechanism is able to elicit at least $m-1$ pairwise comparisons. We also provide a matching lower bound for $\DR$.

\begin{proposition}
    \label{proposition:DR-lower_bound}
    There exist instances for which $\DR$ yields distortion at least $2 \log m + 1$.
\end{proposition}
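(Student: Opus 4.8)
\emph{The plan.} Set $r := \lceil \log m \rceil$; for a clean statement take $m = 2^r$, so $r = \log m$. I would build an instance with $\ell := r+1$ designated \emph{chain} candidates $c_1, c_2, \dots, c_\ell$ and $m - \ell$ \emph{dummy} candidates, all dummies sitting at a single point $Z$ whose distance to everything else will dwarf the ``diameter'' of the $c_i$'s. The goal is to exhibit one legal execution of $\DR$ (i.e.\ one admissible choice of the arbitrary pairings) that outputs $c_1$ and whose $m-1$ elicited comparisons contain the whole chain $c_1 \succ c_2 \succ \dots \succ c_\ell$, together with only edges of the form $c_i \succ \text{dummy}$ and $\text{dummy} \succ \text{dummy}$. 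Since the adversary may then pick \emph{any} metric consistent with just those $m-1$ elicited comparisons, and since the dummy edges are consistent with the dummies being arbitrarily far away, the only genuine constraint is the chain; hence I am free to take the metric on the voters together with $c_1,\dots,c_\ell$ to be the extremal instance of \Cref{lemma:propagation_error}. Importantly, I do \emph{not} need $c_\ell$ to be the optimal candidate, because $\min_{a}\SC(a) \le \SC(c_\ell)$ already gives $\distortion(\DR) \ge \SC(c_1)/\SC(c_\ell)$ on this instance.

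\emph{The execution of $\DR$.} In round $k$ (for $k = 1,\dots,r$) I pair $(c_{r-k+1}, c_{r-k+2})$ — the single still-needed chain pair — so that $c_{r-k+1}$ survives and $c_{r-k+2}$ is eliminated; I pair each of the still-surviving chain candidates $c_1,\dots,c_{r-k}$ with a fresh dummy, which it beats; and I pair the remaining dummies among themselves arbitrarily. A routine induction on $k$ shows that with $m = 2^r$ the number of dummies available at the start of each round is exactly what is needed, that the execution is legal, that after round $r$ the unique survivor is $c_1$, and that the elicited edge set consists of the chain edges $c_i \succ c_{i+1}$ for $i = 1,\dots,r$ (a chain on $c_1,\dots,c_{r+1} = c_\ell$, i.e.\ a path of length $r$ emanating from $c_1$) plus some $c_i \succ \text{dummy}$ and $\text{dummy} \succ \text{dummy}$ edges. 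By \Cref{lemma:propagation_error}, every metric realizing this chain already satisfies $\SC(c_1) \le (2\ell-1)\SC(c_\ell) = (2r+1)\SC(c_\ell)$, so the most the adversary can do is make this essentially tight.

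\emph{The extremal metric — the main obstacle.} What remains, and what I expect to be the crux, is to construct a metric realizing a domination chain $c_1 \succ c_2 \succ \dots \succ c_\ell$ (``$c_i$ beats $c_{i+1}$'' meaning at least half of the voters lie closer to $c_i$ than to $c_{i+1}$) for which $\SC(c_1)/\SC(c_\ell)$ is arbitrarily close to $2\ell - 1$ — in other words, to show that \Cref{lemma:propagation_error} is essentially tight — and then to append the dummies at a point $Z$ whose distance to every voter and every $c_i$ exceeds the diameter of the chain instance (so that each $c_i$ beats each dummy unanimously and the dummies have enormous, hence never-minimal, social cost). The delicate point is the tension between the two demands placed on the voters that certify a comparison: the half that witnesses ``$c_i$ beats $c_{i+1}$'' must be \emph{closer to $c_i$}, whereas to push $\SC(c_1)$ up to the full factor $2\ell-1$ one needs voters that are \emph{far from $c_1$}. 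A naïve two-cluster construction (one cluster to tip every comparison, one cluster near $c_\ell$) only reaches the factor $3$, independently of $\ell$; the tight construction instead threads the needle with a suitably scaled family of roughly $\ell$ voter clusters and candidate positions whose consecutive distances grow geometrically along the chain, with each majority comparison engineered to be a ``barely tied'' $\tfrac12$/$\tfrac12$ split. This is precisely the extremal instance behind Kempe's propagation lemma \cite{DBLP:conf/aaai/000120a}, and checking that all $\ell-1$ majorities stay correctly oriented while the telescoped cost of $c_1$ reaches $2\ell-1$ is the technical heart of the argument.

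\emph{Putting it together.} With $\ell = \lceil \log m\rceil + 1$ (and $m$ a power of two for the clean statement), the metric above gives $\distortion(\DR) \ge \SC(c_1)/\SC(c_\ell) \to 2\ell - 1 = 2\lceil\log m\rceil + 1 = 2\log m + 1$, matching \Cref{theorem:pairwise_comparisons} up to the treatment of the ceiling and the limiting error term. Everything except the construction of the extremal chain metric in the third paragraph is bookkeeping once that instance is in hand.
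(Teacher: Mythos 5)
Your overall architecture matches the paper's: a chain of $\ell = \log m + 1$ candidates witnessing the tightness of \Cref{lemma:propagation_error}, the remaining $m-\ell$ candidates pushed arbitrarily far away, and a legal sequence of pairings under which $\DR$ resolves one chain link per round so that the worst endpoint of the chain survives (the paper orients the chain the other way, with candidate $i$ defeating $i-1$ and $c_\ell$ winning, but that is pure relabeling). The execution bookkeeping and the treatment of the dummies are fine.

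The genuine gap is the step you yourself flag as ``the technical heart'': you never construct the metric in which a domination chain of length $\ell$ has endpoint cost ratio $2\ell-1$, and your sketch of what that construction must look like points in the wrong direction. You assert that a two-cluster instance is capped at distortion $3$ independently of $\ell$ and that one needs roughly $\ell$ voter clusters with geometrically growing gaps. In fact the paper's tight instance uses exactly \emph{two} voters (extended to any even $n$ by duplication), say at $0$ and $1$ on the line, with the candidates placed alternately to the right and left --- $c_1$ at $0$, $c_2$ at $2$, $c_3$ at $-2$, $c_4$ at $4$, $c_5$ at $-4$, and so on --- so that each consecutive comparison is an exact tie for one of the two voters and a loss at the other; under adversarial tie-breaking, each candidate in the chain defeats its predecessor with exactly half the votes. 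A direct computation gives $\SC(i) = 2i-1$, hence the ratio is exactly $2\ell-1$ with no limiting argument and no geometric scaling. Your ``naïve two-cluster'' objection fails because it implicitly assumes a single fixed cluster must tip every comparison, whereas the tight instance alternates which of the two voters certifies each link. Until you exhibit this (or an equivalent) explicit metric and verify the $\ell-1$ majorities, the proof is incomplete.
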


\begin{proof}
We will first show that the bound established in \Cref{lemma:propagation_error} is tight. Indeed, consider a set of $\ell$ candidates $\{1, 2, \dots, \ell\}$ and two voters (the instance directly extends to an arbitrary even number of voters) positioned according to the pattern of \Cref{fig:sub1}. Observe that---at least under some tie-breaking---candidate $i$ pairwise-defeats candidate $i-1$ for $i = 2, 3, \dots, \ell$. Moreover, it follows that $\SC(i) = 2i - 1$, for all $i$, implying that $\SC(\ell)/\SC(1) = 2\ell - 1$, as desired.

Now consider $m$ candidates such that $m$ is a power of $2$. We first consider $\ell = \log m + 1$ candidates positioned according to our previous argument (\Cref{fig:sub1}); the rest of the candidates are located arbitrarily far from the voters. It is easy to see that there exists a sequence of pairings (\Cref{fig:sub2}) such that $c_{\ell}$ will be declared victorious, leading to a distortion of $2 \log m + 1$ by virtue of our previous argument.
\end{proof}

\begin{figure}[!ht]
\centering
\begin{subfigure}{.5\textwidth}
  \centering
  \includegraphics[scale=0.3]{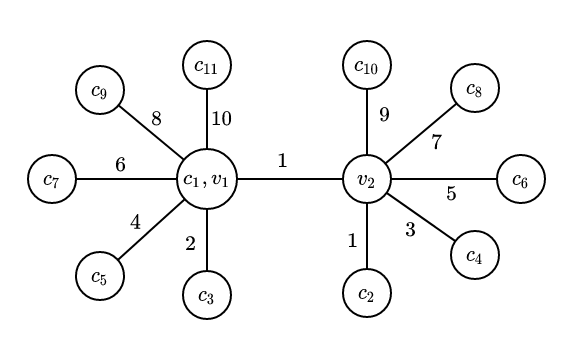}
  \caption{A metric embedding of voters and candida-\\tes establishing that \Cref{lemma:propagation_error} is tight.}
  \label{fig:sub1}
\end{subfigure}%
\begin{subfigure}{.5\textwidth}
  \centering
  \includegraphics[scale=0.3]{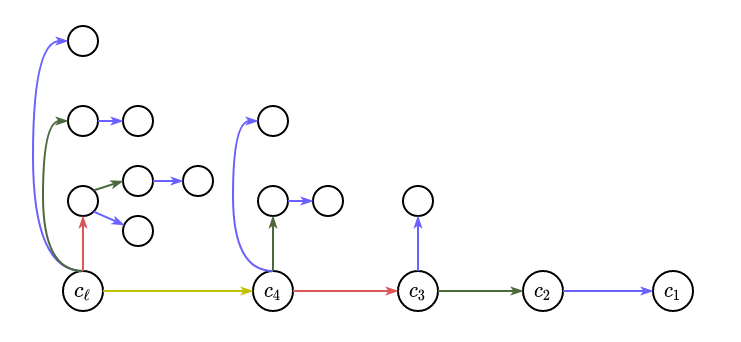}
  \caption{A sequence of pairings such that $c_{\ell}$ emerges victorious. We have highlighted with different colors pairings that correspond to different rounds.}
  \label{fig:sub2}
\end{subfigure}
\label{fig:test}
\end{figure}

\section{Distortion of Deterministic Rules under Incomplete Orders}

We commence this section with another useful lemma by Kempe \cite{DBLP:conf/aaai/000120a}.

\begin{lemma}[\cite{DBLP:conf/aaai/000120a}]
    \label{lemma:approx-transitive}
Consider three distinct candidates $w, y, x \in C$ such that $|wy| \geq \alpha n > 0$ and $|yx| \geq \alpha n > 0$. Then, 

\begin{equation}
    \frac{\SC(w)}{\SC(x)} \leq \frac{2}{\alpha} + 1.
\end{equation}
\end{lemma}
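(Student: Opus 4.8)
The plan is to reduce the claim to a single statement about the distance between the two candidates, namely $d(c_w, c_x) \le \frac{2}{\alpha n}\,\SC(x)$. Once this is in hand, summing the triangle inequality $d(v_i, c_w) \le d(v_i, c_x) + d(c_x, c_w)$ over all $n$ voters gives $\SC(w) \le \SC(x) + n\,d(c_w, c_x) \le (1 + 2/\alpha)\,\SC(x)$, as wanted. To control $d(c_w, c_x)$ I would work with the two sets $W := wy = \{i : w \succ_i y\}$ and $Y := yx = \{i : y \succ_i x\}$, each of size at least $\alpha n$. For a voter $q \in W$ and a voter $p \in Y$, walk from $c_w$ to $c_x$ through $v_q$ and $v_p$: the preference $w \succ_q y$ gives $d(c_w, v_q) \le d(v_q, c_y)$, the preference $y \succ_p x$ gives $d(v_p, c_y) \le d(v_p, c_x)$, and $d(v_q, c_y) \le d(v_q, v_p) + d(v_p, c_y) \le d(v_q, c_x) + 2\,d(v_p, c_x)$ after one more triangle inequality through $c_x$; combining, $d(c_w, c_x) \le 2\,d(v_q, c_x) + 2\,d(v_p, c_x)$. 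When $q = p$ can be taken in $W \cap Y$ the detour through $c_y$ collapses and this improves to $d(c_w, c_x) \le 2\,d(v_q, c_x)$.

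The hard part is squeezing out the sharp constant. Choosing $q$ to minimize $d(\cdot, c_x)$ over $W$ and, independently, $p$ to minimize it over $Y$, only yields $d(v_q, c_x), d(v_p, c_x) \le \SC(x)/(\alpha n)$, hence $d(c_w, c_x) \le 4\,\SC(x)/(\alpha n)$ and distortion roughly $4/\alpha$ — a factor of two too large. Since the bound $2/\alpha + 1$ is in fact tight (take the $W$-voters co-located with $c_x$ and the $Y$-voters all at a common positive distance from it), this factor must be recovered, and I would do so by a case distinction on the overlap $W \cap Y$. If some $i \in W \cap Y$ satisfies $d(v_i, c_x) \le \SC(x)/(\alpha n)$, the sharpened estimate above finishes immediately. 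Otherwise every voter of $W \cap Y$ is strictly farther than $\SC(x)/(\alpha n)$ from $c_x$; since their total contribution to $\SC(x)$ is at most $\SC(x)$, this forces $|W \cap Y| < \alpha n$, so $W \setminus Y$ and $Y \setminus W$ are disjoint, nonempty, and each of size at least $\alpha n - |W \cap Y| > 0$. Picking $q \in W \setminus Y$ and $p \in Y \setminus W$ closest to $c_x$ and averaging, $d(v_q, c_x) + d(v_p, c_x)$ is at most the $\SC(x)$-mass on $W \triangle Y$ divided by $\alpha n - |W \cap Y|$; but that mass is at most $\SC(x) - \sum_{i \in W \cap Y} d(v_i, c_x) < \SC(x)\cdot\frac{\alpha n - |W \cap Y|}{\alpha n}$, so $d(v_q, c_x) + d(v_p, c_x) < \SC(x)/(\alpha n)$ and therefore $d(c_w, c_x) < 2\,\SC(x)/(\alpha n)$, completing the reduction.

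I would also note that this somewhat case-heavy bookkeeping is precisely what the LP-duality framework of \cite{DBLP:conf/aaai/000120a} automates: one can instead write the worst-case value of $\SC(w)/\SC(x)$ subject to the metric constraints and the two majority-margin constraints $|wy| \ge \alpha n$, $|yx| \ge \alpha n$ as a linear program and simply exhibit a dual solution of value $2/\alpha + 1$, which both proves the inequality and makes its tightness transparent. The main obstacle, in either route, is exactly the same: an honest accounting of how the voters in $W$ and $Y$ overlap, which is the only place the constant $2$ (rather than $4$) enters.
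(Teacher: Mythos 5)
Your argument is correct, and it is genuinely different from the route the paper takes: the paper does not prove this lemma at all but imports it from Kempe, where it is established via LP duality, whereas you give a direct, self-contained triangle-inequality proof. I checked the key steps. The reduction to $d(c_w,c_x)\le \frac{2}{\alpha n}\SC(x)$ is right, and so is the chain $d(c_w,c_x)\le 2d(v_q,c_x)+2d(v_p,c_x)$ for $q\in W$, $p\in Y$ (and its collapse to $2d(v_q,c_x)$ when $q\in W\cap Y$). The case split is the real content and it works: if no voter of $W\cap Y$ is within $\SC(x)/(\alpha n)$ of $c_x$, then indeed $|W\cap Y|<\alpha n$, both $W\setminus Y$ and $Y\setminus W$ have size at least $t:=\alpha n-|W\cap Y|>0$, the minima over these disjoint sets satisfy $d(v_q,c_x)+d(v_p,c_x)\le \frac{1}{t}\sum_{i\in W\triangle Y}d(v_i,c_x)$, and the mass on $W\triangle Y$ is at most $\SC(x)\cdot t/(\alpha n)$, which cancels the $1/t$ exactly and recovers the constant $2$ rather than $4$. (The only loose end is the degenerate case $\SC(x)=0$, where the ratio is conventionally ignored or one checks directly that $\SC(w)=0$ as well; your inequalities still force $d(c_w,c_x)=0$ there.) What each approach buys: Kempe's LP-duality framework produces this and a family of similar ``chain'' bounds (such as \Cref{lemma:propagation_error}) uniformly and certifies tightness by exhibiting a dual solution, while your combinatorial accounting of the overlap $W\cap Y$ makes transparent exactly where the factor $2/\alpha$ comes from and requires no machinery beyond the triangle inequality; it would be a reasonable appendix proof if the paper wanted to be self-contained.
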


In particular, notice that if $w$ is the winner in Copeland's rule it follows that for any candidate $x$ there exists some other candidate $y$ such that $w$ pairwise-defeats $y$ and $y$ pairwise-defeats $x$ \cite{10.2307/41105842}; thus, applying \Cref{lemma:approx-transitive} for $\alpha = 1/2$ implies that the winner in Copeland's rule has distortion upper-bounded by $5$, as initially articulated in \cite{DBLP:conf/aaai/AnshelevichBP15}.

As a warm-up, we first employ this lemma to characterize the distortion when for all pairs of candidates at least a small fraction of voters has provided their pairwise preferences. We stress that all of our upper bounds are attainable by the $\minimax$ rule, but nonetheless our proofs are constructive in the sense that we provide (efficient) mechanisms which obtain the desired bounds.

\begin{proposition}
    \label{proposition:balanced}
    Consider an election $\mathcal{E} = (V, C, \mathcal{P})$ such that for every pair of distinct candidates $a, b \in C$, it holds that $\sum_{i=1}^n \mathbbm{1} \left\{ (a,b) \in \mathcal{P}_i \lor (b,a) \in \mathcal{P}_i \right\} \geq \alpha \cdot n$. Then, there exists a voting rule which obtains distortion at most $4/\alpha + 1$.
\end{proposition}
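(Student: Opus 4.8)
\emph{Proof plan.} The idea is to generalise the textbook argument that Copeland's winner has distortion $5$: from the reported preferences I would build a complete tournament on $C$ in which every arc is endorsed by at least an $\tfrac{\alpha}{2}$-fraction of the electorate, output a ``king'' of this tournament, and then play the one-step and two-step reachability cases off against \Cref{lemma:approx-transitive}.

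\emph{Step 1 (the tournament).} For each unordered pair $\{a,b\}$ let $p_{ab}$ and $p_{ba}$ count the voters $i$ with $(a,b)\in\mathcal{P}_i$ and $(b,a)\in\mathcal{P}_i$, respectively; the hypothesis gives $p_{ab}+p_{ba}\ge\alpha n$, so the larger of the two is at least $\alpha n/2$. Orient each pair towards its (weak) majority, breaking ties arbitrarily, to obtain a tournament $T'$ on $C$. By construction, whenever $(a,b)$ is an arc of $T'$ at least $\alpha n/2$ voters reported $a\succ b$, hence under any metric $d\triangleright\mathcal{P}$ at least $\alpha n/2$ voters are weakly closer to $c_a$ than to $c_b$, i.e.\ $|ab|\ge\alpha n/2>0$. (When $\alpha=1$ this is exactly the majority tournament of \Cref{section:pairwise_comparisons}.) The mechanism returns a vertex $w$ of maximum out-degree in $T'$; since a maximum-out-degree vertex of any tournament is a \emph{king}, $w$ reaches every other candidate along a directed path of length at most $2$. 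This rule depends only on $\mathcal{P}$ and is computable in $\poly(n,m)$ time.

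\emph{Step 2 (distortion).} Fix any $d\triangleright\mathcal{P}$ and any $x\neq w$. If $(w,x)$ is an arc of $T'$, then $|wx|\ge\alpha n/2$, and a routine triangle-inequality computation---bound $d(c_w,c_x)\le\tfrac{2}{|wx|}\SC(x)$ by averaging $d(c_w,c_x)\le d(v_i,c_w)+d(v_i,c_x)\le 2\,d(v_i,c_x)$ over $i\in wx$, then $\SC(w)\le\SC(x)+(n-|wx|)\,d(c_w,c_x)$---gives $\SC(w)\le\big(\tfrac{2n}{|wx|}-1\big)\SC(x)\le\big(\tfrac{4}{\alpha}-1\big)\SC(x)$. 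Otherwise some $y$ carries arcs $(w,y),(y,x)$ of $T'$; then $w,y,x$ are distinct with $|wy|\ge\alpha n/2>0$ and $|yx|\ge\alpha n/2>0$, so \Cref{lemma:approx-transitive} applied with parameter $\alpha/2$ yields $\SC(w)/\SC(x)\le\tfrac{2}{\alpha/2}+1=\tfrac{4}{\alpha}+1$. Either way $\SC(w)\le(4/\alpha+1)\SC(x)$; choosing $x$ to minimise social cost and taking the supremum over all consistent metrics bounds the distortion by $4/\alpha+1$.

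\emph{Main obstacle.} The only point not directly covered by the stated lemmas is the single-arc case, since \Cref{lemma:propagation_error} and \Cref{lemma:approx-transitive} are phrased for chains of length $\ge 2$. I would dispatch it with the elementary estimate $\SC(w)\le(2n/|wx|-1)\SC(x)$ sketched above, which is standard in this literature and, conveniently, never exceeds the two-step bound, so it does not weaken the final constant. Everything else is bookkeeping: the existence of a king, and the fact that $w,y,x$ are genuinely distinct in the two-step case (immediate, since $w\neq x$ is given and $w\to y\to x$ in a tournament forces $y\notin\{w,x\}$).
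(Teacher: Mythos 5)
Your proposal is correct and follows essentially the same route as the paper: threshold the reported pairwise counts at $\alpha n/2$ to obtain a tournament on $C$, take a king vertex, and invoke \Cref{lemma:approx-transitive} for the two-step case. The only difference is that you explicitly dispatch the one-step case with the elementary $\bigl(\tfrac{2n}{|wx|}-1\bigr)$ bound (the paper leaves this implicit), which is a harmless and in fact slightly more careful treatment.
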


\begin{proof}
Consider a complete, weighted and directed graph $G = (C, E, w)$ such that 

\begin{equation}
 w_{a,b} = \frac{\sum_{i=1}^n \mathbbm{1} \{(a,b) \in \mathcal{P}_i \}}{n}.
\end{equation}

In words, $w_{a,b}$ represents the fraction of voters who \emph{certainly} prefer $a$ to $b$; observe that if we had the complete rankings it would follow that $w_{a,b} + w_{b,a} = 1$, but here we can only say that $w_{a,b} + w_{b, a} \leq 1$. Moreover, by assumption we know that $w_{a,b} + w_{b,a} \geq \alpha$, implying that $w_{a,b} \geq \alpha/2$ or $w_{b,a} \geq \alpha/2$. With that in mind, we construct from $G$ an unweighted and directed graph $\widehat{G} = (C, \widehat{E})$ according to the following threshold rule: $(a,b) \in \widehat{E} \iff w_{a,b} \geq \alpha/2$.
We argued that our assumption implies that $(a,b) \in \widehat{E} \lor (b,a) \in \widehat{E}$. As a result, we can deduce that $\widehat{G}$ contains as a subgraph a tournament; thus, there exists a \emph{king} vertex $w$ so that every node $a \in C$ is reachable from $w$ in at most $2$ steps, and our claim follows from \Cref{lemma:approx-transitive}.
\end{proof}

We remark that this upper bound is tight up to constant factors, at least for certain instances. Indeed, if we only have an $\alpha$ fraction of the votes in the presence of $2$ candidates it is easy to show an $\Omega(1/\alpha)$ lower bound for any mechanism, even if we allow randomization. Interestingly, \Cref{proposition:balanced} suggests one possible preference elicitation strategy: collect the information about the preferences in a ``balanced'' manner.

\subsection{Missing Voters}

Consider an election $\mathcal{E} = (V, C, \mathcal{P})$ and a mechanism which has access to the votes of only a subset $V \setminus Q$ of voters, where $Q \subset V$ is the set of \emph{missing voters} such that $|Q| = \epsilon \cdot n$. If the mechanism performs well on $V \setminus Q$ can we characterize the distortion over the entire set of voters as $\epsilon$ increases? Observe that this setting is tantamount to $\mathcal{P}_i = \emptyset$ for all $i \in Q$. In the following theorem we provide a sharp bound:

\begin{theorem}
    \label{theorem:missing_voters}
Consider a mechanism with distortion at most $\ell$ w.r.t. an arbitrary subset with $(1-\epsilon)$ fraction of all the voters, for some $\epsilon \in (0,1)$. Then, the distortion of the mechanism w.r.t. the entire population is upper-bounded by 

\begin{equation}
    \ell + \frac{\epsilon}{1 - \epsilon} (\ell + 1).
\end{equation}
\end{theorem}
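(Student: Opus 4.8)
The plan is to set up the comparison between three quantities: $\SC(\cdot)$, the social cost over the entire population $V$; $\SC_{V\setminus Q}(\cdot)$, the social cost restricted to the $(1-\epsilon)$-fraction of voters that the mechanism sees; and $\SC_Q(\cdot)$, the social cost over the missing $\epsilon$-fraction $Q$. Let $w$ be the candidate returned by the mechanism and let $o$ be a social-cost minimizer over all of $V$. By hypothesis, running the mechanism on the sub-electorate $V\setminus Q$ gives distortion at most $\ell$, so $\SC_{V\setminus Q}(w)\le \ell\cdot\SC_{V\setminus Q}(o^*)\le \ell\cdot\SC_{V\setminus Q}(o)$, where $o^*$ is the optimum over $V\setminus Q$ and the second inequality is just optimality of $o^*$ on that sub-electorate. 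So the first step is: decompose $\SC(w)=\SC_{V\setminus Q}(w)+\SC_Q(w)$ and bound the first term by $\ell\,\SC_{V\setminus Q}(o)$.

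The second step is to control the ``bad'' term $\SC_Q(w)$, the cost of the chosen winner on the unseen voters. This is where the triangle inequality enters: for any missing voter $i\in Q$, $d(v_i,c_w)\le d(v_i,c_o)+d(c_o,c_w)$, and in turn $d(c_o,c_w)\le d(c_o,v_j)+d(v_j,c_w)$ for any voter $j$. Averaging the second inequality over $j\in V\setminus Q$ gives $d(c_o,c_w)\le \frac{1}{(1-\epsilon)n}\big(\SC_{V\setminus Q}(o)+\SC_{V\setminus Q}(w)\big)$. Summing $d(v_i,c_w)\le d(v_i,c_o)+d(c_o,c_w)$ over the $\epsilon n$ voters in $Q$ then yields
\begin{equation*}
\SC_Q(w)\le \SC_Q(o)+\frac{\epsilon n}{(1-\epsilon)n}\big(\SC_{V\setminus Q}(o)+\SC_{V\setminus Q}(w)\big)=\SC_Q(o)+\frac{\epsilon}{1-\epsilon}\big(\SC_{V\setminus Q}(o)+\SC_{V\setminus Q}(w)\big).
\end{equation*}
Now substitute $\SC_{V\setminus Q}(w)\le \ell\,\SC_{V\setminus Q}(o)$ into this, add back $\SC_{V\setminus Q}(w)$ to reassemble $\SC(w)$, and collect terms: everything is now expressed in terms of $\SC_{V\setminus Q}(o)$ and $\SC_Q(o)$, both of which are at most $\SC(o)=\min_{a}\SC(a)$ (in fact they sum to exactly it). The coefficient of $\SC_{V\setminus Q}(o)$ works out to $\ell+\frac{\epsilon}{1-\epsilon}(\ell+1)$ and the coefficient of $\SC_Q(o)$ is $1$, which is dominated by the former since $\ell\ge 1$; bounding both costs by $\SC(o)$ and taking the larger coefficient gives the claimed ratio $\ell+\frac{\epsilon}{1-\epsilon}(\ell+1)$.

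The main obstacle, and the step to be careful with, is the averaging argument that bounds the inter-candidate distance $d(c_o,c_w)$: one must route through voters in $V\setminus Q$ (not $Q$, whose costs we have no handle on), and the averaging is what converts the absolute distance into something comparable to per-voter social costs, picking up exactly the $\frac{1}{1-\epsilon}$ factor that produces the $\frac{\epsilon}{1-\epsilon}$ in the bound. A secondary subtlety is making sure the bookkeeping of coefficients is done so that no slack is wasted — in particular recognizing that $\SC_{V\setminus Q}(o)$ and $\SC_Q(o)$ partition $\SC(o)$, so one should assign as much weight as possible to the smaller coefficient; but since we only want an upper bound matching the statement, simply bounding each piece by $\SC(o)$ and using the dominant coefficient suffices. (Tightness of the bound, asserted as ``sharp,'' would presumably be shown by a separate one-dimensional construction, but the theorem statement only requires the upper bound.)
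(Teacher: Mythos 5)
Your proposal is correct, and its core is the same as the paper's: both proofs hinge on bounding the winner's cost on the unseen voters via $d(v_i,c_w)\le d(v_i,c_o)+d(c_o,c_w)$ and then controlling the inter-candidate distance by averaging the triangle inequality over the \emph{seen} voters, which gives $(1-\epsilon)n\,d(c_o,c_w)\le \SC_{V\setminus Q}(o)+\SC_{V\setminus Q}(w)\le(\ell+1)\SC_{V\setminus Q}(o)$ — exactly the paper's step $(n-|Q|)d(c_a,c_b)\le S_a+S_b\le(\ell+1)S_a$. Where you genuinely diverge is in how the final bound is assembled: the paper manipulates the ratio $\SC(b)/\SC(a)$ directly via a mediant-type inequality, which forces a case split on whether $S_b\ge S_a$ (the inequality $S_b\,d(v_i,c_a)+S_a\,d(c_a,c_b)\ge S_a(d(v_i,c_a)+d(c_a,c_b))$ needs $S_b\ge S_a$), whereas your additive decomposition $\SC(w)\le K\,\SC_{V\setminus Q}(o)+\SC_Q(o)$ with $K=\ell+\frac{\epsilon}{1-\epsilon}(\ell+1)$, followed by $K\,X+Y\le K(X+Y)=K\,\SC(o)$ using $K\ge 1$, needs no cases at all and is arguably cleaner. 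One phrasing to fix in the write-up: ``bounding each piece by $\SC(o)$'' is not literally what you want (that would yield $K+1$); the correct step is the one you state just before it, namely that $\SC_{V\setminus Q}(o)$ and $\SC_Q(o)$ \emph{partition} $\SC(o)$ and both coefficients are at most $K$.
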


\begin{proof}
Consider a candidate $b \in C$ with distortion at most $\ell$ w.r.t. the agents in $V \setminus Q$. Moreover, consider some arbitrary candidate $a \in C$, and let $S_b = \sum_{i \in V \setminus Q} d(v_i, c_b)$, and $S_a = \sum_{i \in V \setminus Q} d(v_i, c_a)$; observe that (by assumption) $S_b/S_a \leq \ell$. Our analysis will distinguish between the following two cases:

\paragraph{Case I:} $S_b \geq S_a > 0$.\footnote{The case where $S_a = 0$ can be trivially handled. Indeed, it implies that $S_b \leq \ell \times S_a = 0$, which in turn yields that $d(c_a, c_b) = 0$; thus, $\SC(a) = \SC(b)$.} Then, for all $i \in Q$ it follows that 

\begin{equation}
    S_b d(v_i, c_a) + S_a d(c_a, c_b) \geq S_a (d(v_i, c_a) + d(c_a, c_b)) \geq S_a d(v_i, c_b), 
\end{equation}
and hence,

\begin{equation}
    S_a d(v_i, c_b) \leq S_a d(c_a, c_b) + S_b d(v_i, c_a) + d(c_a, c_b) d(v_i, c_a);
\end{equation}
summing over all $i \in Q$ gives

\begin{align}
    \label{eq:missing-caseI-1}
    S_a \sum_{i \in Q} d(v_i, c_b) &\leq |Q| S_a d(c_a, c_b) + S_b \sum_{i \in Q} d(v_i, c_a) + d(c_a, c_b) \sum_{i \in Q} d(v_i, c_a) \notag \\
     &\leq |Q| S_a d(c_a, c_b) + S_b \sum_{i \in Q} d(v_i, c_a) + |Q| d(c_a, c_b) \sum_{i \in Q} d(v_i, c_a).
\end{align}
Moreover, observe that 

\begin{equation}
    \label{eq:missing-caseI-2}
    \eqref{eq:missing-caseI-1} \iff \frac{S_b + \sum_{i \in Q} d(v_i, c_b)}{S_a + \sum_{i \in Q}d(v_i, c_a)} \leq \frac{S_b + |Q| d(c_a, c_b)}{S_a}.
\end{equation}
Next, we have that $d(c_a, c_b) \leq d(v_i, c_a) + d(v_i, c_b), \forall i$; summing over all $i \in V \setminus Q$ implies that $(n - |Q|) d(c_a, c_b) \leq S_a + S_b \leq (\ell + 1) S_a$. Therefore, along with \eqref{eq:missing-caseI-2} we obtain that 

\begin{equation}
    \frac{\SC(b)}{\SC(a)} \leq \ell + \frac{|Q|}{n - |Q|} (\ell + 1) = \ell + \frac{\epsilon}{1 - \epsilon} (\ell + 1).
\end{equation}

\paragraph{Case II:} $S_b < S_a$. In this case we can simply observe that 

\begin{equation}
    \frac{\SC(b)}{\SC(a)} \leq \frac{S_b + \sum_{i \in Q} d(v_i, c_a) + |Q| d(c_a, c_b)}{S_a + \sum_{i \in Q} d(v_i, c_a)} \leq 1 + |Q| \frac{d(c_a, c_b)}{S_a}.
\end{equation}
Thus, the proof follows given that $(n - |Q|) d(c_a, c_b) \leq S_a + S_b < 2 S_a$.
\end{proof}

A few remarks are in order. First of all, if all the voters in the set $V \setminus Q$ had provided their entire rankings we would derive a similar result via \Cref{proposition:balanced}, but nonetheless \Cref{theorem:missing_voters} gives a more precise characterization. Indeed, we claim that the derived bound is tight. For example, consider an instance on the line with only two candidates $a, b$, so that every candidate receives half of the votes among the voters in $V \setminus Q$; assume without loss of generality that $a$ is selected as the winning candidate, having distortion $3$ w.r.t. the voters in $V \setminus Q$. However, we have to accept that $(1 - \epsilon)/2$ fraction of the voters could reside in the midpoint $(c_a + c_b)/2$, while the rest of the agents could all lie in $c_b$; thus, the distortion of candidate $a$ is 

\begin{equation}
    \frac{\SC(a)}{\SC(b)} = 
    \frac{\frac{1-\epsilon}{2} \frac{d(c_a, c_b)}{2} + \frac{1-\epsilon}{2} d(c_a, c_b) + \epsilon d(c_a, c_b)}{\frac{1-\epsilon}{2} \frac{d(c_a, c_b)}{2}} = 3 + 4 \frac{\epsilon}{1 - \epsilon},
\end{equation}
which matches our derived bound in \Cref{theorem:missing_voters}.

\subsection{Top Preferences}

In this subsection we investigate how the distortion increases when every voter provides only her $k$-top preferences, for some parameter $k \in [m]$. It should be noted that the two extreme cases are well understood. Specifically, when $k = m$ the mechanism has access to the entire rankings and we know that any deterministic mechanism has distortion at least $3$, which is also the upper bound established in \cite{DBLP:conf/focs/Gkatzelis0020}. On the other end of the spectrum, when $k = 1$ the plurality rule---which incidentally is the optimal deterministic mechanism when only the top preference is given---yields distortion at most $2m - 1$ \cite{DBLP:conf/aaai/AnshelevichBP15}. Consequently, the question is quantify the decay of distortion as we gradually increase $k$. We commence by reminding a lower bound given by Kempe \cite{DBLP:conf/aaai/000120b}:

\begin{proposition}
    Any deterministic mechanism which elicits only the $k$-top preferences from every voter out of the $m$ alternatives has distortion $\Omega(m/k)$.
\end{proposition}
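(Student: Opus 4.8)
The plan is to produce a single $k$-top preference profile $\mathcal{P}$ together with a whole family of metrics consistent with it, arranged so that no matter which candidate a deterministic rule commits to, some metric in the family makes that candidate $\Omega(m/k)$ times worse than the optimum. Since a deterministic rule $f$ outputs one fixed candidate $w=f(\mathcal{P})$ once $\mathcal{P}$ is fixed, and the adversary may then choose the metric, this is enough; note in particular that we do \emph{not} need $\mathcal{P}$ to be symmetric.

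Concretely, assume $k \le m/3$ so that $t:=m/k \ge 3$ (for larger $k$ the claimed bound is $\Omega(1)$ and already follows from the unconditional lower bound of $3$ for deterministic rules), and for simplicity that $k\mid m$ and $t\mid n$. Partition the candidates into $t$ groups $G_1,\dots,G_t$ of size $k$ and the voters into $t$ equal blocks $V_1,\dots,V_t$; every voter in $V_j$ reports her $k$-top preferences to be exactly the candidates of $G_j$, in one fixed internal order. The point of $\mathcal{P}$ is that it carries no information comparing candidates of different groups, and in particular nothing about which candidate sits just below a given voter's top $k$. For an ordered pair of distinct indices $(s,s')$ and a parameter $\epsilon>0$, I would then take the additive tree metric $d^{(s,s',\epsilon)}$ that places all of $V_{s'}\cup G_{s'}$ at a root $x_0$ and, for $j\ne s'$, all of $V_j\cup G_j$ at a leaf $x_j$ with $d(x_0,x_j)=\rho_j$, where $\rho_s=1$ and $\rho_j=\epsilon$ for every remaining index. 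A tree metric automatically obeys the triangle inequality, and a one-line check shows that in it the $k$ nearest candidates of each voter in $V_j$ are exactly those of $G_j$; hence $d^{(s,s',\epsilon)}\triangleright\mathcal{P}$, and \emph{all} of these metrics are consistent with the \emph{same} $\mathcal{P}$.

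To finish, let $w=f(\mathcal{P})\in G_s$, pick any $s'\ne s$ (possible since $t\ge2$) and any $o\in G_{s'}$. A routine computation in $d^{(s,s',\epsilon)}$ gives $\SC(o)=\frac{n}{t}\big(1+(t-2)\epsilon\big)$ and $\SC(w)=\frac{n}{t}\big((t-1)+(t-2)\epsilon\big)$, and one checks that every other candidate has social cost at least $\SC(o)$, so the optimum equals $\SC(o)$. Since distortion is a supremum over consistent metrics, $\distortion(f;\mathcal{P})\ge\sup_{\epsilon>0}\SC(w)/\SC(o)=t-1=m/k-1=\Omega(m/k)$. The only genuinely conceptual step is realizing that one punishing metric per possible output of $f$ suffices — a symmetric instance is not needed — while verifying $d^{(s,s',\epsilon)}\triangleright\mathcal{P}$ and evaluating the two social costs is elementary; I would flag the former as the main ``obstacle''. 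Finally, a slightly more elaborate placement (a dedicated optimum $o$ at the root, each $G_j$ pushed out to distance $2R_j$ with $R_s=1$ and the other $R_j\to0$) drives the same argument to distortion $\to 2m/k-1$, matching the best known lower bound up to additive constants.
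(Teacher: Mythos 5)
Your construction is correct and is essentially the paper's: both use pairwise-disjoint top-$k$ blocks so that the profile reveals nothing across blocks, and then an adversarial star-like metric that collapses every non-winner block onto a single near-optimal alternative. The only substantive difference is that the paper reserves one extra candidate $x$ ranked by nobody to serve as the optimum, which directly yields the sharper constant $2(m-1)/k-1$, whereas your main variant takes the optimum from another voter's block and settles for $m/k-1$ (still $\Omega(m/k)$), recovering the factor of $2$ only in your closing remark.
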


More precisely, the best lower bound is $2 m/k$, ignoring some additive constant factors; for completeness we provide a proof in \Cref{appendix:lower_bound}. In the following theorem we come closer to matching this lower bound.

\begin{theorem}
    \label{theorem:k_top-upper_bound}
    There exists a deterministic mechanism which elicits only the $k$-top preferences from every voter out of $m$ candidates and has distortion at most $6 m/k + 1$.
\end{theorem}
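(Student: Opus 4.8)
The plan is to mimic the structure of the king-vertex argument in \Cref{proposition:balanced}, but to work with a ``weak'' tournament that only records pairwise dominations with a sufficiently large margin, and then to invoke \Cref{lemma:approx-transitive} with a suitably small $\alpha$. Concretely, for a parameter $\beta$ to be tuned (I expect $\beta = k/(3m)$ or thereabouts), define a directed graph $\widehat{G} = (C, \widehat{E})$ by putting $(a,b) \in \widehat{E}$ whenever the number of voters who rank $a$ strictly above $b$ \emph{and} have $b$ among their top-$k$ (so that this preference is actually revealed) is at least $\beta n$. The mechanism then outputs a vertex $w$ that reaches every other candidate in $\widehat{G}$ by a path of length at most $2$; if such a $w$ exists, \Cref{lemma:approx-transitive} applied with $\alpha = \beta$ along the two edges of the path gives $\SC(w)/\SC(x) \le 2/\beta + 1$ for every $x$, which is $6m/k + 1$ for $\beta = k/(3m)$.

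The crux is therefore to prove that $\widehat{G}$ always admits such a ``king'' $w$. Since $\widehat{G}$ need not be a tournament (for a pair $a,b$ it is possible that neither $a$ beats $b$ nor $b$ beats $a$ by a $\beta n$ margin in revealed votes), I cannot quote the classical king-in-a-tournament fact directly. The key counting observation I would establish is: for \emph{every} ordered pair $(a,b)$ with no edge in $\widehat{G}$ in either direction, the fact that each voter reveals $k$ of her top positions forces a lot of ``intermediate'' structure. More precisely, fix any candidate $a$ and let $N^-(a)$ be the set of candidates $b$ with $(a,b) \notin \widehat{E}$ and $(b,a) \notin \widehat{E}$ together with those $b$ with $(b,a) \in \widehat{E}$ — i.e.\ the candidates $a$ does \emph{not} dominate. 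The goal is to show that if $|N^-(a)|$ is large for the ``best'' candidate $a$, we reach a contradiction via an averaging argument over the $kn$ revealed (voter, candidate) incidences: each $b \in N^-(a)$ that is not dominated by $a$ must itself be ranked in the top-$k$ by at least $(1-\beta)n$ voters who put it above $a$ — but there is only a budget of $kn$ such ``top-$k$ slots,'' so at most roughly $k/(1-\beta) \approx k$ candidates can simultaneously be ``popular against $a$.'' Choosing $a$ to maximise (say) the number of voters ranking it first, or more carefully the candidate with the largest out-degree in $\widehat{G}$, should then let me conclude that every non-dominated-by-$a$ candidate is dominated by some candidate that $a$ dominates, giving the length-$2$ reachability.

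The step I expect to be the real obstacle is making this ``budget of $kn$ revealed slots'' argument tight enough to get the clean constant $6m/k+1$ rather than something lossier; a naive union/pigeonhole bound tends to lose a factor of $2$. I anticipate needing to be careful about which candidate to pick as the root $w$ — picking the plurality winner among top-$1$ votes, or iterating the king-selection on the sub-tournament induced by heavily-revealed pairs — and about how ties in the top-$k$ truncation interact with the margin threshold. I would also double-check the edge case $k = m$ (where the bound should degrade gracefully, though it is of course not tight there) and small $k$, and verify that the resulting mechanism is computable in $\poly(n,m)$ time (it is: build $\widehat{G}$, then search for a length-$2$ king, both polynomial); alternatively I can simply remark that $\minimax$ attains at least as good a bound, with the explicit construction serving as the efficient witness. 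Finally, I would note that the length-$2$ path is what pins the constant at $2/\beta+1$; the earlier $12m/k$ bound of Kempe used length-$3$ paths and hence a factor $3/\beta+1$-type loss, so isolating ``every root reaches everything in $2$ hops'' is exactly the quantitative improvement.
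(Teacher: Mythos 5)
Your setup coincides exactly with the paper's: the directed graph $\widehat{G}$ whose edges record revealed pairwise margins of at least $k/(3m)$, a vertex reaching every other candidate in at most two hops, and \Cref{lemma:approx-transitive} with $\alpha = k/(3m)$ yielding $2/\alpha + 1 = 6m/k+1$. But the entire technical content of the theorem is the existence of such a two-hop king in the (non-tournament) graph $\widehat{G}$, and that is precisely the step you leave at ``should then let me conclude.'' Moreover, the counting you sketch toward it is quantitatively wrong: if $b$ is not dominated by $a$, it does \emph{not} follow that $b$ is ranked in the top-$k$ by $(1-\beta)n$ voters who put it above $a$. If $(b,a)\in\widehat{E}$ you only get $\beta n = kn/(3m)$ such voters; and if \emph{neither} edge is present, the correct conclusion runs in the opposite direction — at most a $2k/(3m)$ fraction of voters can have $a$ (or $b$) in their top-$k$ at all, because any voter who lists either one necessarily reveals the $a$-vs-$b$ comparison. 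So the ``budget of $kn$ slots'' does not bound the number of candidates escaping two-hop domination, and choosing $a$ as the plurality winner or the maximum-out-degree vertex is not shown (and is not known) to work.

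The paper's proof makes the observation above the engine of the argument rather than an obstacle. It partitions $C$ into $C'$ (candidates involved in some edgeless pair) and $C^*=C\setminus C'$; the induced graph on $C^*$ contains a tournament and hence has a classical king $w$. Every $a\in C'$ satisfies $\mathcal{V}_a\le 2k/(3m)$, while pigeonhole produces some $x$ with $\mathcal{V}_x\ge k/m$; hence $x\in C^*$, and $x$ dominates every $a \in C'$ directly, since at least a $k/m-2k/(3m)=k/(3m)$ fraction of voters list $x$ but not $a$ and therefore reveal $x\succ a$. If $w=x$ or $(w,x)\in\widehat{E}$ the king $w$ wins; otherwise a further two-case analysis (on whether $x$ dominates all of $w$'s out-neighborhood $C_1^*$) shows that the right winner is either $x$ itself or a specific candidate $b\in C'$ — note that the eventual winner need not lie in $C^*$ and is neither the plurality winner nor an out-degree maximizer. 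None of this structure is recoverable from your averaging sketch, so as written the proposal establishes only the reduction to ``find a two-hop king,'' not the theorem itself.
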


Before we proceed with the proof it is important to point out that having only the $k$-top preferences is not subsumed by our previous consideration in \Cref{proposition:balanced}; e.g., even if $k = m-2$ there could be two candidates which lie on the last two positions of every voter's list, and hence, it is impossible to know which one is mostly preferred among the voters. 

\begin{proof}[Proof of \Cref{theorem:k_top-upper_bound}]
Let $\mathcal{L}_i$ be the set with the $k$-top preferences of voter $i$. For a candidate $a \in C$ we let

\begin{equation}
\mathcal{V}_a = \frac{\sum_{i=1}^n \mathbbm{1}\{ a \in \mathcal{L}_i \}}{n};
\end{equation}
i.e. the fraction of voters for which $a$ is among the $k$-top preferences. Notice that $\sum_{a \in C} \mathcal{V}_a = k$, and hence, by the pigeonhole principle there exists some candidate $x$ such that $\mathcal{V}_x \geq k/m$. Similarly to \Cref{proposition:balanced} we consider the weighted, complete and directed graph $G = (C, E, w)$, so that 

\begin{equation}
    w_{a,b} = \frac{\sum_{i=1}^n \mathbbm{1} \{(a,b) \in \mathcal{P}_i \}}{n}.
\end{equation}
Moreover, based on $G$ we construct the unweighted and directed graph $\widehat{G} = (C, \widehat{E})$, so that $(a, b) \in \widehat{E} \iff w_{a,b} \geq k/(3m)$; the constant $1/3$ in the threshold was selected as the largest number which makes the following argument work. In particular, we will show that $\widehat{G}$ has a king vertex, and then the theorem will follow by virtue of \Cref{lemma:approx-transitive}. 

Let $C' = \{ a \in C : \exists b \in C \setminus \{a\}. (a,b) \notin \widehat{E} \land (b,a)\notin \widehat{E} \}$ and $C^* = C \setminus C'$. Observe that the \emph{induced} subgraph on $C^*$ contains as a subgraph a tournament, and as such, it has a king vertex $w \in C^*$ (we will argue very shortly that indeed $C^* \neq \emptyset$). As a result, if $C' = \emptyset$ the theorem follows.

In the contrary case notice that $C'$ contains at least two (distinct) nodes; let $a, b \in C'$ be such that $(a,b) \notin \widehat{E} \land (b,a) \notin \widehat{E}$. An important observation is that $\mathcal{V}_a \leq 2k/(3m)$ and $\mathcal{V}_b \leq 2k/(3m)$. Indeed, for the sake of contradiction let us assume that $\mathcal{V}_a > 2k/(3m)$. Given that $(a,b) \notin \widehat{E}$ we can infer that $b$ is preferred over $a$ in at least a $k/(3m)$ fraction of the voters; however, this would imply that $(b,a) \in \widehat{E}$, which is a contradiction. Similarly, we can show that $\mathcal{V}_b \leq 2k/(3m)$. Consequently, $x$ cannot belong in the set $C'$, where recall that $x$ is a candidate for which $\mathcal{V}_x \geq k/m$, verifying that $C^* \neq \emptyset$.

Next, it is easy to see that for all $a \in C', (x,a) \in \widehat{E}$; this follows since $\mathcal{V}_a \leq 2k/(3m)$ for all $a \in C'$ while $\mathcal{V}_x \geq k/m$. As a result, if $x = w$ or if there exists the edge $(w, x) \in \widehat{E}$ then $w$ can reach every node in at most $2$ steps, and the theorem follows. Otherwise, it follows that there exists a path of length $2$ from $w$ to $x$ since $w$ is a king vertex in the induced subgraph on $C^*$ and $x \in C^*$. We shall distinguish between two cases.

First, assume that for all $z \in C_1^*, (x, z) \in \widehat{E}$, where $C_1^*$ is the subset of $C^*$ which is reachable from $w$ via a single edge. Then, given that we have assumed that $(w,x) \notin \widehat{E}$ and the induced graph on $C^*$ is a tournament, it follows that $(x,w) \in \widehat{E}$ and subsequently $x$ can reach every node in $C$ in paths of length at most $2$, as desired.

Finally, assume that there exists some $y \in C_1^*$ such that $(x,y) \notin \widehat{E}$. This implies that $y$ is preferred over $x$ in at least a $2k/(3m)$ fraction of the voters. If for every candidate $a \in C'$ it holds that $(w,a) \in \widehat{E}$ or $(z, a) \in \widehat{E}$ for some $z \in C_1^*$, we can conclude that $w$ can reach every node in $\widehat{G}$ in at most $2$ steps, again reaching the desired conclusion. On the other hand, assume that there exists $b \in C'$ such that $(w,b) \notin \widehat{E}$ and $(z,b) \notin \widehat{E}$ for all $z \in C_1^*$. By the definition of the set $C^*$ we can infer that $(b,w) \in \widehat{E}$ and $(b,z) \in \widehat{E}$ for all $z \in C_1^*$. Moreover, we know that from all of the votes candidate $y$ received candidate $b$ was below in at most a $k/(3m)$ fraction (over all the voters); otherwise it would follow that $(y,b) \in \widehat{E}$. As a result, since $y$ is preferred over $x$ in at least a $2k/(3m)$ fraction of the voters we can conclude that $(b,x) \in \widehat{E}$, in turn implying that $b$ can reach every node in $\widehat{G}$ in paths of length at most $2$, concluding the proof.
\end{proof}

\begin{figure}[!ht]
    \centering
    \includegraphics[scale=0.35]{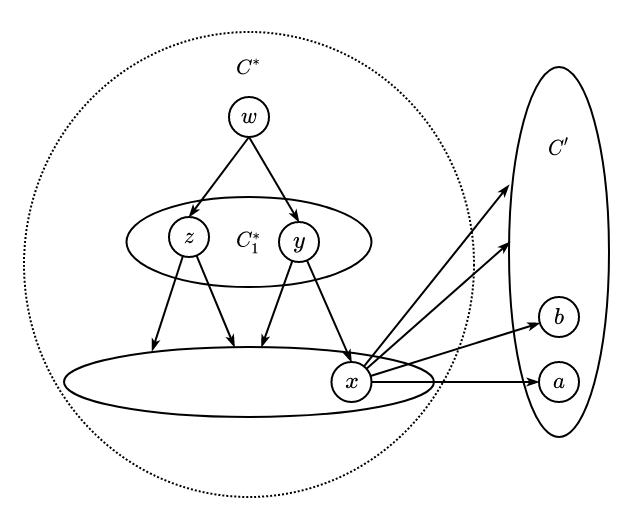}
    \caption{The anatomy of our proof for \Cref{theorem:k_top-upper_bound}. The set of candidates is partitioned into a ``good'' set $C^*$ and a ``bad'' set $C'$; $C^*$ has a king vertex $w$, and we can essentially apply the reasoning of \Cref{proposition:balanced}. A key observation is that $C'$ is always dominated by some node in $C^*$, namely $x$. }
    \label{fig:k_top-proof}
\end{figure}

Notably, we have shown that if $k = \gamma \cdot m$ for some $\gamma \in (0,1)$, the distortion is at most $6/\gamma + 1$. Our analysis substantially improves over the previous best-known bound which was $12m/k$ \cite{DBLP:conf/aaai/000120b,DBLP:conf/aaai/000120a}, but nonetheless there is still a gap between the aforementioned lower bound. Before we conclude this section, we explain how one can further improve upon the bound obtained in \Cref{theorem:k_top-upper_bound}.

\begin{conjecture}
    \label{conjecture:extension}
If we assume that every agent provides her $k$-top preferences for some $k \in [m]$, there is a candidate $a \in C$ and a subset $S \subseteq V$ such that 
\begin{itemize}
    \item There exists a perfect matching $M : S \mapsto S$ in the integral domination graph of $a$ (see \Cref{definition:integral_domination_graph} in the next section);
    \item $|S| \geq n \times k/m$.
\end{itemize}
\end{conjecture}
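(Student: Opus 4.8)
The plan is to reduce \Cref{conjecture:extension} to a combinatorial statement about directed cycle covers and then to run the matching argument of Gkatzelis et al.\ \cite{DBLP:conf/focs/Gkatzelis0020} on a carefully chosen \emph{sub-electorate}. For a candidate $a$ let $A_a = \{ i \in V : a \in \mathcal{L}_i \}$ be the set of voters who rank $a$ among their $k$ top alternatives, and let $H_a$ be the digraph on $V$ with an arc $i \to j$ whenever $a \succeq_i \topp(j)$. A perfect matching $M : S \mapsto S$ in the integral domination graph of $a$ (\Cref{definition:integral_domination_graph}) is precisely a partition of $S$ into vertex-disjoint directed cycles of $H_a$ (length-one loops allowed at voters topping $a$). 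The first observation is that whether an arc $i \to j$ can be \emph{certified} from the $k$-top reports depends only on $i$: since $\topp(j)$ is always observed, the arc is certifiable exactly when $a \in \mathcal{L}_i$, and on $A_a$ the certifiable arcs coincide with the true arcs of $H_a$. Hence it suffices to produce a candidate $a$ and a set $S \subseteq A_a$ with $|S| \ge nk/m$ such that $H_a[S]$ has a spanning cycle cover. Note that we are thereby constrained to candidates with $|A_a| \ge nk/m$; at least one exists because $\sum_{a \in C} |A_a| = \sum_{i \in V} |\mathcal{L}_i| = nk$, so some $a$ satisfies $|A_a| \ge nk/m$ by the pigeonhole principle (this is the same $a$ as the candidate $x$ in the proof of \Cref{theorem:k_top-upper_bound}).

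The core step is the strengthening: there is a candidate $a$ with $|A_a| \ge nk/m$ for which $H_a[A_a]$ \emph{itself} admits a spanning cycle cover, so that $S = A_a$ works. By the deficiency form of Hall's theorem this fails for a fixed $a$ only if some family of candidates $D \subseteq C \setminus \{a\}$ is deficient, namely
\[
\bigl| \{ i \in A_a : c \succ_i a \text{ for all } c \in D \} \bigr| \;>\; |A_a| - \bigl| \{ j \in A_a : \topp(j) \in D \} \bigr| ,
\]
i.e.\ strictly more voters of $A_a$ rank every candidate of $D$ above $a$ than there are voters of $A_a$ whose top choice lies outside $D$. Assuming toward a contradiction that every candidate $a$ with $|A_a| \ge nk/m$ carries such a $D_a$, the plan is to mirror the minimal-counterexample / potential analysis of \cite{DBLP:conf/focs/Gkatzelis0020}: take a deficient set minimal in an appropriate sense, locate inside it a candidate $b$ that ``should'' have been a valid winner, and contradict the deficiency of $D_b$, all while book-keeping through the identity $\sum_c |A_c| = nk$. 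Should this strengthening be false, the fallback is to keep the chosen $a$ and shrink $S$: delete from $A_a$ a minimum set of voters making the remainder cycle-coverable, and bound its size via the deficiency of $H_a[A_a]$ together with the fact that removing one voter destroys at most the cycle through it, aiming to retain $|S| \ge nk/m$.

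A possibly cleaner, equivalent route is to phrase the existence of $(a,S)$ as the feasibility of a linear program, in the spirit of Kempe's duality framework \cite{DBLP:conf/aaai/000120a}: introduce fractional indicators $y_i$ for membership of $i$ in $S$ and flow variables for a cyclic (fractional) matching supported on the arcs of $H_a$, impose $\sum_i y_i \ge nk/m$ together with ``in-flow $=$ out-flow $= y_i$'' conservation, and dualize. The bipartite matching polytope being integral, a fractional solution rounds to the desired $S$; the payoff of duality is that ``every candidate has an obstruction'' collapses to a single weighted inequality over candidates and voters, which one then refutes using $\sum_a |A_a| = nk$ and the elementary triangle-inequality estimates already exploited in \Cref{lemma:approx-transitive} and \Cref{theorem:missing_voters}.

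The main obstacle, common to all of these routes, is the coupling between the choice of $a$ and the constraint $|A_a| \ge nk/m$. In the full-information setting Gkatzelis et al.\ may select \emph{any} candidate, and their argument is perfectly happy to pass to a candidate $b$ with a tiny $A_b$; such a $b$ is worthless here, since a matching on $S \subseteq A_b$ can never reach size $nk/m$. Moreover, one cannot simply restrict a spanning cycle cover of $H_a$ on the full electorate down to $A_a$: if $a$ appears in exactly the ``even'' voters' top-$k$ lists and the cover happens to be a single Hamiltonian cycle alternating between even and odd voters, its restriction to $A_a$ has no arcs at all. The cycle structure on $A_a$ must therefore be produced intrinsically, and the crux is to show that among the candidates guaranteed only \emph{in aggregate} by $\sum_a |A_a| = nk$ to have $|A_a| \ge nk/m$, one can always be found for which $H_a[A_a]$ meets the Hall condition. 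Adapting the balancing argument of \cite{DBLP:conf/focs/Gkatzelis0020} so that it respects this size constraint is exactly the step we have not managed to carry out, which is why the statement is stated as a conjecture.
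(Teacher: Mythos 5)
The statement you are trying to prove is stated in the paper as \Cref{conjecture:extension} precisely because no proof is known: the authors verify it only at the two endpoints ($k=m$, via Gkatzelis et al.\ \cite{DBLP:conf/focs/Gkatzelis0020}, and $k=1$, via the plurality winner) and explicitly list settling it as an open problem. So there is no proof in the paper to compare yours against, and your proposal does not supply one either: as you concede in your final paragraph, the core step --- exhibiting a candidate $a$ with $|A_a| \geq nk/m$ for which $H_a[A_a]$ satisfies the Hall/deficiency condition, or bounding the deficiency well enough that the surviving set still has size $nk/m$ --- is exactly the part that is missing. Everything before that point is preparatory reformulation (the cycle-cover view of a matching $M:S\mapsto S$, the certifiability of arcs leaving $A_a$, the pigeonhole candidate already used in the proof of \Cref{theorem:k_top-upper_bound}); none of it engages the actual difficulty, which you correctly identify as the coupling between the choice of $a$ and the size constraint on $A_a$.

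Two further cautions on the routes you sketch. First, by insisting that $S \subseteq A_a$ you are attacking a strictly stronger statement than \Cref{conjecture:extension}: the integral domination graph of \Cref{definition:integral_domination_graph} is defined with respect to the true rankings, and a voter $i$ with $a \notin \mathcal{L}_i$ may still satisfy $a \succeq_i \topp(j)$ and hence be usable in $S$. (The restriction is natural if you also want the mechanism of \Cref{proposition:conditional} to certify the matching from the reported ballots, but it shrinks your search space and may itself be what blocks the balancing argument.) Second, the fallback of ``delete a minimum set of voters and bound it by the deficiency'' is shakier than it looks: a maximum matching of size $t$ in the bipartite graph on $A_a \times A_a$ need not yield a set $S$ of size $t$ admitting a perfect matching $M:S\mapsto S$, because the matched left-vertices and matched right-vertices can differ; in the functional-digraph picture the matching decomposes into cycles \emph{and open paths}, and only the cycles can be retained (for instance, a single arc $1\to 2$ with no loops gives maximum matching $1$ but no nonempty cycle-coverable $S$). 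Any quantitative version of the fallback must control the path components, which your sketch does not do. In short, the reformulations are sound and the plan is reasonable, but the conjecture remains unproved, consistent with its status in the paper.
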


When $k = m$ this conjecture was shown to be true by Gkatzelis et al. \cite{DBLP:conf/focs/Gkatzelis0020}. On the other end of the spectrum, when $k = 1$ it is easy to verify that the plurality winner establishes this conjecture. 

\begin{proposition}
    \label{proposition:conditional}
    If \Cref{conjecture:extension} holds, then there exists a deterministic mechanism which elicits only the $k$-top preferences and yields distortion at most $4 m/k - 1$.
\end{proposition}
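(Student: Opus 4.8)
The plan is to exhibit a deterministic mechanism and then bound its distortion through \Cref{theorem:missing_voters}. The mechanism I have in mind is the natural $k$-top analogue of plurality matching: for every candidate $a \in C$ it considers the integral domination graph of $a$ (\Cref{definition:integral_domination_graph}), which under $k$-top preferences is determined by the top choice of each voter together with the relative ranking of $a$, and it identifies a candidate $a^\star$ together with a subset $S := S_{a^\star} \subseteq V$ of size $|S| \geq n k/m$ admitting a perfect matching $M : S \mapsto S$ in the integral domination graph of $a^\star$. Such a pair is guaranteed to exist by \Cref{conjecture:extension}, and a candidate pair $(a, S)$ can be verified via a bipartite-matching computation, so a valid (and efficient, by iterating over candidates as in \cite{DBLP:conf/focs/Gkatzelis0020}) deterministic mechanism exists.

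The first step of the analysis is to show that $a^\star$ has distortion at most $3$ restricted to the sub-electorate $S$, i.e.\ $\sum_{j \in S} d(v_j, c_{a^\star}) \leq 3 \sum_{j \in S} d(v_j, c_x)$ for every $x \in C$. This localizes the distortion-$3$ argument of \cite{DBLP:conf/focs/Gkatzelis0020}: for $j \in S$ write $i := M^{-1}(j) \in S$; the matching edge encodes that voter $j$ ranks $a^\star$ at least as high as $\topp(i)$, i.e.\ $d(v_j, c_{a^\star}) \leq d(v_j, c_{\topp(i)})$, and combining this with the triangle inequality and with $d(v_i, c_{\topp(i)}) \leq d(v_i, c_x)$ (as $\topp(i)$ is $i$'s closest candidate) gives $d(v_j, c_{a^\star}) \leq d(v_j, c_x) + 2\, d(v_i, c_x)$. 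Summing over $j \in S$ and using that $j \mapsto M^{-1}(j)$ is a bijection of $S$ yields the factor-$3$ bound.

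It then remains to lift this guarantee from $S$ to the whole electorate. Applying \Cref{theorem:missing_voters} with $\ell = 3$ and missing set $Q := V \setminus S$, we have $\epsilon = |Q|/n = 1 - |S|/n \leq 1 - k/m$, and since $\ell + \frac{\epsilon}{1-\epsilon}(\ell + 1)$ is increasing in $\epsilon$ the distortion of $a^\star$ over all $n$ voters is at most
\[
    3 + 4 \cdot \frac{\epsilon}{1 - \epsilon} \;\leq\; 3 + 4\left(\frac{m}{k} - 1\right) \;=\; \frac{4m}{k} - 1 ,
\]
which is the claimed bound.

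The point to stress is that the genuinely hard part is \emph{not} in this proof at all: all the combinatorial content --- the existence, under $k$-top preferences, of a large self-matchable set in some candidate's integral domination graph --- is assumed via \Cref{conjecture:extension}, which is exactly why the statement is conditional. The only two places that warrant care are that the factor-$3$ estimate really does need $M$ to be a bijection of the \emph{same} set $S$ appearing on both sides of the bipartite graph (the precise form the conjecture provides, and a property a generic maximum matching does not have), and the sanity check that the integral domination graph, hence the mechanism, can be built from the $k$-top lists alone, so that the reduction through \Cref{theorem:missing_voters} is legitimate.
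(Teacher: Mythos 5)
Your proposal is correct and follows essentially the same route as the paper: take the candidate and set $S$ guaranteed by \Cref{conjecture:extension}, invoke the distortion-$3$ guarantee of \cite{DBLP:conf/focs/Gkatzelis0020} restricted to $S$, and then lift to the full electorate via \Cref{theorem:missing_voters} with $\epsilon = 1 - |S|/n \leq 1 - k/m$. The only difference is that you spell out the matching-based factor-$3$ argument (with a harmless swap of the roles of $i$ and $M(i)$ relative to \Cref{definition:integral_domination_graph}) where the paper simply cites it.
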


\begin{proof}
Let $a \in C$ be the candidate which satisfies \Cref{conjecture:extension}. Then, it follows that $a$ yields distortion at most $3$ w.r.t. the voters in the set $S$ \cite{DBLP:conf/focs/Gkatzelis0020}. As a result, \Cref{theorem:missing_voters} implies that the distortion of $a$ is upper-bounded by 

\begin{equation}
    3 + 4 \frac{n - |S|}{|S|} \leq \frac{4m}{k} - 1.
\end{equation}

\end{proof}

\section{Randomized Preference Elicitation \& Sampling}

Previously we characterized the distortion when only a deterministically (and potentially adversarially) selected subset of voters has provided information to the mechanism. This raises the question of bounding the distortion when the mechanism elicits information from only a small \emph{random sample} of voters. Here a single sample corresponds to the \emph{entire} ranking of a voter. We stress that randomization is only allowed during the preference elicitation process; for any given random sample as input the mechanism has to select a candidate \emph{deterministically}. We commence this section with a simple lower bound, which essentially follows from a standard result by Canetti et al. \cite{DBLP:journals/ipl/CanettiEG95}.

\begin{proposition}
    \label{proposition:lower_bound-sample_complexity}
    Any mechanism which yields distortion at most $3 + \epsilon$ with probability at least $1 - \delta$ requires $\Omega(\log(1/\delta)/\epsilon^2)$ samples, even for $m=2$.
\end{proposition}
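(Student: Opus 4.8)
The plan is to use a two-point (Le Cam–style) argument with $m=2$ candidates on the real line, and then reduce to the classical sampling lower bound of Canetti, Even and Goldreich~\cite{DBLP:journals/ipl/CanettiEG95} for estimating the average of a collection of boolean values. Fix two candidates $a,b$ with $d(c_a,c_b)=1$, take $n$ arbitrarily large, and set $\gamma = \epsilon/8$. Since $m=2$, the only information any voter reveals is a single bit, ``$a\succ b$'' or ``$b\succ a$''. Consider two populations: in $I_a$ exactly a $\tfrac12+\gamma$ fraction of the voters report $a\succ b$ (the rest report $b\succ a$), and $I_b$ is the mirror image with the roles of $a$ and $b$ swapped. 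A uniform sample of $s$ voters is then (up to a total-variation error that vanishes as $n\to\infty$, hence irrelevant to a bound depending only on $\epsilon,\delta$) a string of $s$ i.i.d.\ $\mathrm{Ber}(\tfrac12+\gamma)$ bits under $I_a$ and of $s$ i.i.d.\ $\mathrm{Ber}(\tfrac12-\gamma)$ bits under $I_b$, and the mechanism's (deterministic) winner rule is a function of this string.

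The first step --- the only place the metric enters --- is to show that a mechanism guaranteeing distortion at most $3+\epsilon$ must, on $I_a$, output $a$ (and on $I_b$, output $b$). Here I would prove the elementary fact that if a $p$ fraction of the voters report $a\succ b$, then the worst-case distortion of returning $a$ equals exactly $\tfrac{2-p}{p}$: the upper bound follows from $\SC(b)\ge \tfrac12\,p\,n$ (each $a$-voter is at distance at least $\tfrac12$ from $c_b$ by the triangle inequality, since its distance to $c_a$ is no larger) together with $\SC(a)\le \SC(b)+(1-p)n$, and it is attained by placing the $a$-voters at the midpoint of $c_ac_b$ and the $b$-voters at $c_b$. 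Plugging in $p=\tfrac12+\gamma$ shows returning $a$ on $I_a$ is safe (distortion at most $3$), whereas returning $b$ on $I_a$ means a $\tfrac12-\gamma$ fraction report $b\succ a$, so the worst-case distortion is $\tfrac{3/2+\gamma}{1/2-\gamma}$, and our choice $\gamma=\epsilon/8>\tfrac{\epsilon}{2(4+\epsilon)}$ makes this strictly larger than $3+\epsilon$. Hence a mechanism with the claimed guarantee, run on a sample from $I_a$ (resp.\ $I_b$), must output $a$ (resp.\ $b$) with probability at least $1-\delta$.

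The second step packages this as a sampling lower bound: the composition ``draw $s$ voters, then apply the winner rule'' is a procedure that decides between $\mathrm{Ber}(\tfrac12+\gamma)$ and $\mathrm{Ber}(\tfrac12-\gamma)$ --- equivalently, estimates the fraction of $a$-supporters to within additive $\gamma$ --- with error probability at most $\delta$ on each side. The theorem of Canetti et al.~\cite{DBLP:journals/ipl/CanettiEG95} states that any such procedure uses $\Omega(\gamma^{-2}\log(1/\delta))$ samples, and since $\gamma=\Theta(\epsilon)$ this is $\Omega(\log(1/\delta)/\epsilon^2)$, as claimed. The main obstacle is really the bookkeeping of the first step --- establishing the exact worst-case ratio $\tfrac{2-p}{p}$ and pinning down how large $\gamma$ (as a function of $\epsilon$) must be so that the ``wrong'' candidate is genuinely excluded by the $3+\epsilon$ bound; once that threshold $\gamma=\Theta(\epsilon)$ is fixed, the rest is the off-the-shelf coin-distinguishing lower bound, and the $\log(1/\delta)$ factor (which a crude Pinsker-type estimate would miss) is exactly what Canetti et al.\ supply.
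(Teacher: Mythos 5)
Your proposal is correct and follows essentially the same route as the paper's proof: a two-candidate instance with majority margin $\Theta(\epsilon)$, the observation that a $3+\epsilon$ distortion guarantee forces the mechanism to identify the majority winner, and a reduction to the Canetti--Even--Goldreich winner-determination (coin-distinguishing) lower bound. The paper's version uses margin exactly $\epsilon$ (a $(1-\epsilon)/2$ fraction preferring the minority candidate) and leaves the worst-case distortion computation and the two-instance packaging implicit, whereas you spell both out; the substance is the same.
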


\begin{proof}
Consider two candidates $a, b$, and assume that exactly $(1 - \epsilon)/2$ fraction of the voters prefer candidate $a$. It is easy to verify that $a$ yields distortion strictly larger than $3 + \epsilon$; thus, any mechanism with distortion at most $3 + \epsilon$ has to return candidate $b$. However, we know from \cite{DBLP:journals/ipl/CanettiEG95} that the \emph{winner determination} problem with \emph{margin} $\epsilon$ requires $\Omega(\log(1/\delta)/\epsilon^2)$ samples, concluding the proof.
\end{proof}

\subsection{Approximating Copeland}

Our main result in this subsection is the following:

\begin{theorem}
    \label{theorem:sample-complexity}
    For any $\epsilon \in (0,4]$ and $\delta \in (0,1)$ there exists a mechanism which takes a sample of size $c = \Theta(\log(m/\delta)/\epsilon^2)$ voters and yields at most $5 + \epsilon$ distortion with probability at least $1 - \delta$.
\end{theorem}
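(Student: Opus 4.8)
The plan is to estimate all pairwise margins from a uniform random sample and then output a king of the resulting \emph{empirical} tournament, arguing that this king already enjoys near-Copeland distortion because each empirical edge certifies a near-majority win among the true voters. Concretely, I would draw $c$ voters i.i.d.\ uniformly at random (with replacement), and for each ordered pair $(a,b)$ of candidates let $\hat{w}_{a,b}$ be the fraction of sampled voters ranking $a$ above $b$ (well defined, since a sample is an entire ranking). Since $\hat{w}_{a,b}$ is an average of $c$ i.i.d.\ Bernoulli variables with mean $w_{a,b} = |ab|/n$, Hoeffding's inequality gives $\Pr[\,|\hat{w}_{a,b} - w_{a,b}| > \epsilon'\,] \le 2e^{-2c\epsilon'^2}$, and a union bound over the $\binom{m}{2}$ unordered pairs shows that for $c = \Theta(\log(m/\delta)/\epsilon'^2)$ we have, with probability at least $1-\delta$, $|\hat{w}_{a,b} - w_{a,b}| \le \epsilon'$ simultaneously for all pairs. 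Call this the \emph{good event}; the rest of the argument is deterministic and conditions on it.

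On the good event, form the empirical tournament $\widehat{T}$ on $C$ by putting $(a,b) \in \widehat{T}$ whenever $\hat{w}_{a,b} \ge 1/2$ (breaking the rare exact ties arbitrarily, which is harmless below; alternatively take $c$ odd so no ties occur). The mechanism outputs a king vertex $\hat{w}$ of $\widehat{T}$ --- e.g.\ the empirical Copeland winner, which is always a king --- so that every candidate $x$ is reachable from $\hat{w}$ in at most two edges of $\widehat{T}$. The key point is that every edge $(a,b) \in \widehat{T}$ gives $\hat{w}_{a,b} \ge 1/2$, hence $w_{a,b} \ge 1/2 - \epsilon'$, i.e.\ at least $(1/2 - \epsilon')n$ of the \emph{true} voters prefer $a$ to $b$. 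Thus for any $x \neq \hat{w}$ there is an intermediate $y$ with $|\hat{w}y| \ge (1/2-\epsilon')n$ and $|yx| \ge (1/2-\epsilon')n$, and Lemma~\ref{lemma:approx-transitive} with $\alpha = 1/2 - \epsilon'$ yields $\SC(\hat{w})/\SC(x) \le \frac{2}{1/2 - \epsilon'} + 1 = \frac{4}{1 - 2\epsilon'} + 1$. (If $x$ is reached by a single edge, the one-hop specialization of the same triangle-inequality computation gives the strictly smaller bound $\frac{4}{1-2\epsilon'} - 1$, and if $x = \hat{w}$ the ratio is $1$; so these cases only help, and in a genuine length-$2$ path the three candidates are automatically distinct as Lemma~\ref{lemma:approx-transitive} requires.)

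It remains to pick $\epsilon'$ as a constant multiple of $\epsilon$. A short computation shows $\frac{4}{1-2\epsilon'} + 1 \le 5 + \epsilon$ precisely when $\epsilon' \le \frac{\epsilon}{8 + 2\epsilon}$, and since $\epsilon \le 4$ the choice $\epsilon' = \epsilon/16$ works while also keeping $\alpha = 1/2 - \epsilon' > 0$, as the lemma demands. Substituting $\epsilon' = \Theta(\epsilon)$ into the sample-size bound gives $c = \Theta(\log(m/\delta)/\epsilon^2)$, which is exactly the claimed bound.

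I do not expect a real obstacle: the proof is essentially ``concentrate the empirical tournament onto the true one, then reuse the classical Copeland/king analysis with a slightly degraded margin $1/2 - \epsilon'$ in place of $1/2$.'' The only delicate points are (i) tracking constants so that the final guarantee is exactly $5 + \epsilon$ and not merely $5 + O(\epsilon)$, which is what forces the explicit choice of $\epsilon'$ above, and (ii) the degenerate cases (exact empirical ties, paths of length $\le 1$, $\hat{w} = x$), all of which strictly improve the bound. As stressed in the introduction, note that this mechanism does \emph{not} attempt to recover the true Copeland winner --- it only guarantees near-optimal distortion --- which is exactly why the sample size can be taken independent of the (possibly tiny) margin of victory.
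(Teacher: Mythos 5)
Your proposal is correct and follows essentially the same route as the paper's proof: sample pairwise margins, apply Hoeffding plus a union bound over all candidate pairs, build the empirical tournament, output a king vertex, and invoke \Cref{lemma:approx-transitive} with the degraded margin $\alpha = 1/2 - \epsilon'$ before rescaling $\epsilon'$ to a constant multiple of $\epsilon$. Your explicit choice $\epsilon' = \epsilon/16$ and the handling of the degenerate cases are slightly more careful than the paper's, but the argument is the same.
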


In particular, the proof essentially analyzes a \emph{sampling approximation} of Copeland's rule, which recall that yields at most $5$ distortion when the entire input is available \cite{DBLP:conf/aaai/AnshelevichBP15}. As a result, it follows that $\widetilde{\Theta}(m/\epsilon^2)$ bits of information (in total) suffice to yield $5 + \epsilon$ distortion with high probability, where the notation $\widetilde{\Theta}(\cdot)$ suppresses poly-logarithmic factors. Before we proceed with the proof we state the following standard fact:

\begin{lemma}[Chernoff-Hoeffding Bound]
    \label{lemma:chernoff}
    Let $\{X_1, X_2, \dots, X_c\}$ be a set of i.i.d. random variables with $X_i \sim \text{Bern}(p)$ and $X_{\mu} = (X_1 + X_2 + \dots + X_c)/c$; then,
    \begin{equation}
        \mathbb{P}(|X_{\mu} - p| \geq \epsilon) \leq 2 e^{-2 \epsilon^2 c}.
    \end{equation}
\end{lemma}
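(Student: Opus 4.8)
The plan is to prove this by the standard exponential-moment (Chernoff) method, obtaining the two-sided statement from a one-sided tail bound together with a union bound — which is exactly where the factor of $2$ comes from. First I would fix a parameter $\lambda > 0$, write $X_\mu - p = \frac{1}{c}\sum_{i=1}^c (X_i - p)$, and apply Markov's inequality to the nonnegative random variable $e^{\lambda \sum_i (X_i - p)}$, using independence to factor the moment-generating function:
\begin{equation}
\mathbb{P}(X_\mu - p \geq \epsilon) = \mathbb{P}\left( \sum_{i=1}^c (X_i - p) \geq c\epsilon \right) \leq e^{-\lambda c \epsilon} \prod_{i=1}^c \E\!\left[ e^{\lambda(X_i - p)} \right].
\end{equation}

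The key step is to bound each factor $\E[e^{\lambda(X_i-p)}]$. Since $X_i - p$ has mean zero and takes values in the interval $[-p,\,1-p]$, which has length exactly $1$, Hoeffding's lemma gives $\E[e^{\lambda(X_i - p)}] \leq e^{\lambda^2/8}$. I would prove this lemma in the usual way: for a mean-zero variable $Y \in [a,b]$, set $\psi(\lambda) = \log \E[e^{\lambda Y}]$; then $\psi(0) = \psi'(0) = 0$, and $\psi''(\lambda)$ equals the variance of $Y$ under the exponentially tilted measure with density proportional to $e^{\lambda Y}$, hence is at most $(b-a)^2/4$ (a variance of a variable supported on an interval of length $b-a$). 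A second-order Taylor expansion then yields $\psi(\lambda) \leq \lambda^2 (b-a)^2 / 8$. Substituting back, $\mathbb{P}(X_\mu - p \geq \epsilon) \leq e^{-\lambda c\epsilon + \lambda^2 c/8}$.

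Finally I would optimize the free parameter: the exponent $-\lambda c\epsilon + \lambda^2 c/8$ is minimized at $\lambda = 4\epsilon$, giving $\mathbb{P}(X_\mu - p \geq \epsilon) \leq e^{-2\epsilon^2 c}$. Running the identical argument with $X_i$ replaced by $-X_i$ bounds the lower tail $\mathbb{P}(X_\mu - p \leq -\epsilon)$ by the same quantity, and a union bound over these two events gives $\mathbb{P}(|X_\mu - p| \geq \epsilon) \leq 2 e^{-2\epsilon^2 c}$, as claimed. The only genuinely nontrivial ingredient is Hoeffding's lemma (the sub-Gaussian MGF bound for a bounded, mean-zero random variable); everything else is a mechanical instantiation of the Chernoff recipe. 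Since the statement is a textbook fact, an acceptable alternative is simply to cite a standard reference rather than reproduce the argument.
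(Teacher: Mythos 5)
Your derivation is correct: the standard Chernoff recipe with Hoeffding's lemma (interval length $1$ for a centered Bernoulli), the optimal choice $\lambda = 4\epsilon$, and a union bound over the two tails indeed yield $2e^{-2\epsilon^2 c}$. The paper gives no proof at all, stating the lemma as a standard fact, so your closing remark that citing a reference suffices is exactly what the authors do; your fleshed-out argument is simply the textbook proof they are implicitly invoking.
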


\begin{proof}[Proof of \Cref{theorem:sample-complexity}]

Consider the complete, weighted and directed graph $G = (C, E, w)$ so that $w_{a,b} = |ab|/n$. We will show how to use the random sample in order to construct a graph $\widehat{G} = (C, E, \widehat{w})$ which approximately preserves the weights of $G$ with high probability. In particular, consider some parameters $\epsilon \in (0,1/2)$ and $\delta \in (0,1)$, and take a sample $S$ of size $|S| = c = \Theta(\log(m/\delta)/\epsilon^2)$ from the set of voters $V$; for simplicity we assume that the sampling occurs \emph{with replacement} in order to guarantee independence, but the result holds even without replacement given that the dependencies are negligible (e.g., see \cite{10.1214/07-AOP384}). Now we let $\widehat{w}_{a,b} = |\{i \in S : a \succ_i b \} |/c$. \Cref{lemma:chernoff} implies that $|\widehat{w}_{a,b} - w_{a,b}| < \epsilon$ with probability at least $1 - \delta/m^2$. Thus, the union bound implies that for all distinct pairs $a, b$ we have approximately preserved the weights: $|\widehat{w}_{a,b} - w_{a,b}| < \epsilon$ with probability at least $1 - \delta$. 

From $\widehat{G}$ we construct the directed graph $T = (C, \widehat{E})$ so that $(a,b) \in \widehat{E} \iff \widehat{w}_{a,b} \geq 1/2$; if $\widehat{w}_{a,b} = \widehat{w}_{b,a}$ for some distinct candidates $a, b \in C$ we only retain one of the edges arbitrarily (this conundrum can be avoided by taking $c$ to be odd). In this way $T$ will be a tournament, and as such, there exists a candidate $w$ which can reach every node in $T$ in at most $2$ steps. Thus, for any $a \in C$ there exists some intermediate candidate $b \in C$ so that $|wb| \geq 1/2 - \epsilon$ and $|ba| \geq 1/2 - \epsilon$. As a result, \Cref{lemma:approx-transitive} implies that the distortion of $w$ is upper-bounded by $4/(1-2\epsilon) + 1 \leq 5 + 16 \epsilon$, for any $\epsilon \leq 1/4$. Finally, rescaling $\epsilon$ by a constant factor concludes the proof.
\end{proof}

\subsection{Approximating \texorpdfstring{$\pluralitymatching$}{}}

In light of \Cref{proposition:lower_bound-sample_complexity} the main question that arises is whether we can asymptotically reach the optimal distortion bound of $3$. To this end, we will analyze a sampling approximation of $\pluralitymatching$, a deterministic mechanism introduced by Gkatzelis et al. \cite{DBLP:conf/focs/Gkatzelis0020} which obtains the optimal distortion bound of $3$. To keep the exposition reasonably self-contained we recall some basic facts about $\pluralitymatching$.

\begin{definition}[\cite{DBLP:conf/focs/Gkatzelis0020}, Definition $5$]
    \label{definition:integral_domination_graph}
    For an election $\mathcal{E} = (V, C, \sigma)$ and a candidate $a \in C$, the integral domination graph of candidate $a$ is the bipartite graph $G(a) = (V, V, E_a)$, where $(i, j) \in E_a$ if and only if $a \succeq_i \topp(j)$.
\end{definition}

\begin{proposition}[\cite{DBLP:conf/focs/Gkatzelis0020}, Corollary $1$]
    \label{proposition:perfect_matching}
    There exists a candidate $a \in C$ whose integral domination graph $G(a)$ admits a perfect matching.
\end{proposition}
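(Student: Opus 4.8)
The plan is to reduce \Cref{proposition:perfect_matching} to a purely combinatorial statement about plurality scores via Hall's theorem, and then to establish that statement for a carefully chosen candidate.

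\emph{Step 1: Hall-theoretic reformulation.} Fix a candidate $a$. The graph $G(a)=(V,V,E_a)$ is balanced, so by Hall's theorem it has a perfect matching iff $|N_{G(a)}(T)|\ge|T|$ for every subset $T$ of the right copy of $V$, where $N_{G(a)}(T)=\{i\in V : \exists j\in T,\ a\succeq_i\topp(j)\}$. The useful observation is that $N_{G(a)}(T)$ depends only on the set $B=\{\topp(j):j\in T\}$ of top-candidates occurring in $T$; replacing $T$ by $T'=\{j\in V:\topp(j)\in B\}\supseteq T$ therefore leaves the neighbourhood unchanged while enlarging $T$, so it suffices to test Hall's condition on such ``closed'' sets. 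Writing $n_b=|\{j\in V:\topp(j)=b\}|$ and $n_B=\sum_{b\in B}n_b$, this shows that $G(a)$ admits a perfect matching iff $|\{i:\exists b\in B,\ a\succeq_i b\}|\ge n_B$ for every nonempty $B\subseteq C$. Since $a\in B$ makes the left-hand side equal to $n$, only sets $B\subseteq C\setminus\{a\}$ can fail; taking complements, the condition becomes $|\{i : b\succ_i a\ \text{for all } b\in B\}|\le n-n_B$ for every nonempty $B\subseteq C\setminus\{a\}$.

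\emph{Step 2: finding the candidate.} It remains to exhibit a candidate $a$ for which this inequality holds for all $B$, and I would first emphasize why the choice is not obvious: the plurality winner need not work. For instance, with three candidates and $40\%$ of voters ranking $a$ first while the remaining $35\%$ and $25\%$ rank $b\succ c\succ a$ and $c\succ b\succ a$ respectively, the set $B=\{b,c\}$ violates the inequality for $a$. Hence single-candidate comparisons do not suffice and one must control the union $\{i:\exists b\in B,\ a\succeq_i b\}$ simultaneously over all $B$. The plan here follows Gkatzelis, Halpern and Shah \cite{DBLP:conf/focs/Gkatzelis0020}: suppose for contradiction that every candidate $a$ admits some violating set, and pick for each $a$ an inclusion-minimal one $B_a\subseteq C\setminus\{a\}$. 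Minimality makes every $b\in B_a$ essential --- deleting it restores Hall's inequality --- which extracts, for each $b\in B_a$, a sizable block of voters ranking $b$ at or below $a$ but every other member of $B_a$ strictly above $a$. Following a cycle in the functional digraph that sends each $a$ to a representative of $B_a$ (a cycle exists because all out-degrees are at least $1$) and summing the corresponding plurality-count inequalities around it yields a contradiction.

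The step I expect to be the main obstacle is exactly this last one: converting the local constraints produced by minimality into a global counting contradiction. This is the technical core of \cite{DBLP:conf/focs/Gkatzelis0020}, and I would follow their bookkeeping there (equivalently, it can be organized as an LP-duality argument on the fractional matching/vertex-cover polytope of $G(a)$). Once such a candidate $a$ is produced, \Cref{proposition:perfect_matching} follows at once from the reformulation of Step 1; indeed this is precisely Corollary~1 of \cite{DBLP:conf/focs/Gkatzelis0020}, whose derivation we are recalling here.
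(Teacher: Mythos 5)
The first thing to note is that the paper does not prove this statement at all: it is imported verbatim as Corollary~1 of Gkatzelis, Halpern, and Shah \cite{DBLP:conf/focs/Gkatzelis0020} and used as a black box, so there is no in-paper argument to compare yours against. Judged on its own terms, your Step~1 is correct and complete: the neighbourhood of a right-vertex $j$ in $G(a)$ depends only on $\topp(j)$, so Hall's condition need only be checked on sets that are closed under ``same top choice,'' which yields exactly the criterion $|\{i : b \succ_i a \ \text{for all } b \in B\}| \le n - \sum_{b \in B} \plu(b)$ for every nonempty $B \subseteq C \setminus \{a\}$; your observation that sets containing $a$ are automatically satisfied is right, and so is your example showing that the plurality winner can violate the condition.

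The gap is Step~2, which is where the entire content of the proposition lives. You assert that choosing inclusion-minimal violating sets $B_a$ and summing the resulting inequalities around a cycle of a map $a \mapsto f(a) \in B_a$ produces a contradiction, but you do not carry this out and you explicitly defer the bookkeeping to \cite{DBLP:conf/focs/Gkatzelis0020}. For what it is worth, your plan does close in the case where every minimal violating set is a singleton: the violation $|\{i : f(a) \succ_i a\}| > n - \plu(f(a))$ means fewer than $\plu(f(a))$ voters weakly prefer $a$ to $f(a)$, and since every voter with top choice $a$ does so, this forces $\plu(a) < \plu(f(a))$, so plurality scores would strictly increase around the cycle --- impossible. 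But when the minimal violating sets are not singletons, the local inequalities extracted from minimality must be combined across different candidates' sets, and that is precisely the combinatorial core of the theorem; naming it as ``the main obstacle'' and citing the source for it does not prove the statement, it re-cites it. Since the paper itself does exactly that, your write-up is at parity with the paper --- a correct reduction plus an appeal to \cite{DBLP:conf/focs/Gkatzelis0020} --- but it should not be mistaken for a self-contained proof.
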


Before we proceed let us first introduce some notation. For this subsection it will be convenient to use numerical values in the set $\{1, 2, \dots, m\}$ to represent the candidates. We let $\Pi_j = \sum_{i \in V} \mathbbm{1} \{ \topp(i) = j \}$, i.e. the number of voters for which $j \in C$ is the top candidate. For candidate $j \in C$ we let $G(j)$ be the integral domination graph of $j$, and $M_j$ be a maximum matching in $G(j)$. In the sequel, it will be useful to ``decompose'' $M_j$ as follows. We consider the partition of $V$ into $V_j^0, V_j^1, \dots, V_j^m$ such that $V_j^k = \{ i \in V : M_j(i) = k \}$ for all $k \in [m]$, while $V_j^0$ represents the subset of voters which remained unmatched under $M_j$; see \Cref{fig:integral_dom}.

\begin{figure}[!ht]
    \centering
    \includegraphics[scale=0.4]{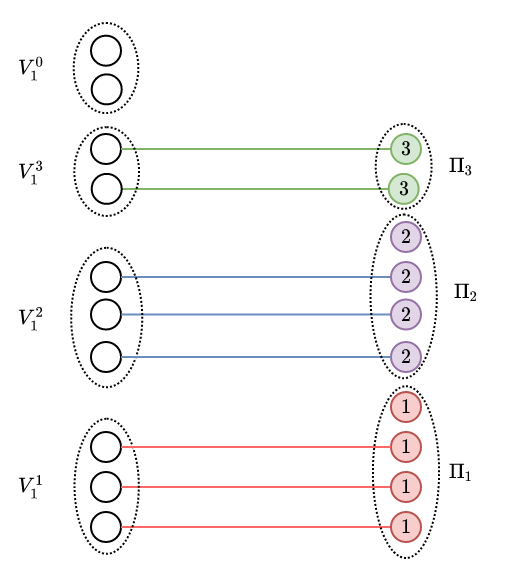}
    \caption{An example of a matching decomposition in the integral domination graph of candidate $1$.}
    \label{fig:integral_dom}
\end{figure}

Moreover, consider a set $\mathcal{S} = S_j^0 \cup S_j^1 \cup \dots \cup S_j^m$ such that $S_j^k \subseteq V_j^k$ for all $k$; we also let $c = |\mathcal{S}|$, and $\Pi_j' = c/n \times \Pi_j$. For now let us assume that $\Pi_j' \in \mathbb{N}$ for all $j$. We let $G^{\mathcal{S}}(j)$ represent the induced subgraph of $G(j)$ w.r.t. the subset $\mathcal{S} \subseteq V$ and the new plurality scores $\Pi_j'$. We start our analysis with the following observation:

\begin{observation}
    \label{observation:sampling_matching}
Assume that $\mathcal{S}$ is such that $|S_j^k|/c = |V_j^k|/n$ for all $k$. Then, if $M_j^{\mathcal{S}}$ represents the maximum matching in $G^{\mathcal{S}}(j)$, it follows that $|M_j^{\mathcal{S}}|/c = |M_j|/n$. 
\end{observation}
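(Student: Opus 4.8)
The plan is to establish the two inequalities $|M_j^{\mathcal{S}}|/c \leq |M_j|/n$ and $|M_j^{\mathcal{S}}|/c \geq |M_j|/n$ separately, exploiting the proportionality hypothesis $|S_j^k|/c = |V_j^k|/n$ for every $k \in \{0, 1, \dots, m\}$, together with the structure of $G^{\mathcal{S}}(j)$ as an induced subgraph of $G(j)$ whose right-hand side has been re-weighted by the scaled plurality scores $\Pi_j' = (c/n)\Pi_j$. For the lower bound, I would exhibit an explicit matching in $G^{\mathcal{S}}(j)$ of the claimed size. The key observation is that the decomposition $V = V_j^0 \cup V_j^1 \cup \dots \cup V_j^m$ tells us that, for each $k \geq 1$, every voter $i \in V_j^k$ satisfies $(i,k) \in E_j$, i.e. $j \succeq_i \topp(k')$ for the candidate $k$ which, in $G(j)$, plays the role of the \emph{copy indexed by a top-choice}. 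Concretely, the matching $M_j$ uses, for each candidate $k$, exactly $|V_j^k|$ right-vertices — and there are $\Pi_k$ right-vertices available that correspond to voters whose top candidate is $k$, so $|V_j^k| \leq \Pi_k$. After scaling, $|S_j^k| = (c/n)|V_j^k| \leq (c/n)\Pi_k = \Pi_k'$, so in $G^{\mathcal{S}}(j)$ there are still enough copies of each top-candidate bucket to match every vertex of $S_j^k$ to the bucket $k$. This yields a matching of size $\sum_{k \geq 1} |S_j^k| = \sum_{k\geq 1}(c/n)|V_j^k| = (c/n)|M_j|$, giving $|M_j^{\mathcal{S}}|/c \geq |M_j|/n$.

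For the upper bound, I would run essentially the same argument in reverse, or equivalently invoke a max-flow/König-type bound: any matching in $G^{\mathcal{S}}(j)$ can saturate at most $\Pi_k' = (c/n)\Pi_k$ of the right-vertices labelled by top-candidate $k$, and — more to the point — can cover at most $|S_j^k|$ left-vertices from each $S_j^k$ while respecting the adjacency structure inherited from $G(j)$. The cleanest route is to argue that if $|M_j^{\mathcal{S}}| > (c/n)|M_j|$, then one could "lift" $M_j^{\mathcal{S}}$ back to a matching in $G(j)$ of size exceeding $|M_j|$ by blowing each matched pair up by the factor $n/c$ (using that $|V_j^k|/|S_j^k| = n/c$ for every $k$ by the proportionality hypothesis, so the scaling is uniform and the lifted object is genuinely a matching in $G(j)$). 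This contradicts maximality of $M_j$. Combining the two directions gives $|M_j^{\mathcal{S}}|/c = |M_j|/n$.

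The main obstacle — and the place requiring care — is the lifting/projecting argument between $G(j)$ and $G^{\mathcal{S}}(j)$: one has to be precise about what a right-vertex "labelled by top-candidate $k$" means and verify that the count of such vertices scales exactly as $\Pi_j' = (c/n)\Pi_j$, and that the edge set of $G^{\mathcal{S}}(j)$ is exactly $E_j$ restricted to $\mathcal{S}$ on the left and to the scaled copies on the right. Once that bookkeeping is pinned down, the proportionality assumption $|S_j^k|/c = |V_j^k|/n$ makes every scaling factor equal to $n/c$, so matchings transfer losslessly in both directions and the equality of fractional matching sizes is immediate. I would not belabour the integrality of $\Pi_j'$ and the $|S_j^k|$, since that is assumed in the setup preceding the observation.
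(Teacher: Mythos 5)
Your overall architecture is the same as the paper's: prove $|M_j^{\mathcal{S}}|/c \geq |M_j|/n$ by exhibiting the ``proportional restriction'' of $M_j$ as a matching in $G^{\mathcal{S}}(j)$, and prove the reverse inequality by showing that a strictly larger matching in the sample would contradict the maximality of $M_j$. Your lower bound is correct and is exactly the paper's first step ($|S_j^k| = (c/n)|V_j^k| \leq (c/n)\Pi_k = \Pi_k'$, so each $S_j^k$ fits into bucket $k$).

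The mechanism you propose for the upper bound, however, does not work. ``Blowing each matched pair up by the factor $n/c$'' requires that whenever $M_j^{\mathcal{S}}$ matches some $i \in S_j^k$ to a bucket $k'$, there exist $n/c$ distinct voters of $V_j^k$ all adjacent to bucket $k'$ in $G(j)$. But membership in $V_j^k$ only certifies an edge to bucket $k$ (it records where $M_j$ sent that voter); two voters of $V_j^k$ can have entirely different neighborhoods otherwise, and $M_j^{\mathcal{S}}$ is free to exploit an edge $(i,k')$ with $k' \neq k$ that no other member of $V_j^k$ possesses. The proportionality hypothesis is a statement about the sizes of the classes $V_j^k$, not about their neighborhoods, so the ``lossless transfer in both directions'' you assert at the end is precisely the point that fails. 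The correct exchange (which is what the paper gestures at) is an augmenting-path argument that reuses the sampled voters themselves rather than scaling them: let $N$ be the restriction matching sending $S_j^k$ to bucket $k$; if $|M_j^{\mathcal{S}}| > |N| = (c/n)|M_j|$, then $M_j^{\mathcal{S}} \triangle N$ contains an $N$-augmenting path in $G^{\mathcal{S}}(j)$ starting at a voter of $S_j^0$ and ending at a bucket $k'$ with $|S_j^{k'}| < \Pi_{k'}'$. Every voter on that path is also a vertex of $G(j)$ with the same incident edges; its interior voters lie in $S_j^{k} \subseteq V_j^{k}$ and are hence $M_j$-matched to the corresponding buckets, its initial voter is $M_j$-unmatched, and its terminal bucket has residual capacity under $M_j$ because $|V_j^{k'}| < \Pi_{k'}$ by proportionality. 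The same path therefore augments $M_j$ in $G(j)$, contradicting maximality --- no blow-up of the path is needed, and proportionality is invoked only to transfer the residual capacity of the terminal bucket.
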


\begin{sproof}
First, it is clear that $|M_j^{\mathcal{S}}| \geq \sum_{k=1}^m |S_j^k| = c/n \sum_{k=1}^m |V_j^k| = c/n \times |M_j|$. Thus, it remains to show that $|M_j^{\mathcal{S}}| \leq c/n \times |M_j|$. Indeed, if we assume otherwise we can infer via an exchange argument than $M_j$ is not a maximum matching. 
\end{sproof}

Let us denote with $\Phi_j = M_j/n$; roughly speaking, we know from \cite{DBLP:conf/focs/Gkatzelis0020} that $\Phi_j$ is a good indicator of the ``quality'' of candidate $j$. Importantly, \Cref{observation:sampling_matching} tells us that we can determine $\Phi_j$ in a much smaller graph, if only we had a sampling-decomposition that satisfied the ``proportionality'' condition of the claim. Of course, determining explicitly such a decomposition makes little sense given that we do not know the sets $V_j^{k}$, but the main observation is that we can approximately satisfy the condition of \Cref{observation:sampling_matching} through sampling. It should be noted that we previously assumed that $\Pi_j' \in \mathbb{N}$, i.e. we ignored rounding errors. However, in the worst-case rounding errors can only induce an error of at most $m/c$ in the value of $\Phi_j$; thus, we remark that our subsequent selection of $c$ will be such that this error will be innocuous, in the sense that it will be subsumed by the ``sampling error'' (see \Cref{lemma:approx_flow}). Before we proceed, recall that for $\mathbf{p}, \widehat{\mathbf{p}} \in \Delta([k])$,

\begin{equation}
\tv(\mathbf{p}, \widehat{\mathbf{p}}) \define \sup_{S \subseteq [k]} |\mathbf{p}(S) - \widehat{\mathbf{p}}(S)| = \frac{1}{2} || \mathbf{p} - \widehat{\mathbf{p}} ||_1,
\end{equation}
where $||\cdot||_1$ represents the $\ell_1$ norm. In this context, we will use the following standard fact:

\begin{lemma}[\cite{canonne2020short}]
    \label{lemma:tv}
Consider a discrete distribution $\mathbf{p} \in \Delta([k])$ and let $\widehat{\mathbf{p}}$ be the empirical distribution derived from $N$ independent samples. For any $\epsilon > 0$ and $\delta \in (0,1)$, if $N = \Theta((k + \log(1/\delta))/\epsilon^2)$ it follows that $\tv(\mathbf{p}, \widehat{\mathbf{p}}) \leq \epsilon$ with probability at least $1 - \delta$.
\end{lemma}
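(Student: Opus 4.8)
The plan is to prove this via the standard two-step recipe: first bound the \emph{expected} total-variation error, then apply a bounded-differences concentration inequality to control the deviation from the mean. (Alternatively one can simply invoke the cited reference \cite{canonne2020short}, since this is a textbook estimate; but the two steps are short enough to sketch.)

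For the expectation, write $\widehat{\mathbf{p}} = (\widehat{p}_1, \dots, \widehat{p}_k)$, where $N \widehat{p}_i \sim \mathrm{Bin}(N, p_i)$ and $\E[\widehat{p}_i] = p_i$. By Jensen's inequality, $\E|\widehat{p}_i - p_i| \leq \sqrt{\Var(\widehat{p}_i)} = \sqrt{p_i(1-p_i)/N}$, so using $\tv(\mathbf{p},\widehat{\mathbf{p}}) = \frac12 \|\mathbf{p}-\widehat{\mathbf{p}}\|_1$ we get $\E[\tv(\mathbf{p},\widehat{\mathbf{p}})] \leq \frac{1}{2\sqrt{N}}\sum_{i=1}^k \sqrt{p_i(1-p_i)}$. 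Cauchy--Schwarz then gives $\sum_{i=1}^k \sqrt{p_i(1-p_i)} \leq \sqrt{k\sum_{i=1}^k p_i(1-p_i)} \leq \sqrt{k}$, hence $\E[\tv(\mathbf{p},\widehat{\mathbf{p}})] \leq \tfrac12\sqrt{k/N}$.

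For the concentration, view $\tv(\mathbf{p}, \widehat{\mathbf{p}})$ as a function $g(X_1, \dots, X_N)$ of the i.i.d.\ samples $X_1, \dots, X_N \sim \mathbf{p}$. Changing a single coordinate $X_j$ alters at most two entries of $\widehat{\mathbf{p}}$, each by exactly $1/N$, so it changes $\|\mathbf{p}-\widehat{\mathbf{p}}\|_1$ by at most $2/N$ and therefore changes $g$ by at most $1/N$. McDiarmid's (bounded-differences) inequality then yields $\mathbb{P}\big(g \geq \E[g] + t\big) \leq \exp(-2 t^2 N)$ for every $t > 0$; setting $t = \sqrt{\log(1/\delta)/(2N)}$ makes the right-hand side at most $\delta$. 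Combining with the previous paragraph, with probability at least $1-\delta$ we have $\tv(\mathbf{p},\widehat{\mathbf{p}}) \leq \tfrac12\sqrt{k/N} + \sqrt{\log(1/\delta)/(2N)}$.

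It then remains to check that the choice $N = \Theta((k + \log(1/\delta))/\epsilon^2)$ drives both terms below $\epsilon/2$: the first term is at most $\epsilon/2$ as soon as $N \geq k/\epsilon^2$, and the second is at most $\epsilon/2$ as soon as $N \geq 2\log(1/\delta)/\epsilon^2$, so any constant in the $\Theta(\cdot)$ exceeding (say) $2$ suffices. The only place where a small amount of care is needed is the bounded-differences constant for $\tv$: verifying it is $1/N$ rather than $2/N$ is exactly where the factor $\tfrac12$ in the definition $\tv = \tfrac12\|\cdot\|_1$ enters; everything else is routine.
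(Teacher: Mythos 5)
Your proof is correct and is essentially the standard argument from the cited reference \cite{canonne2020short}, which the paper itself invokes without reproducing a proof: bound $\E[\tv(\mathbf{p},\widehat{\mathbf{p}})]$ by $\tfrac12\sqrt{k/N}$ via Jensen and Cauchy--Schwarz, then apply McDiarmid with bounded-differences constant $1/N$. All the constants and the final balancing of the two terms check out.
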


As a result, if we draw a set $\mathcal{S}$ with $|\mathcal{S}| = c = \Theta((m + \log(1/\delta))/\epsilon^2)$ samples (\emph{without} replacement\footnote{Although the samples are not independent since we are not replacing them, observe that the induced bias is negligible for $n$ substantially larger than $m$.}) we can guarantee that 

\begin{equation}
    \sum_{k=0}^m \left| \frac{|S_j^k|}{c} - \frac{|V_j^k|}{n} \right| \leq 2 \epsilon;
\end{equation}

\begin{equation}
    \sum_{k=1}^m \left| \frac{\widehat{\Pi}_k}{c} - \frac{\Pi_k}{n} \right| \leq 2 \epsilon,
\end{equation}
where $S_j^k$ represents the subset of $\mathcal{S}$ which intersects $V_j^k$, and $\widehat{\Pi}_k$ is the empirical plurality score of candidate $k$. Thus, the following lemma follows directly from \Cref{observation:sampling_matching} and \Cref{lemma:tv}.

\begin{lemma}
    \label{lemma:approx_flow}
Let $\widehat{\Phi}_j = |\widehat{M}_j|/c$, where $\widehat{M}_j$ is the maximum matching in the graph $G^{\mathcal{S}}(j)$. Then, if $|\mathcal{S}| = \Theta((m + \log(1/\delta))/\epsilon^2)$ for some $\epsilon, \delta \in (0,1)$, it follows that $(1 - \epsilon) \Phi_j \leq \widehat{\Phi}_j \leq (1 + \epsilon) \Phi_j$ with probability at least $1 - \delta$.
\end{lemma}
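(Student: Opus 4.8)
The plan is to combine the two error sources—the sampling error in the matching-decomposition proportions and the sampling error in the plurality scores—and feed them through the exchange argument behind \Cref{observation:sampling_matching}, which asserts exact proportionality $|M_j^{\mathcal S}|/c = |M_j|/n$ only under exact proportionality of the sets $S_j^k$. First I would fix a candidate $j$ and apply \Cref{lemma:tv} twice: once to the distribution on $\{0,1,\dots,m\}$ induced by the matching decomposition $V_j^0,\dots,V_j^m$, and once to the plurality-score distribution induced by $\topp(\cdot)$. With $|\mathcal S| = c = \Theta((m+\log(1/\delta))/\epsilon^2)$ samples (drawn without replacement, with the bias absorbed into the constant since $n \gg m$) each gives a total-variation bound of $\epsilon$ with probability $1-\delta$, i.e. $\sum_k ||S_j^k|/c - |V_j^k|/n| \le 2\epsilon$ and $\sum_k |\widehat\Pi_k/c - \Pi_k/n| \le 2\epsilon$; a union bound over the (at most $m+1$ relevant) candidates—or just over the single candidate $j$ we care about, folding the $\log m$ into $\log(1/\delta)$—keeps the failure probability at $O(\delta)$.

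The heart of the argument is to convert these $\ell_1$-closeness statements into the multiplicative sandwich $(1-\epsilon)\Phi_j \le \widehat\Phi_j \le (1+\epsilon)\Phi_j$. For the lower bound on $\widehat\Phi_j$: in $G^{\mathcal S}(j)$ one can always match, within each class $k \ge 1$, up to $\min(|S_j^k|, \widehat\Pi'_k)$ voters to copies of candidate $k$ (the edges $(i,k)$ exist because $i \in V_j^k$ means $M_j(i)=k$, which forces $a \succeq_i \topp(k)$-type domination), so $|\widehat M_j| \ge \sum_{k\ge 1} \min(|S_j^k|, \text{capacity of } k)$; the proportionality errors cost at most $O(\epsilon c)$ in this sum relative to $\sum_{k\ge1}|S_j^k| \approx (c/n)|M_j|$, and the plurality-score perturbation plus rounding ($m/c$, which is $O(\epsilon^2 c/c)=O(\epsilon)$ for our choice of $c$) costs another $O(\epsilon c)$. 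For the upper bound: if $\widehat M_j$ were substantially larger than $(c/n)|M_j|$—say by more than $O(\epsilon c)$—I would lift it back to the full instance. Concretely, each matched pair in $G^{\mathcal S}(j)$ corresponds to an edge of $G(j)$, and by the proportionality one can blow up $\widehat M_j$ by the factor $\approx n/c$ (replicating within classes) to produce a matching in $G(j)$ of size $> |M_j|$ up to the accumulated $O(\epsilon n)$ slack, contradicting maximality of $M_j$ once $\epsilon$ is small; the capacity constraints on the candidate side are respected because the empirical plurality scores are within $2\epsilon c/n \cdot n = O(\epsilon n)$ of the true ones after rescaling. Choosing the constant in $c$ appropriately and rescaling $\epsilon$ by a constant at the end yields exactly the stated bound.

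The main obstacle I anticipate is making the exchange/lifting argument fully rigorous in the presence of \emph{both} perturbed class sizes $|S_j^k|$ \emph{and} perturbed candidate capacities $\widehat\Pi'_k$ simultaneously: \Cref{observation:sampling_matching} is stated only for exact class-size proportionality and with the original plurality scores, so I need a robust version where a max-flow/Hall-type argument tolerates two-sided additive slack of order $\epsilon c$ on both sides of the bipartition and still certifies $|\widehat M_j|$ to within a multiplicative $(1\pm\epsilon)$ of $(c/n)|M_j|$. The cleanest route is probably to phrase $G^{\mathcal S}(j)$ as a transportation/flow network (supplies $|S_j^k|$ on the voter side, demands $\widehat\Pi'_k$ on the candidate side, edge set inherited from $G(j)$), observe that $|M_j|/n = \Phi_j$ is the corresponding optimal flow value in the $n$-scaled network, and invoke a Lipschitz-stability property of optimal transportation values under $\ell_1$ perturbation of the marginals—this bounds $|\widehat\Phi_j - \Phi_j|$ by the total marginal perturbation, which we have controlled by $O(\epsilon)$. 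The rounding error $m/c$ slots in as just another additive perturbation and is dominated since $c = \Omega(m/\epsilon^2)$. Everything else—the two Chernoff/TV invocations, the union bound, the final rescaling of $\epsilon$—is routine.
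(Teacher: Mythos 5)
Your proposal is correct and follows essentially the same route as the paper, which itself merely asserts that the lemma ``follows directly'' from \Cref{observation:sampling_matching} and \Cref{lemma:tv} after noting the $m/c$ rounding error; your additional step --- making \Cref{observation:sampling_matching} robust via the $\ell_1$-Lipschitz stability of the optimal transportation/max-flow value under perturbation of both marginals --- is exactly the detail the paper leaves implicit, and it is the right tool. The only caveat (shared with the paper) is that this argument natively yields an \emph{additive} $O(\epsilon)$ bound on $|\widehat{\Phi}_j - \Phi_j|$, which gives the stated multiplicative sandwich only when $\Phi_j$ is bounded away from zero; this suffices for the downstream use in \Cref{theorem:plurality_matching}, where only candidates with $\Phi_j$ close to $1$ matter.
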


\begin{theorem}
    \label{theorem:plurality_matching}
    For any $\epsilon \in (0, 4]$ and $\delta \in (0,1)$ there exists a mechanism which takes a sample of size $\Theta((m + \log(m/\delta))/\epsilon^2)$ voters and yields distortion at most $3 + \epsilon$ with probability at least $1 - \delta$.
\end{theorem}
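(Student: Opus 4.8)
The plan is to analyze the natural sampling analogue of $\pluralitymatching$: draw a sample $\mathcal{S}$ of $c = \Theta((m + \log(m/\delta))/\epsilon^2)$ voters (without replacement), form the empirical plurality scores $\widehat{\Pi}_k$, compute for each candidate $j$ the normalized maximum-matching size $\widehat{\Phi}_j = |\widehat{M}_j|/c$ in $G^{\mathcal{S}}(j)$, and output a candidate $w \in \argmax_{j \in C} \widehat{\Phi}_j$ (unlike the exact $\pluralitymatching$ mechanism we cannot guarantee that the sampled graph of the winner has a \emph{perfect} matching, so we settle for the $\argmax$). First I would apply \Cref{lemma:approx_flow} with accuracy $\epsilon' = \epsilon/16$ and failure probability $\delta/m$ per candidate, and take a union bound over the $m$ candidates; this costs $\Theta((m+\log(m/\delta))/(\epsilon')^2) = \Theta((m+\log(m/\delta))/\epsilon^2)$ samples and gives, with probability at least $1-\delta$, that $(1-\epsilon')\Phi_j \le \widehat{\Phi}_j \le (1+\epsilon')\Phi_j$ for all $j$ simultaneously, where $\Phi_j = |M_j|/n$ (the $O(m/c) = O(\epsilon^2)$ rounding error in the scores $\widehat{\Pi}_k$ being already absorbed into \Cref{lemma:approx_flow}). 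Condition on this event.

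Next I would combine this with \Cref{proposition:perfect_matching}: choosing a candidate $a$ whose integral domination graph $G(a)$ admits a perfect matching we have $\Phi_a = 1$, hence $\widehat{\Phi}_w \ge \widehat{\Phi}_a \ge (1-\epsilon')\Phi_a = 1-\epsilon'$, and applying the upper inequality to $w$ itself gives $\Phi_w \ge \widehat{\Phi}_w/(1+\epsilon') \ge (1-\epsilon')/(1+\epsilon')$. Thus the deficiency $\gamma := 1 - \Phi_w$ of the output satisfies $\gamma \le 2\epsilon'/(1+\epsilon') \le 2\epsilon' = \epsilon/8$, i.e. $w$ has a matching $M_w$ in its true integral domination graph covering all but a $\gamma$-fraction of the voters.

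The crux of the argument is a robust version of the Gkatzelis--Halpern--Shah bound: if $G(w)$ has a matching $M_w$ covering all but a $\gamma$-fraction of the voters, then $\SC(w) \le \tfrac{3+\gamma}{1-\gamma}\,\SC(x)$ for every $x \in C$. I would prove it by perturbing their proof. For each matched voter $i$, let $j' = M_w(i)$ and $k = \topp(j')$; by definition of $G(w)$ we have $w \succeq_i k$, so $d(v_i,c_w) \le d(v_i,c_k) \le d(v_i,c_x) + d(c_x,c_k) \le d(v_i,c_x) + d(v_{j'},c_x) + d(v_{j'},c_k) \le d(v_i,c_x) + 2\,d(v_{j'},c_x)$, using two triangle inequalities together with $d(v_{j'},c_k) \le d(v_{j'},c_x)$ (as $k = \topp(j')$). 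Since the matched $i$'s are distinct and the $j' = M_w(i)$'s are distinct, summing over matched voters gives $\sum_{i \text{ matched}} d(v_i,c_w) \le \SC(x) + 2\SC(x)$. For the at most $\gamma n$ unmatched voters, $d(v_i,c_w) \le d(v_i,c_x) + d(c_x,c_w)$, and averaging $d(c_x,c_w) \le d(v_\ell,c_x) + d(v_\ell,c_w)$ over all $\ell$ yields $n\,d(c_x,c_w) \le \SC(x) + \SC(w)$; adding the two contributions gives $\SC(w) \le (3+\gamma)\SC(x) + \gamma\SC(w)$, which rearranges to the claimed ratio. This is morally the same $\gamma/(1-\gamma)$ loss as in \Cref{theorem:missing_voters}, only routed through the matching structure rather than through a fixed subset of voters.

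Finally, substituting $\gamma \le \epsilon/8$ and using $\epsilon \le 4$ one checks $\tfrac{3+\epsilon/8}{1-\epsilon/8} \le 3+\epsilon$ (equivalently $\epsilon(4-\epsilon) \ge 0$), so $w$ has distortion at most $3+\epsilon$ with probability at least $1-\delta$, with sample size $\Theta((m+\log(m/\delta))/\epsilon^2)$, as claimed. The main obstacle is the third step — verifying that a near-perfect matching still certifies near-optimal distortion with exactly the right additive constant — but since it is a faithful perturbation of the known distortion-$3$ argument, I expect it to go through cleanly.
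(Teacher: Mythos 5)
Your proposal is correct, and its first half (sample $\Theta((m+\log(m/\delta))/\epsilon^2)$ voters, invoke \Cref{lemma:approx_flow} with a union bound, output $w=\argmax_j \widehat{\Phi}_j$, and use \Cref{proposition:perfect_matching} to conclude $\Phi_w \geq (1-\epsilon')/(1+\epsilon')$) is exactly the paper's argument. Where you diverge is the last step. The paper converts the near-maximum matching of $w$ into a subset $V'$ of voters with $|V'| \geq n(1-4\epsilon')$ on which $G(w)$ restricted to $V'\times V'$ has a perfect matching, cites the distortion-$3$ guarantee of \cite{DBLP:conf/focs/Gkatzelis0020} on that sub-election, and then applies \Cref{theorem:missing_voters} with $\ell=3$ to lift the bound to the whole population. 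You instead prove a ``robust'' version of the Gkatzelis--Halpern--Shah bound directly: a matching in $G(w)$ covering all but a $\gamma$-fraction of voters yields $\SC(w) \leq \frac{3+\gamma}{1-\gamma}\SC(x)$, via the standard two-triangle-inequality chain $d(v_i,c_w)\leq d(v_i,c_x)+2d(v_{M_w(i)},c_x)$ for matched voters and the averaged bound $n\,d(c_x,c_w)\leq \SC(x)+\SC(w)$ for the unmatched ones. This buys you two things: you sidestep the (slightly delicate) extraction of a common subset $V'$ that is simultaneously matched on both sides of the bipartite graph --- the source of the paper's extra factor in $|V'|\geq n(1-4\epsilon')$ --- and you get a tighter degradation, $\frac{3+\gamma}{1-\gamma}$ versus the paper's $3+\frac{4\gamma'}{1-\gamma'}$, so the final rescaling of $\epsilon$ is cleaner and fully explicit on all of $(0,4]$. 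One small bookkeeping point: in your summation the matched and unmatched contributions of $\sum_i d(v_i,c_x)$ must be combined into a single $\SC(x)$ (rather than bounded by $\SC(x)$ each) to land on $(3+\gamma)\SC(x)+\gamma\SC(w)$; your final inequality reflects the correct accounting, so this is presentational rather than a gap.
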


\begin{proof}
Fix some $\epsilon \in (0, 1/4)$ and $\delta \in (0,1)$. If we draw $\Theta((m + \log(m/\delta))/\epsilon^2)$ samples \Cref{lemma:approx_flow} along with the union bound imply that $(1-\epsilon) \Phi_j \leq \widehat{\Phi}_j \leq (1 + \epsilon) \Phi_j$ for all $j \in [m]$, with probability at least $1 - \delta$, where $\widehat{\Phi}_j$ is defined as in \Cref{lemma:approx_flow}. Conditioned on the success of this event, let $w = \argmax_{j \in C} \widehat{\Phi}_j$. \Cref{proposition:perfect_matching} implies that there exists some candidate $x$ for which $\Phi_x = 1$; hence, we know that $\widehat{\Phi}_w \geq \widehat{\Phi}_x \geq 1 - \epsilon$, in turn implying that $\Phi_w \geq (1-\epsilon)/(1+\epsilon) \geq 1 - 2 \epsilon$ (\Cref{lemma:approx_flow}). As a result, it follows that there exists a subset of voters $V'$ for which there exists a perfect matching in the integral domination graph $G(w)$, with $|V'| \geq n(1 - 4 \epsilon)$. Thus, it follows that for the subset of voters in $V'$ candidate $w$ yields distortion at most $3$ (see \cite{DBLP:conf/focs/Gkatzelis0020}), and \Cref{theorem:missing_voters} leads to the desired conclusion.
\end{proof}

\section{Experiments}

\subsection{Synthetic Data}

Here we present several synthetic experiments which illustrate the degradation of distortion under missing information. We will employ the $\minimax$ rule, which recall is instance-optimal under any given preferences $\mathcal{P}$ (\Cref{theorem:instance_opt}). Also note that the $\minimax$ rule returns the exact distortion of \emph{every} candidate w.r.t. the given preferences. The induced LPs in $\minimax$ rule are solved via the \texttt{Gurobi} software \cite{gurobi}.

\subsubsection{Top Preferences}

First we assume that every voter provides only her $k$-top preferences to the mechanism, and we illustrate the decay of distortion while $k$ gradually increases from $1$ to $m$. Specifically, our experiments are conducted for $n = 50$ voters with preferences sampled from a uniform distribution over the space of permutations. In the parlor of social choice this probabilistic model is referred to as \emph{impartial culture},\footnote{To put it differently, impartial culture corresponds to the \emph{Mallows} model \cite{10.1093/biomet/44.1-2.114} with unitary \emph{spread parameter}.} and arguably it is unrealistic \cite{10.2307/41106568}; nonetheless it will suffice for sketching the underlying qualitative behavior. Recall that according to \Cref{theorem:k_top-upper_bound} we expect the distortion to decay as $\mathcal{O}(m/k)$. The results are shown in \Cref{fig:k-top} for $m \in \{4, 6, 8, 10\}$; it should also be noted that for every case we presented $5$ different random realizations in order to alleviate the ``bias'' in the input. Interestingly, the ``bottom half'' of the voters' preferences appears to offer no real improvement.

\begin{figure}[!ht]
\begin{minipage}{.5\linewidth}
\centering
\subcaption{\label{main:a}\includegraphics[scale=.48]{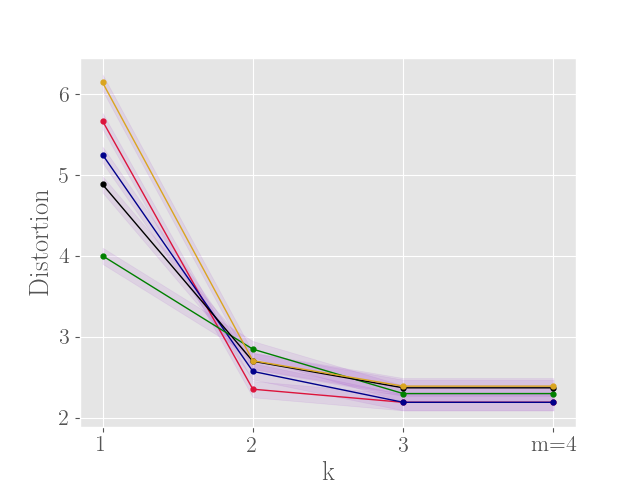}}
\end{minipage}%
\begin{minipage}{.5\linewidth}
\centering
\subcaption{\label{main:b}\includegraphics[scale=.48]{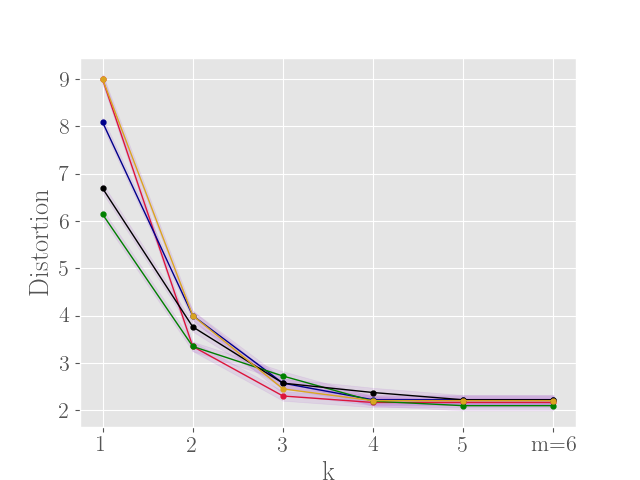}}
\end{minipage}\par\medskip
\begin{minipage}{.5\linewidth}
\centering
\subcaption{\label{main:c}\includegraphics[scale=.48]{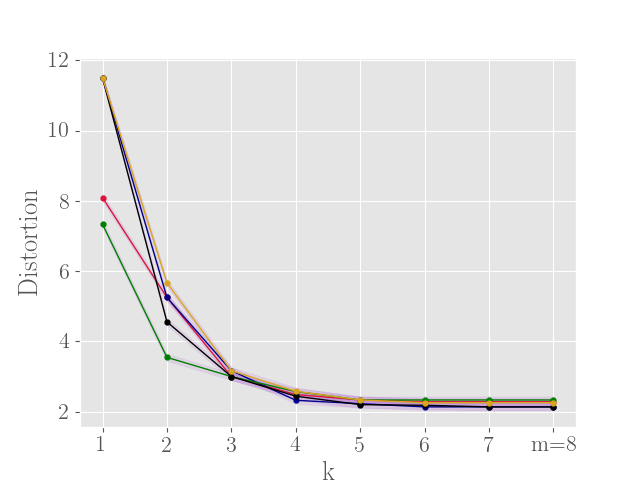}}
\end{minipage}%
\begin{minipage}{.5\linewidth}
\centering
\subcaption{\label{main:d}\includegraphics[scale=.48]{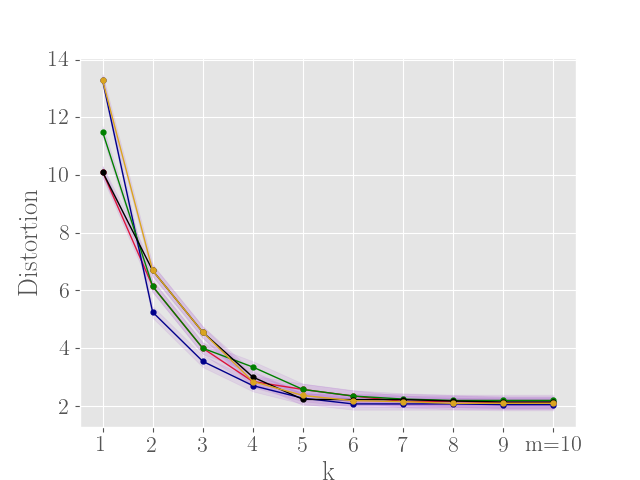}}
\end{minipage}\par\medskip
\caption{The decay of distortion as we gradually increase $k$ from $1$ to $m$.}
\label{fig:k-top}
\end{figure}

\subsubsection{Missing Voters}

Next we illustrate the decay of distortion as more voters provide information to the mechanism; here it will be assumed that (active) voters provide their total orders to the mechanism. As before, we consider $n=50$ voters from an impartial culture probabilistic model with $m \in \{3, 4, 5, 6\}$; for every case we consider $5$ different random realizations. The results are illustrated in \Cref{fig:missing_voters}. Again, the observed curves closely match the theoretical predictions of \Cref{theorem:missing_voters}. 

\begin{figure}[!ht]
\begin{minipage}{.5\linewidth}
\centering
\subcaption{\label{main:aa}\includegraphics[scale=.48]{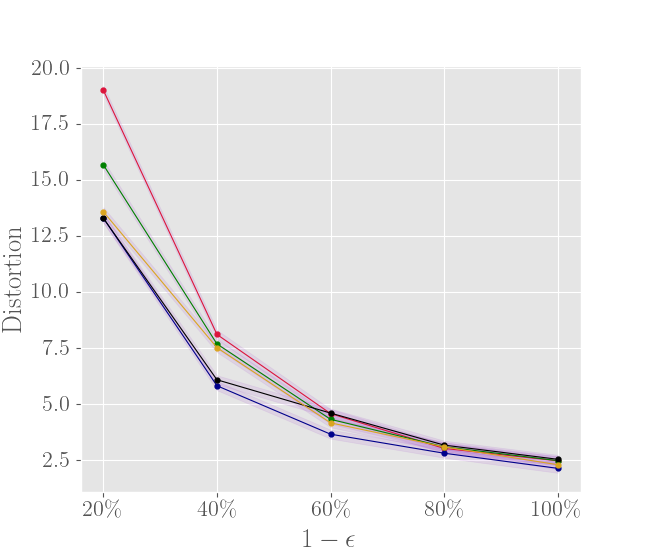}}
\end{minipage}%
\begin{minipage}{.5\linewidth}
\centering
\subcaption{\label{main:bb}\includegraphics[scale=.48]{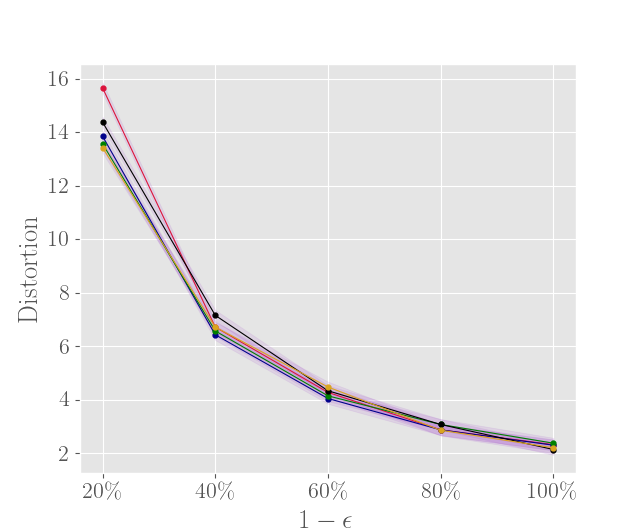}}
\end{minipage}\par\medskip
\begin{minipage}{.5\linewidth}
\centering
\subcaption{\label{main:cc}\includegraphics[scale=.48]{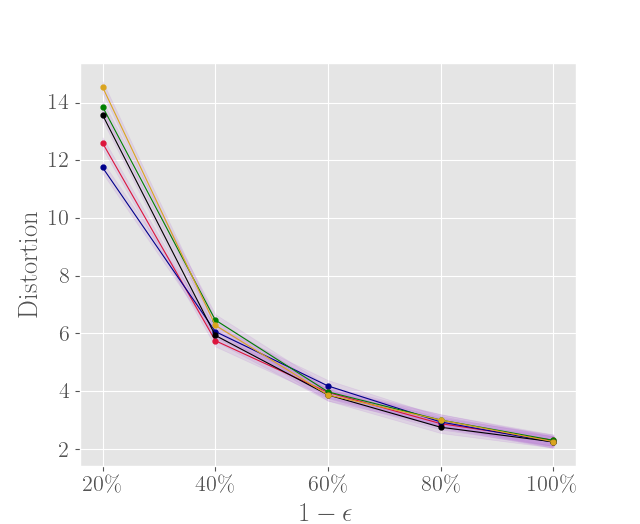}}
\end{minipage}%
\begin{minipage}{.5\linewidth}
\centering
\subcaption{\label{main:dd}\includegraphics[scale=.48]{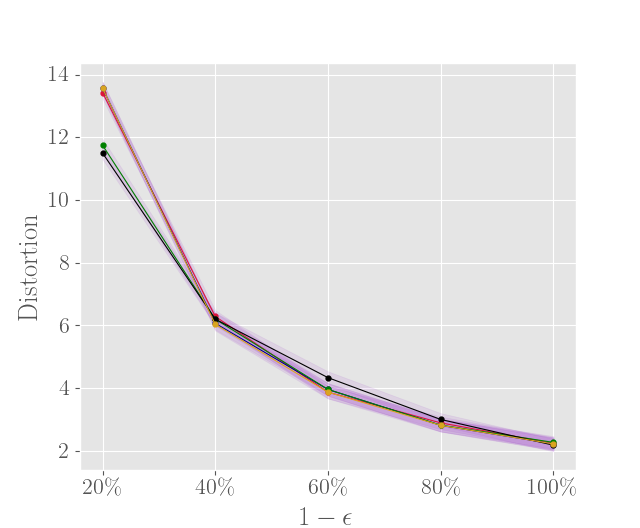}}
\end{minipage}\par\medskip
\caption{The decay of distortion as more voters (a $(1 -\epsilon)$ fraction) provide their preferences to the mechanism, for $\epsilon \in \{0.8, 0.6, 0.4, 0.2, 0\}$. The illustrated curves correspond to $m \in \{3, 4, 5, 6\}$ from top-left, top-right, bottom-left, and bottom-right respectively.}
\label{fig:missing_voters}
\end{figure}

\subsection{Real Datasets}

In this subsection we present some experiments conducted on real-life voting applications. In particular, we are interested in comparing the efficiency---measured in terms of distortion---of the voting system employed in practice with the instance-optimal mechanism, namely the $\minimax$ mechanism.

\subsubsection{Eurovision Song Contest}

Here we analyze the performance of the scoring system used in the Eurovision song contest, so let us first give a basic overview of the competition and the voting rule employed. Fist of all, we will only focus on the final stage of the competition, wherein a set of $m$ countries compete amongst each other and a set of $n$ countries---which is a strict superset of the contenders---provide their preferences over the finalists. Eurovision employs a specific \emph{positional scoring} system which works as follows. Every country assigns $12$ points to its highest preference, $10$ points to its second-highest preference, and from $8-1$ points to each of its next $8$ preferences; note that no country can vote for itself. This scoring system shall be referred to as the $\textsc{Scoring}$ rule. It should be noted that the authors in \cite{DBLP:conf/aaai/SkowronE17} quantify the distortion for some specific scoring rules (e.g. the \emph{harmonic rule}). We will make the working hypothesis that for every country the assigned scores correspond to its actual order of preferences. Nonetheless, we stress that the assigned scores of every country have been themselves obtained by \emph{preference aggregation}\footnote{For the years we are considering the scores were mainly determined by \emph{televoting}, with some few exceptions.}, and as such they are themselves subject to \emph{distortion}, but we will tacitly suppress this issue.\footnote{We refer the interested reader to the work of Filos-Ratsikas and Voudouris \cite{DBLP:journals/corr/abs-2007-06304}.}

We will focus on the competitions held between $2004$ and $2008$; during these years the number of finalists (or candidates) $m$ was $24$, with the exception of $2008$ where $25$ countries were represented in the final. We should note that for our experiments we used a \hyperlink{https://www.kaggle.com/datagraver/eurovision-song-contest-scores-19752019}{dataset} from \texttt{Kaggle}. Observe that every ``voter'' only provides its top $k=10$ preferences, while the countries which are represented in the final are $0$-\emph{decisive} (see \cite{DBLP:conf/ijcai/AnshelevichP16}). The main question that concerns us is whether the $\textsc{Scoring}$ rule employed for the competition yields very different results from the optimal $\minimax$ mechanism. Our results are summarized in \Cref{table:eurovision}, while for more detailed findings we refer to \Cref{appendix:additional_experiments}.

Perhaps surprisingly, on all occasions the winners in the two mechanisms coincide; on the other hand, there are generally substantial differences below the first position. It is also interesting to note that on all occasions the winner has a remarkably small distortion, at least compared to the theoretical bounds.

\begin{table}[h]
\tiny
\caption{Summary of our findings for the Eurovision song contests held between $2004$ and $2008$. For every year we have indicated the top three countries according to the $\minimax$ rule and the $\textsc{Scoring}$ system employed in the actual contest.} 
\centering 
\resizebox{\textwidth}{!}{\begin{tabular}{c | c c | c c | c }
\toprule
 \multicolumn{1}{c|}{\multirow{2}{*}{Year}} & \multicolumn{2}{c|}{$\minimax$ rule} & \multicolumn{2}{c|}{\textsc{Scoring} rule} & \multicolumn{1}{c}{\multirow{2}{*}{$\#$ of Countries}}
\\
& Country & Distortion & Country & Score & \\ \midrule
\multicolumn{1}{c|}{\multirow{3}{*}{$2004$}} & Ukraine & $\textbf{1.1786}$ & Ukraine & $\textbf{280}$ & \multicolumn{1}{c}{\multirow{3}{*}{$36$}} \\
& Serbia $\&$ Montenegro & $1.4444$ & Serbia $\&$ Montenegro & $263$ & \\
& Turkey & $1.4746$ & Greece & 252 & \\ \midrule
\multicolumn{1}{c|}{\multirow{3}{*}{$2005$}} & Greece & $\textbf{1.4068}$ & Greece & $\textbf{230}$ & \multicolumn{1}{c}{\multirow{3}{*}{$39$}} \\
& Switzerland & $1.4127$ & Malta & $192$ & \\
& Moldova & $1.4194$ & Romania & 158 & \\ \midrule
\multicolumn{1}{c|}{\multirow{3}{*}{$2006$}} & Finland & $\textbf{1.3000}$ & Finland & $\textbf{292}$ & \multicolumn{1}{c}{\multirow{3}{*}{$38$}} \\
& Romania & $1.4262$ & Russia & $248$ & \\
& Russia & $1.4407$ & Bosnia $\&$ Herzegovina & $229$ & \\ \midrule
\multicolumn{1}{c|}{\multirow{3}{*}{$2007$}} & Serbia & $\textbf{1.3235}$ & Serbia & $\textbf{268}$ & \multicolumn{1}{c}{\multirow{3}{*}{$42$}} \\
& Ukraine & $1.3667$ & Ukraine & $235$ & \\
& Russia & $1.5231$ & Russia & $207$ & \\ \midrule
\multicolumn{1}{c|}{\multirow{3}{*}{$2008$}} & Russia & $\textbf{1.3562}$ & Russia & $\textbf{272}$ & \multicolumn{1}{c}{\multirow{3}{*}{$43$}} \\
& Greece & $1.4507$ & Ukraine & $230$ & \\
& Ukraine & $1.4923$ & Greece & $218$ & \\ \midrule
\end{tabular}}
\label{table:eurovision}
\end{table}

\subsubsection{Formula One}

Moreover, we analyze the performance of the voting system employed in the Formula One (F1) world championship. In particular, we imagine that every competing driver constitutes a distinct candidate, while every race in the calendar corresponds to a ``voter''; the ``preferences'' of every race are indicated by the order in which the drivers complete the race. We will assume that when two drivers fail to terminate they will not be comparable (in the spirit of partial orderings). The scoring rule employed in F1 assigns to the first $10$ drivers the points $25, 18, 15, 12, 10, 8, 6, 4, 2, 1$ respectively, and the driver who manages to collect the most number of points throughout the championship is declared the winner; with a slight abuse of notation this rule will also be referred to as the $\textsc{Scoring}$ rule. We will be analyzing the championships held between $2016$ and $2020$, using a \hyperlink{https://www.kaggle.com/aadiltajani/fia-f1-19502019-data}{dataset} from \texttt{Kaggle}. A noteworthy detail is that for the last two years the scoring system assigned an additional point to the driver with the fastest lap, but for simplicity the $\minimax$ will not use any such information. Our results are summarized in \Cref{table:f1}. Again, the driver who won the championship is also the candidate who minimizes distortion, with the exception of $2016$, where---if we are to accept the metric distortion framework---Lewis Hamilton should have won the championship.

\begin{table}[h]
\tiny
\caption{Summary of our findings for the F1 world championships held between $2016$ and $2020$. For every year we have indicated the top three drivers according to the $\minimax$ rule and the $\textsc{Scoring}$ system employed.} 
\centering 
\resizebox{\textwidth}{!}{\begin{tabular}{c | c c | c c | c | c }
\toprule
 \multicolumn{1}{c|}{\multirow{2}{*}{Year}} & \multicolumn{2}{c|}{$\minimax$ rule} & \multicolumn{2}{c|}{\textsc{Scoring} rule} & \multicolumn{1}{c|}{\multirow{2}{*}{$\#$ of Drivers}} & \multicolumn{1}{c}{\multirow{2}{*}{$\#$ of Races}}
\\
& Driver & Distortion & Driver & Score & & \\ \midrule
\multicolumn{1}{c|}{\multirow{3}{*}{$2020$}} & Lewis Hamilton & $\textbf{1.6667}$ & Lewis Hamilton & $\textbf{347}$ & \multicolumn{1}{c|}{\multirow{3}{*}{$23$}} & \multicolumn{1}{c}{\multirow{3}{*}{$17$}} \\
& Valtteri Bottas & $5$ & Valtteri Bottas & $223$ & \\
& Max Verstappen\tablefootnote{Tie with Lando Norris.} & $5.6667$ & Max Verstappen & 214 & \\ \midrule
\multicolumn{1}{c|}{\multirow{3}{*}{$2019$}} & Lewis Hamilton & $\textbf{1.7059}$ & Lewis Hamilton & $\textbf{413}$ & \multicolumn{1}{c|}{\multirow{3}{*}{$20$}} & \multicolumn{1}{c}{\multirow{3}{*}{$21$}} \\
& Valtteri Bottas & $4$ & Valtteri Bottas & $326$ & \\
& Max Verstappen\tablefootnote{Tie with Sebastian Vettel and Charles Leclerc.} & $4.4$ & Max Verstappen & $278$ & \\ \midrule
\multicolumn{1}{c|}{\multirow{3}{*}{$2018$}} & Lewis Hamilton & $\textbf{2}$ & Lewis Hamilton & $\textbf{408}$ & \multicolumn{1}{c|}{\multirow{3}{*}{$20$}} & \multicolumn{1}{c}{\multirow{3}{*}{$21$}} \\
& Sebastian Vettel & $3.6$ & Sebastian Vettel & $320$ & \\
& Kimi Räikkönen & $4.4$ & Kimi Räikkönen & $251$ & \\ \midrule
\multicolumn{1}{c|}{\multirow{3}{*}{$2017$}} & Lewis Hamilton & $\textbf{2.2}$ & Lewis Hamilton & $\textbf{363}$ & \multicolumn{1}{c|}{\multirow{3}{*}{$25$}} & \multicolumn{1}{c}{\multirow{3}{*}{$20$}} \\
& Sebastian Vettel & $3$ & Sebastian Vettel & $317$ & \\
& Valtteri Bottas & $3.1818$ & Valtteri Bottas & $305$ & \\ \midrule
\multicolumn{1}{c|}{\multirow{3}{*}{$2016$}} & Lewis Hamilton & $\textbf{2.8333}$ & Nico Rosberg & $\textbf{385}$ & \multicolumn{1}{c|}{\multirow{3}{*}{$24$}} & \multicolumn{1}{c}{\multirow{3}{*}{$21$}} \\
& Nico Rosberg & $3$ & Lewis Hamilton & $380$ & \\
& Daniel Ricciardo & $3.9091$ & Daniel Ricciardo & $256$ & \\ \midrule
\end{tabular}}
\label{table:f1}
\end{table}

\section{Open Problems}

There are several compelling avenues for future research related to our work. First, it would be interesting to study the performance of the $\DR$ mechanism under randomized pairings; we suspect that this might lead to a substantial improvement since our lower bound (\Cref{proposition:DR-lower_bound}) is very brittle, but we did not pursue this direction. As we previously alluded to, in practice the pairings are typically constructed using some form of prior, so it might be interesting to formalize the guarantees of such techniques. It would also be meaningful to quantify the decay of distortion from $\mathcal{O}(\log m)$ to $\mathcal{O}(1)$ (which is the bound achievable when the mechanism has access to the entire tournament graph) if we gradually elicit more than $m-1$ pairwise comparisons. With regards to the power of deterministic mechanisms which elicit only the $k$-top preferences, an obvious question is to settle \Cref{conjecture:extension}. As we showed in \Cref{proposition:conditional} this would immediately improve our upper bound, but it would still require some further work to close the gap for every value of $k \in [m]$. Finally, can we reduce the sample complexity established in \Cref{theorem:plurality_matching} without sacrificing the efficiency? We argued that the dependence on $\epsilon$ and $\delta$ cannot be improved, but establishing the optimal dependence on the value of $m$ requires future research.

\paragraph{Acknowledgments.} We are very grateful to the anonymous reviewers of SAGT for their insightful comments, and for helping improve the presentation of this work.  

\bibliography{./refs.bib}

\appendix

\section{Lower Bound for \texorpdfstring{$k$}{}-top Preferences}
\label{appendix:lower_bound}

In this section we present for completeness a lower bound for deterministic mechanisms which have access only to the $k$-top preferences of every voter. 

\begin{proposition}
    \label{proposition:k_top-lower_bound}
    Any deterministic mechanism which elicits only the $k$-top preferences from every voter out of the $m$ alternatives has distortion $\Omega(m/k)$.
\end{proposition}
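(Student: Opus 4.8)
The plan is to exhibit, for an arbitrary deterministic mechanism $f$ that sees only the $k$-top preferences, a single election $\mathcal{E}=(V,C,\mathcal{P})$ together with a metric $d\triangleright\mathcal{P}$ on which $f$'s winner has social cost $\Omega(m/k)$ times that of some other candidate. The election is designed so that the top-$k$ information organizes the candidates into disjoint ``blocks'': set $g=\lfloor m/k\rfloor$, pick $kg$ of the $m$ candidates and partition them into blocks $B_1,\dots,B_g$ of size exactly $k$ (the remaining $m-kg<k$ candidates are ``spare''), and partition the $n$ voters (take $n$ to be any multiple of $g$) into equal groups $V_1,\dots,V_g$. The profile $\mathcal{P}$ declares that every voter of $V_j$ reports her $k$ top candidates to be precisely the members of $B_j$, listed in increasing index order. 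This is a legitimate $k$-top profile, realizable by a metric (by the explicit construction below, or by \Cref{proposition:feasibility}).

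First I would evaluate $w=f(\mathcal{P})$ and branch on which block contains $w$. If $w$ is spare it appears in no voter's top-$k$, and the adversary collapses $B_1\cup\dots\cup B_g$ into one tiny cluster---with each $B_j$ placed in its own disjoint sub-cluster, so that a voter sitting at $B_j$'s centre has exactly $B_j$ as her $k$ nearest candidates---while putting every spare candidate, $w$ included, at distance $1$ from this cluster; then $\SC(w)\approx n$ whereas any $a\in B_1$ has $\SC(a)\to0$, so the distortion is already unbounded. Otherwise $w$ lies in a unique block $B_{j^*}$, and (taking $k\le m/2$, so $g\ge2$) I would place the $k$ candidates of $B_{j^*}$ in a radius-$\delta$ cluster around a point $P$, place all other candidates in a radius-$\epsilon_0$ cluster around a point $Q$ with $d(P,Q)=1$ (resolving each remaining block $B_j$ into its own sub-cluster around a point $Q_j$), put every voter of $V_{j^*}$ at $P$, and every voter of $V_j$ with $j\ne j^*$ at $Q_j$. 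With $\delta\ll\epsilon_0\ll1$ this is consistent with $\mathcal{P}$: a $V_{j^*}$-voter at $P$ sees $B_{j^*}$ at distance $\le\delta$ and all other candidates at distance $\ge1-\epsilon_0$, while a $V_j$-voter at $Q_j$ sees $B_j$ at distance $O(\epsilon_0^2)$ and every other candidate strictly farther; inside each cluster the fine positions are chosen to reproduce the prescribed intra-block orders.

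Finally I would compare $w$ with an arbitrary $a^*\in B_{j'}$, $j'\ne j^*$: the $(g-1)\,n/g$ voters outside $V_{j^*}$ lie at distance $\approx1$ from $w$ but $O(\epsilon_0)$ from $a^*$, while the $n/g$ voters in $V_{j^*}$ lie at distance $\le\delta$ from $w$ and $\approx1$ from $a^*$, so $\SC(w)/\SC(a^*)\ge (g-1)-o(1)$ as $\delta,\epsilon_0\to0$. Since $\distortion(f;\mathcal{P})$ is a supremum over consistent metrics, this gives $\distortion(f)\ge g-1=\lfloor m/k\rfloor-1=\Omega(m/k)$ for $k\le m/2$, and the range $k>m/2$ is absorbed by the (known) constant lower bound of $3$ on deterministic distortion. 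The one genuinely delicate step is the consistency check: one must verify \emph{simultaneously}, for all $g$ voter groups, that the prescribed $k$-top set really is the set of $k$ nearest candidates, which is what forces the scale hierarchy $\epsilon_0^2\ll\epsilon_0\ll1$ and $\delta\ll\epsilon_0$ together with the embeddability of all the auxiliary points (realize everything inside, say, a suitable tree metric); the social-cost estimates are then routine. To sharpen the bound to the $\approx2m/k$ quoted earlier, I would instead place the $V_{j^*}$-voters at the point $P'$ on the $P$--$Q$ geodesic with $d(P',P)=\tfrac12-\eta$ (choosing $\delta,\epsilon_0\ll\eta\ll1$), so that those voters sit at distance $\approx\tfrac12$ from \emph{both} $w$ and $a^*$; the same computation then yields distortion tending to $2g-1=2\lfloor m/k\rfloor-1$.
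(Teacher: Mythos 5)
Your proposal is correct and follows essentially the same construction as the paper's proof: partition the candidates into disjoint blocks of size $k$, declare each block to be the reported top-$k$ of one group of voters, and let the adversary isolate whichever block contains the mechanism's winner. The only differences are cosmetic --- the paper uses one voter per block and benchmarks against a single candidate $x$ lying outside every voter's list, placing the winner's own voter equidistant from the winner and from $x$, which is exactly your midpoint refinement yielding $2\lfloor m/k\rfloor-1$ instead of $\lfloor m/k\rfloor-1$.
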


\begin{proof}
First of all, assume without any loss of generality that $k \mid (m-1)$\footnote{If it is not the case that $k \mid (m-1)$ take $k'$ to be the smallest number larger than $k$ such that $k' \mid (m-1)$, and apply our argument for $k'$; given that $k' < 2k$ we will establish again a lower bound of $\Omega(m/k)$ even though the mechanism had more information than the $k$-top preferences.}, and let $n = (m-1)/k$ be the number of voters. For simplicity, let us enumerate the number of candidates as $C = \{1, 2, \dots, n \times k\} \cup \{ x\}$. Now consider some preference profile $\mathcal{P}$ in which the $k$-top preferences of voter $i$ correspond to the of candidates $\{(i-1)k + 1, \dots, (i-1) k + k\}$ according to some arbitrary order; observe that all of these sets are pairwise disjoint. 

Based on these preferences the mechanism has to select some candidate. If $x$ is selected the lower bound follows trivially since $x$ could actually be the last choice for every voter. Therefore, let us assume that candidate $1$ was selected by the mechanism; this hypothesis is without loss of generality due to the symmetry of the input $\mathcal{P}$. However, the agents and the candidates could be located on the metric space of \Cref{fig:lower_bound}; indeed, it is easy to check that the induced metric space is consistent with the given preferences. As a result, it follows that 

\begin{equation}
    \frac{\SC(1)}{\SC(x)} = \frac{D + (n-1) \times (\delta + 2D) }{D + (n-1) \times \delta} = \frac{1 + (n-1) \times (\delta/D + 2) }{1 + (n-1) \times \delta/D}.
\end{equation}
Thus, for $\delta/D \downarrow 0$ we obtain that $\SC(1)/\SC(x) \to 2n - 1 = \Omega(m/k)$.
\end{proof}

\begin{figure}[!ht]
    \centering
    \includegraphics[scale=0.35]{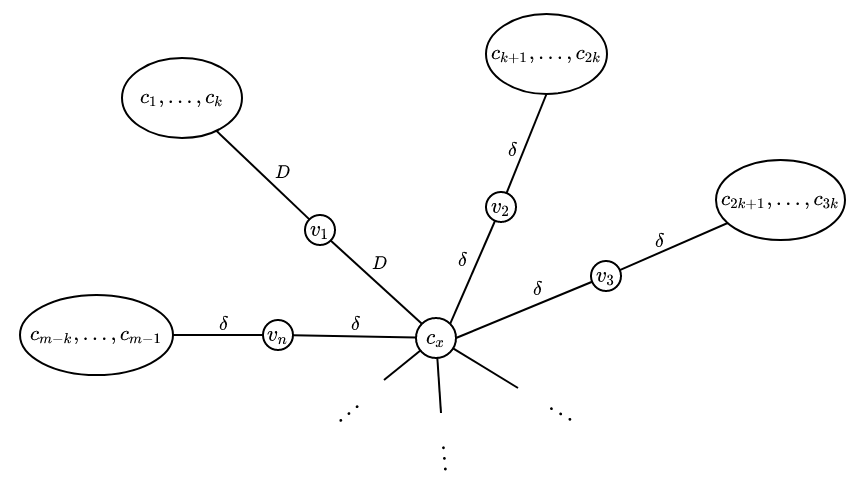}
    \caption{The metric space considered for the proof of \Cref{proposition:k_top-lower_bound}, where $\delta/D \downarrow 0$ for some positive numbers $\delta$ and $D$. Naturally, the distance between two points is simply the shortest path in the graph.}
    \label{fig:lower_bound}
\end{figure}

We should note that although in our worst-case example the number of voters $n$ is smaller than the number of candidates $m$, which is not the canonical case, our argument directly extends whenever $n$ is a multiple of $(m-1)/k$, allowing $n$ to be arbitrarily large. Moreover, a similar construction shows an $\Omega(m/k)$ lower bound for $\alpha$-\emph{decisive} metrics \cite{DBLP:conf/ijcai/AnshelevichP16}, for any $\alpha \in [0,1]$; indeed, it suffices to place the voters within the ``cluster'' of their $k$-most preferred candidates.

\section{\texorpdfstring{$\pluralitymatching$}{} vs \texorpdfstring{$\minimax$}{}}

In this section we compare the $\pluralitymatching$ mechanism of Gkatzelis et al. \cite{DBLP:conf/focs/Gkatzelis0020} with the instance-optimal mechanism, namely $\minimax$; in this section we tacitly posit that $\mathcal{P} = \sigma$, i.e. all the agents provide their entire rankings to the mechanism. 

\subsection{Instance Optimality}

The first question that arises is how far could the distortion of $\pluralitymatching$ be with respect to the instance-optimal candidate; to this end, we commence with the following proposition:

\begin{lemma}[Lemma 6, \cite{DBLP:conf/focs/Gkatzelis0020}]
    \label{lemma:veto-plu}
For any election $\mathcal{E} = (V, C, \sigma)$, a candidate $a \in C$ can be selected by $\pluralitymatching$ only if $\plu(a) \geq \veto(a)$.
\end{lemma}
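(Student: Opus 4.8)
The plan is to recall the relevant structure of the $\pluralitymatching$ mechanism from Gkatzelis et al.\ and then argue directly from the matching-based selection criterion. Recall that $\pluralitymatching$ selects a candidate $a$ only if its integral domination graph $G(a)$ admits a perfect matching $M : V \mapsto V$, i.e.\ a bijection on $V$ such that $a \succeq_i \topp(M(i))$ for every voter $i$. The key quantities to relate are $\plu(a) = \Pi_a$, the number of voters who rank $a$ first, and $\veto(a)$, the number of voters who rank $a$ last. First I would unpack what a perfect matching on $G(a)$ forces: since $M$ is a bijection, the number of voters $j$ whose top candidate $\topp(j)$ is some fixed candidate $c$ equals the number of voters $i$ with $M(i) = j$ for some such $j$; more usefully, for every voter $i$ we need $a \succeq_i \topp(M(i))$, which says that in voter $i$'s ranking, candidate $a$ is weakly above the top choice of voter $M(i)$.

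The crux is a counting argument on the voters who veto $a$. Let $U = \{ i \in V : a \text{ is last in } \sigma_i \}$, so $|U| = \veto(a)$. For any $i \in U$, the condition $a \succeq_i \topp(M(i))$ forces $\topp(M(i))$ to be weakly below $a$ in $\sigma_i$; since $a$ is last, this is only possible if $\topp(M(i)) = a$ (using the tie-breaking/transitivity conventions, or more carefully $a \succeq_i \topp(M(i))$ with $a$ last means $\topp(M(i))$ must equal $a$). Hence the map $i \mapsto M(i)$ sends $U$ injectively into the set $\{ j \in V : \topp(j) = a \}$, whose size is exactly $\plu(a)$. Injectivity is immediate because $M$ is a bijection. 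Therefore $\veto(a) = |U| \le \plu(a)$, which is the claim.

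The main obstacle I anticipate is handling the ties correctly: the domination graph uses $a \succeq_i \topp(j)$ with a weak inequality, and $\topp(j)$ is itself defined up to tie-breaking, so I need to be careful that ``$a$ is vetoed by $i$'' together with ``$a \succeq_i \topp(M(i))$'' genuinely pins down $\topp(M(i)) = a$ rather than merely some candidate tied with $a$ at the bottom of $\sigma_i$. Under the paper's convention that ties are broken arbitrarily but subject to transitivity, if $a$ occupies the last position of $\sigma_i$ then any candidate $c$ with $a \succeq_i c$ satisfies $c \succeq_i a$ as well, and in the resolved total order this means $c = a$; so the argument goes through. I would state this tie-resolution step explicitly as a one-line remark and then present the injection-plus-bijection counting as the body of the proof.
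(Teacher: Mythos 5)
Your argument is correct: a candidate is selected by $\pluralitymatching$ only if its integral domination graph admits a perfect matching, and your counting step---every voter who ranks $a$ last can only be matched to a voter whose top choice is $a$, so injectivity of the matching gives $\veto(a) \le \plu(a)$---is exactly the standard proof (it is the contrapositive of the Hall-condition argument in Gkatzelis et al.). Note that the paper itself states this lemma as an imported result and gives no proof, so there is nothing to compare against beyond confirming that your tie-handling remark is sound: since each $\sigma_i$ is a resolved total order, $a$ being last means $a \succeq_i c$ forces $c = a$, and the argument goes through.
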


With this lemma in mind, we consider an instance with a set of $m$ voters $V = \{1, 2, \dots, m\}$, and a set of $m$ candidates $C = \{a, \dots\}$. We assume that for every voter $i \in [n-1], \secc(i) = a$, while the (single) top-preferences of all the voters $i \in [n-1]$ are assumed to be pairwise-distinct. Finally, the last voter places candidate $a$ in her last place, while her preferences are otherwise arbitrary. An example with four candidates $\{a, b, e, f\}$ corresponds to the following input:

\begin{itemize}
    \item $b \succ_1 a \succ_1 e \succ_1 f$;
    \item $f \succ_2 a \succ_2 b \succ_2 e$;
    \item $e \succ_3 a \succ_3 f \succ_3 b$;
    \item $b \succ_4 f \succ_4 e \succ_4 a$.
\end{itemize}

In general, observe that for any candidate $b \in C \setminus \{ a \}$ it follows that $|ab| = (m-2)/m$. Moreover, we will use the following standard lemma:

\begin{lemma}
Consider two (distinct) candidates $a, b \in C$ such that $|ab| \geq \alpha n > 0$. Then, 

\begin{equation}
    \frac{\SC(a)}{\SC(b)} \leq \frac{2}{\alpha} - 1.
\end{equation}
\end{lemma}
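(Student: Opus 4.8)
The plan is to control the inter-candidate distance $d(c_a,c_b)$ using only the voters who prefer $a$ to $b$, and then to bound $\SC(a)$ by splitting the sum over these voters and the remaining ones; the point of the split is precisely that the $ab$-voters incur no ``detour'' term in $\SC(a)$, which is what upgrades the naive bound $2/\alpha+1$ to $2/\alpha-1$. Throughout I assume $\SC(b)>0$, the case $\SC(b)=0$ being degenerate (then every voter sits at $c_b$, the $ab$-voters force $c_a=c_b$, and both social costs vanish).

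First I would record that for every voter $i$ with $a\succ_i b$ we have $d(v_i,c_a)\le d(v_i,c_b)$, directly from the fact that $\succ_i$ orders the candidates by distance from $v_i$ (this remains valid under the tie-breaking convention). Combining with the triangle inequality, for each such $i$,
\[
 d(c_a,c_b)\ \le\ d(v_i,c_a)+d(v_i,c_b)\ \le\ 2\,d(v_i,c_b).
\]
Summing this over all $i\in ab$, and using $|ab|\ge \alpha n$ together with $\sum_{i\in ab} d(v_i,c_b)\le \SC(b)$, I obtain
\[
 d(c_a,c_b)\ \le\ \frac{2}{|ab|}\sum_{i\in ab} d(v_i,c_b)\ \le\ \frac{2\,\SC(b)}{\alpha n}.
\]

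Next I would decompose $\SC(a)=\sum_{i\in ab} d(v_i,c_a)+\sum_{i\notin ab} d(v_i,c_a)$. For the first sum I reuse $d(v_i,c_a)\le d(v_i,c_b)$; for the second sum I apply the triangle inequality $d(v_i,c_a)\le d(v_i,c_b)+d(c_a,c_b)$. Adding the two bounds gives $\SC(a)\le \SC(b)+(n-|ab|)\,d(c_a,c_b)$. Plugging in the estimate on $d(c_a,c_b)$ from the previous step and using $n-|ab|\le(1-\alpha)n$ yields
\[
 \SC(a)\ \le\ \SC(b)+(1-\alpha)n\cdot\frac{2\,\SC(b)}{\alpha n}\ =\ \SC(b)\left(1+\frac{2(1-\alpha)}{\alpha}\right)\ =\ \left(\frac{2}{\alpha}-1\right)\SC(b),
\]
which is the claimed inequality. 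There is no genuine technical obstacle here; the only thing one must be careful about is the bookkeeping in the split, since bounding $d(v_i,c_a)$ by $d(v_i,c_b)+d(c_a,c_b)$ for all $n$ voters (rather than only for $i\notin ab$) would weaken the constant to $2/\alpha+1$ and make the lemma useless for the intended application. This is exactly the strengthening over the bound used in \Cref{lemma:approx-transitive}, which is why it is isolated as its own lemma.
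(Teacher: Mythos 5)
Your proof is correct, and it is exactly the standard argument (bound $d(c_a,c_b)$ via the voters in $ab$, then split $\SC(a)$ over $ab$ and its complement so that only the $n-|ab|$ voters outside $ab$ pick up the $d(c_a,c_b)$ detour term); the paper itself states this lemma without proof, citing it as standard, and your derivation matches the argument it implicitly relies on. The degenerate case $\SC(b)=0$ and the tie-breaking remark are handled appropriately, so there is nothing to add.
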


This implies that the distortion of candidate $a$ is $1 + \mathcal{O}(1/m)$. However, given that $\plu(a) = 0 < 1 = \veto(a)$, we know from \Cref{lemma:veto-plu} that $a$ cannot be selected by $\pluralitymatching$. We will show that every other candidate yields distortion close to $3$. In particular, consider the metric space illustrated in \Cref{fig:lower_bound-instance_opt}. It is easy to verify that the induced metric space is consistent with the given preferences. But, it follows that $\SC(a) = m$, while $\SC(b) = 2 + 3(m-2) = 3m - 4$ for any $b \neq a$, implying that $\SC(b)/\SC(a) = 3 - \mathcal{O}(1/m)$. As a result, we have arrived at the following conclusion:

\begin{proposition}
    \label{proposition:PL-vs-minimax}
For any sufficiently small $\epsilon > 0$ and $m = \mathcal{O}(1/\epsilon)$ there exists a preferences profile $\sigma$ such that $\minimax$ yields distortion $1 + \epsilon$, while $\pluralitymatching$ yields distortion at least $3 - \epsilon$.
\end{proposition}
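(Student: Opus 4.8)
The plan is to assemble the pieces established in the discussion preceding the statement and to check that a single preference profile simultaneously forces $\minimax$ to be near-perfect and $\pluralitymatching$ to be near-$3$. I would fix the profile $\sigma$ on $m$ voters and $m$ candidates described above: voters $1,\dots,m-1$ have pairwise-distinct top choices and all rank $a$ second, while voter $m$ ranks $a$ last. Note that every candidate $b \ne a$ is the top choice of exactly one voter $j \in [m-1]$, and all of the other $m-2$ voters among $1,\dots,m-1$ rank $b$ strictly below $a$; hence $|ab| \ge m-2$ for every $b \ne a$.

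For the $\minimax$ bound I would invoke the cited lemma with $\alpha = (m-2)/m$: since $|ab| \ge \alpha n$ for every $b \ne a$, we get $\SC(a)/\SC(b) \le 2/\alpha - 1 = 1 + 4/(m-2) = 1 + \mathcal{O}(1/m)$ under any metric consistent with $\sigma$. Thus candidate $a$ has distortion $1 + \mathcal{O}(1/m)$, and since the $\minimax$ rule is instance-optimal (\Cref{theorem:instance_opt}) it attains distortion at most $1 + \mathcal{O}(1/m)$ on this profile.

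For the $\pluralitymatching$ bound, the first observation is that $\pluralitymatching$ cannot output $a$: by \Cref{lemma:veto-plu} this would require $\plu(a) \ge \veto(a)$, whereas $\plu(a) = 0$ (no voter tops $a$) and $\veto(a) = 1$ (voter $m$ bottoms $a$, and $m \ge 3$ so this position is not second for anyone). It therefore suffices to show that every $b \ne a$ has distortion $\ge 3 - \mathcal{O}(1/m)$. I would exhibit, for a fixed $b$ (the general case following from the symmetry of $\sigma$), a metric consistent with $\sigma$ in which $c_a$ is a hub, each voter lies at unit distance from $c_a$ on her own spoke together with her top candidate, and $c_b$ lies at distance $2$ from $c_a$; then the $m-2$ voters ranking $b$ below $a$ are at distance $3$ from $c_b$ by the triangle inequality, giving $\SC(a) = m$ and $\SC(b) \ge 3(m-2) + \mathcal{O}(1) = 3m - \mathcal{O}(1)$, i.e.\ $\SC(b)/\SC(a) \ge 3 - \mathcal{O}(1/m)$.

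The step I expect to be most delicate is verifying that this hub/spoke metric (together with the placement of voter $m$ and of the remaining candidates) is genuinely consistent with the \emph{entire} orders $\sigma_i$ — not merely with the relative position of $a$ — and that the argument goes through uniformly for every possible output $b \ne a$ of $\pluralitymatching$; once this is in hand, everything else is bookkeeping, and choosing $m = \Theta(1/\epsilon)$ converts the $1 + \mathcal{O}(1/m)$ and $3 - \mathcal{O}(1/m)$ bounds into $1 + \epsilon$ and $3 - \epsilon$ respectively, which gives the claim.
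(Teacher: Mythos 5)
Your proposal is correct and follows essentially the same route as the paper: the same preference profile, the same use of \Cref{lemma:veto-plu} to rule out $a$, the same $2/\alpha-1$ bound with $\alpha=(m-2)/m$ for the $\minimax$ side, and the same star-shaped unweighted-graph metric (the paper's \Cref{fig:lower_bound-instance_opt}) giving $\SC(a)=m$ and $\SC(b)=3m-\mathcal{O}(1)$ for every $b\neq a$. The one step you flag as delicate---the placement of voter $m$---is handled in the paper by putting $v_m$ at distance $1$ from \emph{every} candidate, so that her entire ranking (including $a$ last) is realized purely through the arbitrary tie-breaking convention and consistency with the full orders is immediate.
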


\begin{figure}[!ht]
    \centering
    \includegraphics[scale=0.35]{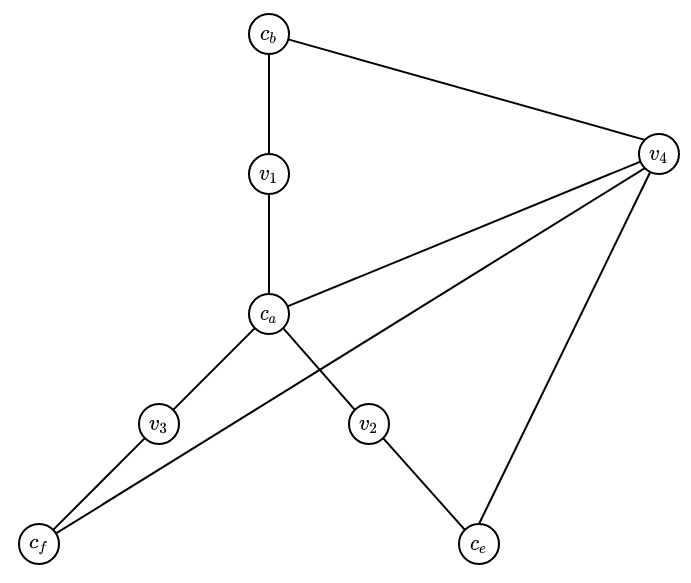}
    \caption{A metric space embedded on an \emph{unweighted} and \emph{undirected} graph; this example corresponds to $m=n=4$, but the pattern should be already clear.}
    \label{fig:lower_bound-instance_opt}
\end{figure}

\subsection{Decisive Metrics}

Moreover, it is natural to compare these mechanisms in more refined metrics. Specifically, we espouse the $\alpha$-decisiveness assumption of Anshelevich and Postl \cite{DBLP:conf/ijcai/AnshelevichP16}, according to which $d(v_i, c_p) \leq \alpha \cdot d(v_i, c_q)$, where $p = \topp(i)$ and $q = \secc(i)$, and $\alpha \in [0,1]$ some parameter; notice that the general case corresponds to $\alpha = 1$, while for $\alpha = 0$ every voter also serves as a candidate. The first observation is that this particular refinement can be addresses by simply incorporating some additional constraints in the linear program. More precisely, for a pair of distinct candidates $a, b$ this leads to the following linear program $\metricalphaLP(a, b)$:

\begin{equation}
    \label{eq:linear_program-alpha}
\begin{array}{ll@{}ll}
\text{maximize}  & \sum_{i=1}^n x_{i, a} &\\
\text{subject to} & \sum_{i=1}^n x_{i, b} = 1; \\ 
                & x_{i, \topp(i)} \leq \alpha \cdot x_{i, \secc(i)}, & \forall i \in V; \\
                & x_{i, p} \leq x_{i, q}, & \forall (p, q) \in \mathcal{P}_i, \forall i \in V; \\
                 &x_{i, i} = 0, & \forall i \in V \cup C; \\
                 & x_{i, j} = x_{j, i}, & \forall i, j \in V \cup C; \\
                 & x_{i, j} \leq x_{i, k} + x_{k, j}, & \forall i, j, k \in V \cup C.
\end{array}
\end{equation}

Here we have assumed that every agent $i$ provides her most preferred candidate $\topp(i)$, as well as her second most preferred candidate $\secc(i)$. Having solved the $\metricalphaLP(a,b)$ for every distinct pair of candidates $a,b$, we simply select the candidate who minimizes the maximum cost obtained over all other candidates; this mechanism shall be referred to as the $\minimaxalpha$. Similarly to \Cref{theorem:instance_opt} we can establish the following:

\begin{proposition}
    \label{proposition:instance_opt-decisive}
For any given preferences $\mathcal{P}$ and any $\alpha \in [0,1]$ the $\minimaxalpha$ rule is instance-optimal in terms of distortion under $\alpha$-decisive metrics.
\end{proposition}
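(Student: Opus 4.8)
The plan is to mirror the argument establishing \Cref{theorem:instance_opt}, adapting it to the presence of the extra $\alpha$-decisiveness constraints. First I would argue that the $\minimaxalpha$ rule is well-defined: by \Cref{proposition:feasibility} there is a metric consistent with $\mathcal{P}$, and since we may further assume (by placing each voter at her top candidate, or via a limiting argument) that an $\alpha$-decisive metric consistent with $\mathcal{P}$ exists, each $\metricalphaLP(a,b)$ is feasible; its optimum, call it $\mathfrak{D}^{\alpha}(a|b)$, is therefore well-defined (possibly $+\infty$). As before, note the value of the LP is a strongly polynomial-time computable quantity since its bit complexity is $\mathcal{O}(\log(n+m))$, so $\minimaxalpha$ runs in $\poly(n,m)$ time; this is a side remark, not needed for instance-optimality.

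The core of the proof is the identity $\mathfrak{D}^{\alpha}(a|b) = \sup_{d} \SC_d(a)/\SC_d(b)$, where the supremum ranges over $\alpha$-decisive metrics $d$ with $d \triangleright \mathcal{P}$. The key steps: (i) \emph{Soundness.} Given any such metric $d$, after normalizing so that $\SC_d(b) = 1$, the point-distances $x_{i,j} := d(\cdot,\cdot)$ form a feasible solution to $\metricalphaLP(a,b)$ — the triangle, symmetry, and identity constraints hold because $d$ is a metric, the ordinal constraints $x_{i,p}\le x_{i,q}$ hold because $d \triangleright \mathcal{P}$, and the new constraint $x_{i,\topp(i)}\le \alpha\, x_{i,\secc(i)}$ is exactly $\alpha$-decisiveness. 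Hence $\SC_d(a)/\SC_d(b) = \sum_i x_{i,a} \le \mathfrak{D}^{\alpha}(a|b)$. (ii) \emph{Completeness.} Conversely, any feasible LP solution $x$ yields a pseudo-metric on $V\cup C$ (the triangle inequalities are imposed), which becomes a genuine metric after merging all pairs at distance $0$; this merged metric is $\alpha$-decisive (the decisiveness constraint survives merging, as argued for the ordinary triangle inequalities in the $\minimax$ discussion) and consistent with $\mathcal{P}$, and realizes ratio $\sum_i x_{i,a}$. Taking the sup over feasible $x$ gives the reverse inequality, so the LP value equals the worst-case ratio.

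Given the identity, instance-optimality is immediate: for any candidate $a$, $\mathfrak{D}^{\alpha}(a) := \max_{b} \mathfrak{D}^{\alpha}(a|b)$ equals $\sup_d \max_b \SC_d(a)/\SC_d(b) = \sup_d \SC_d(a)/\min_{b}\SC_d(b)$, which is precisely the distortion of $a$ over $\alpha$-decisive metrics consistent with $\mathcal{P}$ (here one uses $\mathfrak{D}^{\alpha}(a|a)=1$ to handle the case where $a$ itself is optimal). Since $\minimaxalpha$ returns $\argmin_a \mathfrak{D}^{\alpha}(a)$, it returns the candidate of minimum distortion, i.e.\ it is instance-optimal.

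The main obstacle is the subtle issue in the completeness direction concerning whether the merged metric remains $\alpha$-\emph{decisive}: merging zero-distance points could in principle identify $\topp(i)$ with a voter or with $\secc(i)$, and one must check the decisiveness inequality is preserved (if both sides become $0$ it holds trivially; if only $x_{i,\secc(i)}=0$ then the constraint forces $x_{i,\topp(i)}=0$ too when $\alpha>0$, and the $\alpha=0$ case must be handled by noting the constraint reads $x_{i,\topp(i)}\le 0$, so $\topp(i)$ is merged with $i$, consistent with the ``every voter is a candidate'' picture). A secondary technical point is the feasibility of the LP when $\alpha = 0$ or when $\SC_d(b)=0$ for the normalizing candidate; the latter is handled exactly as in the $\minimax$ case (if $\SC_d(b)=0$ then $c_b$ coincides with all voters and $\SC_d(a)=\SC_d(b)$ unless $d(c_a,c_b)>0$, in which case the ratio is $+\infty$, matching an unbounded LP). These are routine once spelled out, so the proof is ultimately a direct transcription of the $\minimax$ argument with one extra constraint carried through.
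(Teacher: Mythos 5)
Your argument is correct and is precisely the one the paper intends: the paper gives no explicit proof, stating only that the claim follows ``similarly to \Cref{theorem:instance_opt}'', and your two-directional LP-characterization (a consistent $\alpha$-decisive metric yields a feasible point, and a feasible point yields a consistent $\alpha$-decisive pseudo-metric after merging zero-distance pairs) is exactly that argument with the extra decisiveness constraint carried through. The subtleties you flag (preservation of decisiveness under merging, feasibility, the $\mathfrak{D}^{\alpha}(a|a)=1$ convention) are handled correctly, so nothing is missing.
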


We should point out that for $\alpha$-decisive metrics $\pluralitymatching$ always yields a candidate with distortion $2 + \alpha$. Moreover, Gkatzelis et al. \cite{DBLP:conf/focs/Gkatzelis0020} showed a lower bound of $2 + \alpha - 2(1- \alpha)/m'$ for deterministic mechanisms, where $m' = 2 \lfloor m/2 \rfloor$; thus, they showed that their mechanism obtains the optimal distortion only when $m \to \infty$ or when $\alpha = 1$, leaving a substantial gap.

We will show that $\minimaxalpha$ can substantially outperform $\pluralitymatching$ even for $\alpha$-decisive metrics with $\alpha$ close to $0$. Specifically, consider an election with $3$ candidates and $2$ voters\footnote{This example is taken from \cite{DBLP:conf/focs/Gkatzelis0020}.} with the following preferences: $\sigma_1 = a \succ b \succ e$, and $\sigma_2 = e \succ b \succ a$. For this election, $b$ could be returned by $\pluralitymatching$ (see \cite{DBLP:conf/focs/Gkatzelis0020}). However, we claim that $b$ yields distortion $2 + \alpha$, while $a$ and $e$ have distortion $1 + 2\alpha$. Indeed, we will show that candidate $a$ has always distortion upper-bounded by $1 + 2\alpha$ (by symmetry, the same holds for $e$), while for candidate $b$ there exists a metric space for which $b$ yields distortion $2 + \alpha$. Specifically, we have that $d(c_a, c_b) \leq d(v_1, c_a) + d(v_1, c_b) \leq (1 + \alpha) d(v_1, c_b)$; thus we obtain that

\begin{equation}
d(v_1,c_a) \leq \alpha d(v_1, c_b),    
\end{equation}

\begin{equation}
    d(v_2, c_a) \leq d(v_2, c_b) + d(c_a, c_b) \leq (1 + \alpha) d(v_1, c_b) + d(v_2, c_b).
\end{equation}

Summing these inequalities yields that $\SC(a) \leq (1+2\alpha) d(v_1, c_b) + d(v_2, c_b) \leq (1+2\alpha) \SC(b)$. Similarly, we can prove that $\SC(a) \leq (1+2\alpha) \SC(e)$. On the other hand, for candidate $b$ Gkatzelis et al. \cite{DBLP:conf/focs/Gkatzelis0020} considered the following metric space:

\begin{figure}[!ht]
    \centering
    \includegraphics[scale=0.5]{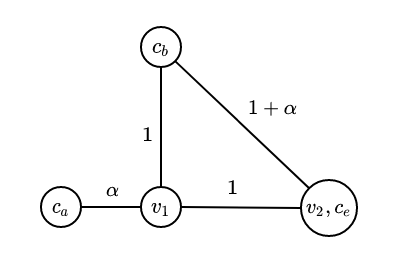}
    \caption{A metric space embedded on a graph.}
    \label{fig:decisive}
\end{figure}

Naturally, the distance between a pair of nodes is the corresponding shortest path in the graph. Thus, for this instance it follows that $\SC(e) = 1$, while $\SC(b) = 2 + \alpha$, implying that the distortion of $b$ is $2 + \alpha$. Thus, for $\alpha \to 0$ $\pluralitymatching$ loses a factor of $2$ with respect to the optimal candidate, which would be identified by the $\minimaxalpha$ rule by virtue of \Cref{proposition:instance_opt-decisive}.

\begin{proposition}
There exists a preference profile $\sigma$ such that $\minimaxalpha$ yields distortion $1 + 2\alpha$, while $\pluralitymatching$ yields distortion at least $2 + \alpha$ under $\alpha$-decisive metrics.
\end{proposition}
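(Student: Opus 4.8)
The plan is to reuse the explicit instance already introduced above: three candidates $C = \{a, b, e\}$, two voters with $\sigma_1 = a \succ b \succ e$ and $\sigma_2 = e \succ b \succ a$, and to argue separately about the two mechanisms on this profile.

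First I would pin down the instance-optimal distortion. Fix any $\alpha$-decisive metric consistent with $\sigma$. Since $a = \topp(1)$ and $b = \secc(1)$, $\alpha$-decisiveness gives $d(v_1,c_a) \leq \alpha\, d(v_1,c_b)$; combining with the triangle inequality at voter $2$, $d(v_2,c_a) \leq d(v_2,c_b) + d(c_a,c_b) \leq d(v_2,c_b) + d(v_1,c_a) + d(v_1,c_b) \leq (1+\alpha)\,d(v_1,c_b) + d(v_2,c_b)$. Summing yields $\SC(a) \leq (1+2\alpha)\,d(v_1,c_b) + d(v_2,c_b) \leq (1+2\alpha)\,\SC(b)$. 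The symmetric computation, using $d(v_1,c_a) \leq \alpha\,d(v_1,c_b) \leq \alpha\,d(v_1,c_e)$ (the second inequality because $b \succ_1 e$) together with $d(c_a,c_e) \leq d(v_1,c_a) + d(v_1,c_e)$, gives $\SC(a) \leq (1+2\alpha)\,\SC(e)$. Hence the distortion of $a$ is at most $1+2\alpha$; a matching embedding (the graph of \Cref{fig:decisive} with the roles of $a$ and $e$ interchanged) shows it is exactly $1+2\alpha$. Since $\SC(b)$ can be as large as $(2+\alpha)\SC(e)$ on this profile (see below) and $\alpha < 1$, candidate $a$ is optimal, so the instance-optimal value is exactly $1+2\alpha$, and \Cref{proposition:instance_opt-decisive} says $\minimaxalpha$ attains it.

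Next I would handle $\pluralitymatching$. On this profile $\plu(b) = \veto(b) = 0$, so \Cref{lemma:veto-plu} does not exclude $b$, and indeed it was shown in \cite{DBLP:conf/focs/Gkatzelis0020} that $b$ is a permissible output of $\pluralitymatching$ here. It then suffices to exhibit one $\alpha$-decisive metric consistent with $\sigma$ on which $b$ is bad: taking the embedding of \Cref{fig:decisive}, one checks it is $\alpha$-decisive and consistent with $\sigma$, and that $\SC(e) = 1$ while $\SC(b) = 2 + \alpha$. Therefore the distortion of $b$, and hence of $\pluralitymatching$ on this instance, is at least $2 + \alpha$, which together with the previous paragraph completes the proof.

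The main obstacle is the first paragraph: one must verify that the bound $\SC(a) \leq (1+2\alpha)\SC(b)$ and its analogue for $e$ hold for \emph{every} $\alpha$-decisive metric consistent with $\sigma$, not merely the figure, which is exactly where the $\alpha$-decisiveness constraint at voter $1$ has to be threaded through the plain triangle inequality at voter $2$. The $\pluralitymatching$ side is then a routine check of a single fixed embedding, once we quote from \cite{DBLP:conf/focs/Gkatzelis0020} that $b$ is a legitimate output.
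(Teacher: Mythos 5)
Your proposal is correct and follows essentially the same route as the paper: the same two-voter, three-candidate profile, the same use of $\alpha$-decisiveness at voter $1$ threaded through the triangle inequality at voter $2$ to get $\SC(a)\leq(1+2\alpha)\SC(b)$ and $\SC(a)\leq(1+2\alpha)\SC(e)$, and the same fixed embedding of \Cref{fig:decisive} to witness $\SC(b)=(2+\alpha)\SC(e)$. The only additions are minor clarifications (the explicit $\SC(e)$ bound and the $\plu(b)=\veto(b)=0$ check) that the paper leaves implicit.
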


Nonetheless we should point out that $\pluralitymatching$ does \emph{not} require knowing the value of parameter $\alpha$, unlike the instance-optimal mechanism.

\section{Additional Experiments}
\label{appendix:additional_experiments}

In this section we provide additional details about our empirical findings. Specifically, in \Cref{fig:eurovision} we juxtapose the scores of the finalists in the Eurovision song contest (based on the $\textsc{Scoring}$ rule) with their distortion as determined by $\minimax$. We note that we have removed the following outliers:

\begin{itemize}
    \item For the year $2006$ the countries Malta and Spain which incurred a distortion of $24.3333$ and $18.0000$ respectively.
    \item For the year $2007$ the countries Ireland and United Kingdom which incurred a distortion of $27.0000$ and $13.6667$ respectively.
    \item For the year $2008$ the countries United Kingdom, Germany and Poland which incurred a distortion of $20.5000$, $20.5000$ and $14$ respectively.
\end{itemize}

Similarly, in \Cref{fig:f1} we juxtapose the scores accumulated by the drives based on the $\textsc{Scoring}$ rule with their distortion as determined by $\minimax$. These results follow after removing the following outliers:

\begin{itemize}
    \item For the year $2016$ the driver Stoffel Vandoorne who incurred a distortion of $41$. 
    \item For the year $2017$ the drivers Paul di Resta, Jenson Button, Brendon Hartley, and Antonio Giovinazzi who incurred a distortion of $\infty$, $\infty$, $19$, and $39$ respectively.
    \item For the year $2020$ the drivers Pietro Fittipaldi, Jack Aitken, and Nico Hulkenberg who incurred a distortion of $16$, $33$, and $16$ respectively.
\end{itemize}

\begin{figure}
    \centering
    \includegraphics[scale=0.45]{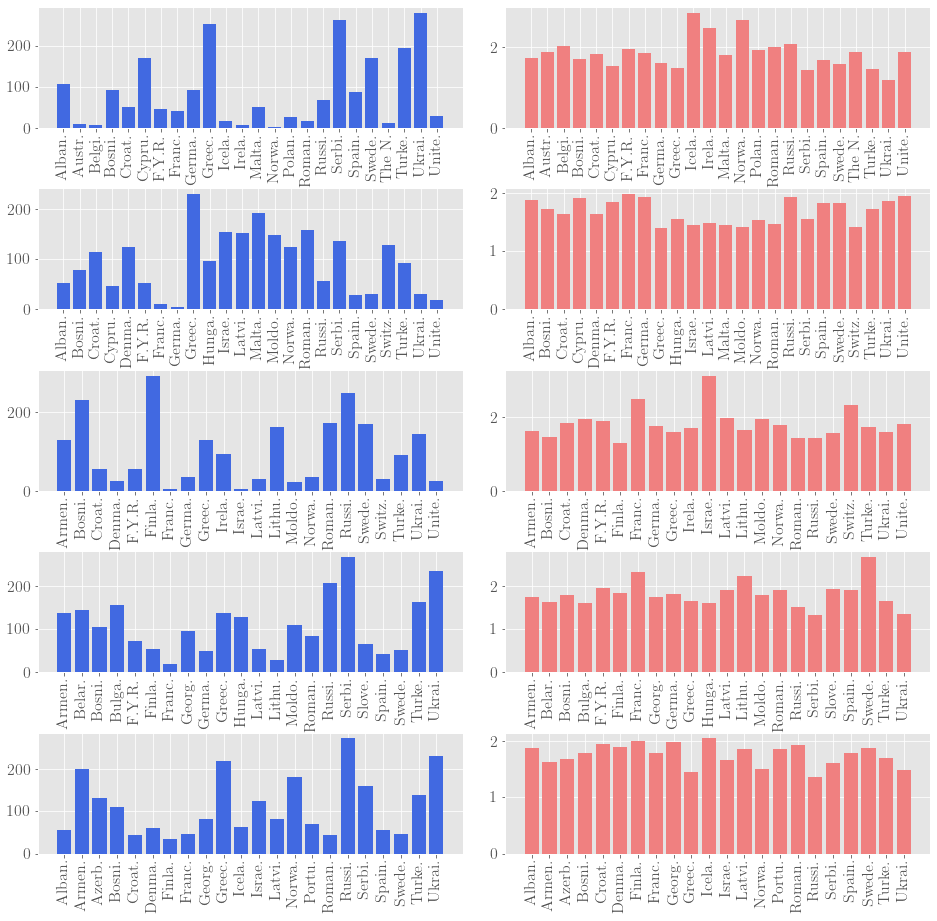}
    \caption{$\textsc{Scoring}$ rule vs $\minimax$ for the Eurovision song contests during the years between $2004$ and $2008$. For every country we have indicated only the first $5$ letters according to the entry in the dataset.}
    \label{fig:eurovision}
\end{figure}

\begin{figure}
    \centering
    \includegraphics[scale=0.45]{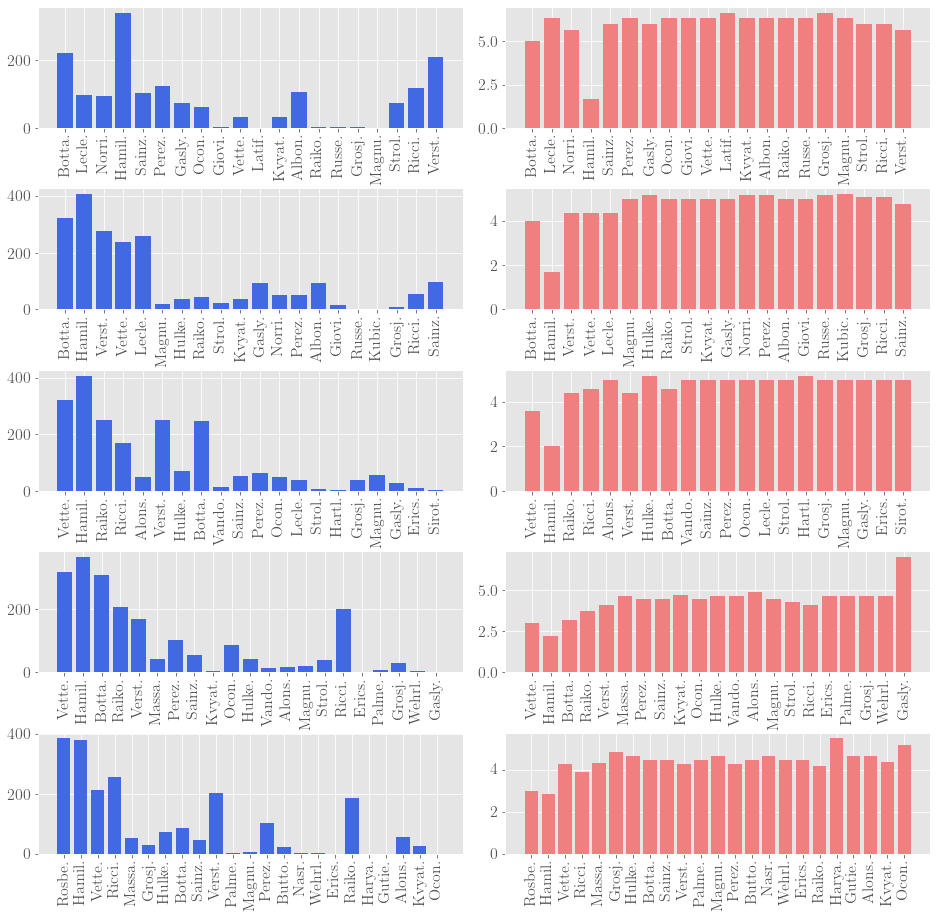}
    \caption{$\textsc{Scoring}$ rule vs $\minimax$ for the F1 world championships during the years between $2020$ and $2016$. For every driver we have indicated only the first $5$ letters of his/her (last) name according to the entry in the dataset.}
    \label{fig:f1}
\end{figure}

\end{document}